\definecolor{myurlcolor}{rgb}{0,0,0.4}
\definecolor{mycitecolor}{rgb}{0,0.5,0}
\definecolor{myrefcolor}{rgb}{0.5,0,0}
\newcommand{\N}{\mathbb{N}}
\newcommand{\R}{\mathbb{R}}
\renewcommand{\H}{\mathcal{H}}
\newcommand{\B}{\mathcal{B}}
\newcommand{\eq}[1]{~(\ref{#1})}
\renewcommand{\equiv}{\stackrel{\mathrm{def}}{=}}
\newcommand{\eps}{\varepsilon}
\newcommand{\T}[1]{\texttt{#1}}
\newcommand{\eqdef}{\stackrel{\textrm{def}}{=}}
\theoremstyle{plain}
\newtheorem{thm}{Theorem}[section]
\newtheorem{lem}[thm]{Lemma}
\newtheorem{prop}[thm]{Proposition}
\newtheorem{cor}[thm]{Corollary}
\newtheorem*{ulem}{Lemma}
\newtheorem*{uprop}{Proposition}
\newtheorem{defn}[thm]{Definition}
\newtheorem{prob}[thm]{Problem}
\newtheorem*{udefn}{Definition}
\theoremstyle{definition}
\newtheorem{expl}[thm]{Example}
\theoremstyle{remark}
\newtheorem{rem}[thm]{Remark}
\numberwithin{equation}{section}
\renewcommand{\labelenumi}{(\alph{enumi})}
\renewcommand{\theenumi}{(\alph{enumi})}
\newcommand{\beq}{\begin{equation}}
\newcommand{\eeq}{\end{equation}}
\begin{document}



\title{Beyond Bell's Theorem: Correlation Scenarios}

\author{Tobias Fritz}
\address{ICFO--Institut de Ciencies Fotoniques\\ 
Mediterranean Technology Park\\ 
08860 Castelldefels (Barcelona)\\ 
Spain}
\email{tobias.fritz@icfo.es}

\keywords{Bell's Theorem; quantum nonlocality; inference of common ancestors}

\subjclass[2010]{}

\thanks{\textit{Acknowledgements.} The author would like to thank many independent sources for fruitful discussions, most of which were local and realistic, and the EU STREP QCS for financial support. Furthermore, Cyril Branciard, Llu{\'i}s Masanes, Markus M{\"u}ller, Nicolas Gisin, Nihat Ay and Rodrigo Gallego have provided crucial feedback on a draft version of this paper which greatly helped in improving accuracy and presentation.}

\begin{abstract}
Bell's Theorem witnesses that the predictions of quantum theory cannot be reproduced by theories of local hidden variables in which observers can choose their measurements independently of the source. Working out an idea of Branciard, Rosset, Gisin and Pironio, we consider scenarios which feature several sources, but no choice of measurement for the observers. Every Bell scenario can be mapped into such a \emph{correlation scenario}, and Bell's Theorem then discards those local hidden variable theories in which the sources are independent. However, most correlation scenarios do not arise from Bell scenarios, and we describe examples of (quantum) nonlocality in some of these scenarios, while posing many open problems along the way. Some of our scenarios have been considered before by mathematicians in the context of causal inference.
\end{abstract}

\maketitle

%
%

\section{Introduction}
\label{introduction}

\subsection*{Main ideas}

Bell's Theorem~\cites{Bell,Shimony} shows that quantum phenomena cannot be modelled correctly by a theory satisfying the following natural assumptions:

\renewcommand{\labelenumi}{(\Roman{enumi})}
\renewcommand{\theenumi}{(\Roman{enumi})}
\begin{enumerate}
\item\label{R} Realism: Any physical system can be described in terms of a probabilistic mixture of states (=hidden variable values). Composite systems are described by a joint probability distribution over the state spaces of its component systems.
\item\label{L} Locality: Physical systems have spatial components which can be described independently. They do not interact across spacelike separated events.
\item\label{FW} Free will: The parties in a Bell scenario have genuine randomness available which is independent of their environment. 
This is also known as \emph{$\lambda$-independence}~\cite{BY} and as \emph{measurement independence}~\cite{Hall}.
\end{enumerate}

Standard quantum theory fails~\ref{R} due to the way that joint systems are described. It is irrelevant whether~\ref{FW} holds in quantum theory, since~\ref{FW} is only used in combination with~\ref{R} and~\ref{L} in the derivation of the Bell inequalities, which are found to have quantum violations.

In this paper, we are concerned with assumption~\ref{FW}. More precisely, we are actually \emph{not} concerned with~\ref{FW}, since we aim to replace it with a different property:

\renewcommand{\labelenumi}{(\Roman{enumi}')}
\renewcommand{\theenumi}{(\Roman{enumi}')}
\begin{enumerate}
\setcounter{enumi}{2}
\item\label{IS} Independence of sources~\cite{BGP}: if an experiment contains several sources\footnote{It is not perfectly clear to us what ``source'' actually means. One possible definition of source might be that it is a physical system which is, in the \emph{quantum-theoretical} description, independent of its environment: the total initial state should be the tensor product of the system state and an environment state.}, then the theory describes these sources as independent. This means that the joint distribution of hidden variables is a product distribution.
\end{enumerate}

Our observation is that~\ref{FW} becomes obselete when assuming~\ref{IS}, so that one obtains:\\

\noindent\textbf{Bell's Theorem, new version.} \textit{Quantum phenomena cannot be modelled correctly by a theory satisfying~\ref{R},~\ref{L},~\ref{IS}.}\\

Branciard, Rosset, Gisin and Pironio already briefly considered scenarios in which each party has only one measurement setting~\cite{BRGP}*{Sec.~V/VI}. These are a natural continuation of their earlier work~\cite{BGP} which combined~\ref{IS} with~\ref{FW}. Here, we build on their idea and and set up a formal framework for multi-source ``correlation scenarios'' in which each party has only one measurement setting available and derive more results within that framework. There are several advantages to this over the standard approach based on~\ref{FW}:

\begin{itemize}
\item One of the main goals of the hidden variable program was to resurrect a deterministic worldview~\cite{EPR}. However, as has also been observed by 't Hooft~\cite{tHooft} and probably others, determinism is at variance with~\ref{FW} \emph{even without Bell's Theorem} since genuine randomness cannot be created in a deterministic world. This tension between determinism and free will has been known to philosophers long before and led them to seek definitions of human free will compatible with determinism~\cite{McKenna}.
\item Free will is an observer-centric notion which, depending on the theory, may require the observer to live outside that part of the universe described by the theory. In contrast, the property~\ref{IS} concerns only observer-independent physical systems and has clear physical meaning. our formalism is best viewed as devoid of any concious agents.
\item Bell's Theorem is often presented as a statement about theories satisfying realism~\ref{R} and locality~\ref{L} only. \ref{FW} is then tacitly assumed without explicit mention, either because one has failed to notice it as an additional and crucial assumption, or because it may be incorrectly regarded as self-evident. In contrast,~\ref{IS} is more easily understood to be a non-trivial assumption.
\item There has been speculation on the relation between quantum mechanics and free will. Our approach elucidates that this discussion is irrelevant to Bell's Theorem (as is well-known to experts, but possibly not to those just learning about Bell's Theorem and assumption~\ref{FW}).
\end{itemize}

Moreover, our formalism allows the consideration of (quantum) correlations which have no analog in standard Bell scenarios and are genuinely new; see Theorems~\ref{C3thm} and~\ref{C4thm}. Our current results are not sufficient to tell what the meaning or relevance of such new kinds of correlations might be; ultimately, we hope for the development of quantum information protocols utilizing them in ways similar to those taking advantage of quantum correlations in standard Bell scenarios, e.g.~quantum key distribution~\cite{Ekert} or certified randomness generation~\cite{PAal}. Another interesting direction might be to consider analogs of the amplification of free will~\cite{CR} for the amplification of independence of sources.

\subsection*{Inference of common ancestors} Some of the mathematical problems we are going to discuss in this paper have been considered before in a totally different context. There is work by Steudel and Ay~\cite{SA} on the \emph{inference of common ancestors}, which concerns question such as this: given three different languages, under which conditions can one derive the existence of a common antecedent language which influenced all three? Or, given the joint distribution of the prevalence of some diseases in a population, under which conditions can one conclude the existence of a certain preexisting quantity or property (like a genetic defect or a specific diet) having some influence on the occurence of all the diseases considered? This is the question of existence of a \emph{common ancestor} in a Bayesian network model~\cite{Pearl}. A variable in a Bayesian network typically has many ancestors, including itself. One then considers models of the given joint distribution of the observed variables in terms of Bayesian networks, in which each observed variable corresponds to a node, the other nodes represent unobserved variables, and each edge represents a causal link. Then the question is whether one can find such a model without a node which is an ancestor of all the observed variables, or whether such a Bayesian network model necessarily requires such a common ancestor.

For the special case of three observed variables $a$, $b$, $c$, the very general results of~\cite{SA} show that when the single-variable Shannon entropies $H(a)$, $H(b)$, $H(c)$ and the joint entropy $H(abc)$ satisfy the inequality
\beq
\label{SAeiI}
H(a) + H(b) + H(c) > 2 H(abc),
\eeq
then the existence of a common ancestor is necessary. In our example: if the vocabulary of three languages is correlated in such a way that the entropy of the joint distribution is so low that the inequality holds, then there needs to be a common precursor having influenced all three.

We will see that the inference of common ancestors is a special case of our formalism. A byproduct of our results will be an inequality similar to but strictly better than\eq{SAeiI}, for the very particular case of three variables; see\eq{newei}.

\subsection*{Directions of future research} We hope that our ideas will spur new developments in several directions:
\begin{itemize}
\item Further study of classical, quantum and generalized correlations in correlation scenarios. The wealth of open problems we present shows that our results are nothing but a first step towards an understanding of correlation scenarios.
\item What are the philosophical implications of our results? How do~\ref{FW} and~\ref{IS} compare from a philosophy of science perspective?
\item Could our correlation scenarios have any relevance for applications like quantum key distribution?
\end{itemize}
A further generalization of correlation scenarios to scenarios with arbitrary causal structure will be considered in~\cite{FS}. Correlation scenarios are a natural intermediate step between Bell scenearios and the arbitrary causal structure of~\cite{FS}.

\subsection*{Organization of this paper} The interested reader should start with the next subsection on terminology and notation, for otherwise the main text will not be comprehensible. The subsequent main part of the paper in Sections~\ref{tlcs} and~\ref{gentlcs} can be read in a linear way. Section~\ref{tlcs} contains the most important material, namely the conceptual discussion and the examples we have considered so far. Those who do not care too much about abstract generalities may stop reading at any point at which they start losing interest. In particular, reading Section~\ref{gentlcs}, which contains an initial sketch of how an abstract approach to our formalism could look like, is not required for understanding the main ideas. It is supposed to be an attempt at laying the formal basis for future work on the subject.

Due to the high amount of technical detail required for completely rigorous proofs, we restrict ourselves in several cases to the presentation of proof sketches. We hope that these make it clear how completely rigorous proofs can be constructed. In cases where a general rigorous proof or definition involves measure theory, the main text provides the proof or definition for the case of discrete hidden variables; Appendix~\ref{app} then treats the general case of hidden variables defined on arbitrary probability spaces.

Since the subject of this paper is relatively new, many questions remain open. In the main text, we mention a wealth of open problems of various difficulties. We warn the reader that trying to solve them can be quite frustrating; our own experience has been that the intuition we have developed for standard Bell scenarios is sometimes more of a hindrance than an asset. Many of our initially promising ideas have turned out to be misconceived. Those that have eventually worked are based on very different concepts ranging from entropic inequalities (Lemma~\ref{C3ei}) via Hardy-type paradoxes (Theorem~\ref{C4thm}) to Choquet's Theorem (see~\ref{gensepproof}). Nevertheless, we hope that our formalism will develop into an alternative approach to the study of nonlocality and will continue to be studied not only from our mathematical point of view, but also from both the information processing and the philosophical perspective. For example, the recent ``PBR Theorem''~\cite{PBR} also considers hidden variable theories satisfying~\ref{IS} and a comparison to our approach may be interesting.

Finally, Appendix~\ref{app} contains measure-theoretical details concerning the consideration of non-discrete hidden variables. In the main text, all our definitions and proofs are rigorous only for the case of discrete hidden variables; without exception, the same ideas work in the general case, but the technicalities required are so much more laborious and obscure that we relegate them to the appendix.

A follow-up paper~\cite{FS} will present an even more general formalism for device-independent physics in terms of hidden Bayesian networks. It will comprise not only standard Bell scenarios and the formalism we introduce here, but also other scenarios like Popescu's ``hidden'' nonlocality~\cite{Pop}. It will be conceptually similar to hidden Markov models~\cite{LA}.

\subsection*{Terminology and notation}

From now on, we will restrain from using the misleading term \emph{nonlocality} and related terms like \emph{local correlations}. It is misleading terminology insofar as it suggests that nonlocal interactions would be the only way to escape the conclusion of Bell's Theorem; however this is far from correct, since locality is only one of the assumptions~\ref{R},~\ref{L},~\ref{FW}. Moreover, despite the experimental verification of the existence of quantum ``nonlocality''~\cite{AGR}, all known fundamental interactions in physics are of a local nature~\cites{CG,Haag,Jackiw}; see also~\cite{Zeh}. Consequently, we will rather speak of \emph{classical correlations} in analogy with the commonly used term \emph{quantum correlations}. We will use these notions both in the context of standard Bell scenarios as well as in our new \emph{correlation scenarios}.

In the context of our correlation scenarios, we use typewriter-font uppercase letters \texttt{A}, \texttt{B}, \texttt{C}, \ldots{} to enumerate the measurements. Equivalently, one may think of these as observers or parties: since each observer or party gets assigned a fixed measurements which they conduct in each run of the experiment, this is the same. The corresponding measurement outcomes are denoted by lowercase letters $a$, $b$, $c$, \ldots. We denote the joint probability distribution of outcomes of, for example, the joint measurement $(\T{A},\T{Y})$ by $p(a,y)$. This constitutes extensive abuse of notation as it makes expressions like $p(97,-2)$ ambiguous: does this refer to the distribution $p(a,y)$ or to another one like $p(w,z)$? Notwithstanding, we use this notation here in order to keep clutter to a minimum, while making sure that it does not lead to ambiguous expressions. We also keep the order of the variables arbitrary: for example, $p(x,a,b,y)$ stands for the same distribution as $p(a,b,x,y)$, and the one we use depends on which one is more natural in that particular context. Moreover, notation like $p(a,b,x,y)$ makes sense, strictly speaking, only when all variables are discrete; while we do assume that all measurements have only a finite number of possible outcomes, we do not make any discreteness assumption on the hidden variables; see Appendix~\ref{app}.

\subsection*{Necessary background} Any reader looking at this paper will probably already have the necessary understanding of Bell's Theorem~\cites{Bell,Shimony}. Moreover, we also need to assume good familiarity with the notions of (conditional) independence of random variables and conditioning of probabilities. A basic knowledge of the terminology of graphs and hypergraphs is required for Section~\ref{gentlcs}. Some background in Bayesian networks~\cites{KF,Pearl} will be of advantage in order to understand the connection to~\cite{SA}. Reading Appendix~\ref{app} is not possible without some grasp of measure-theoretical probability theory and related subjects.


\section{Examples of correlation scenarios}
\label{tlcs}

In this section, we introduce correlation scenarios by way of example. Using the appropriate dictionary from the standard framework into our formalism, we show how to translate any ordinary Bell scenario as well as the ``bilocality'' scenarios introduced in~\cite{BGP} into a scenario without free will.

We also present the first examples of correlation scenarios, some of which have been considered in~\cite{BRGP} and some of which are new. Obtaining concrete results about these new kinds of correlations has turned out to be difficult; until now, we have been able to do so only by relating to things we were already familiar with (standard Bell scenarios). We hope that future work will show the class of correlation scenarios, as we are going to formally define it in Section~\ref{gentlcs}, to be much richer than what we begin to explore in this paper.

\subsection*{A first example.} 

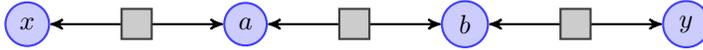
\begin{figure}
\centering
\begin{tikzpicture}[node distance=2.3cm,>=stealth',thick]
\tikzstyle{place}=[circle,thick,draw=blue!75,fill=blue!20,minimum size=4mm]
\tikzstyle{transition}=[rectangle,thick,draw=black!75,fill=black!20,minimum size=4.0mm]
\node[place] (a) at (0,0) {$a$} ;
\node[place] (x) [left=of a] {$x$} ;
\node[place] (b) [right=of a] {$b$} ;
\node[place] (y) [right=of b] {$y$} ;
\draw[<->] (x) -- (a) node [midway,above=-6pt,transition] {} ;
\draw[<->] (a) -- (b) node [midway,above=-6pt,transition] {} ;
\draw[<->] (y) -- (b) node [midway,above=-6pt,transition] {} ;
\end{tikzpicture}
\caption{The correlation scenario $P_4$.} 
\label{P4fig}
\end{figure}

Let us consider an experimental setup as depicted abstractly in Figure~\ref{P4fig}. There are $4$ parties \T{X}, \T{A}, \T{B}, \T{Y} (circles) arranged in a linear way such that any pair of neighboring parties shares a source (square). Each of these three sources sends out, at time $t_{\textrm{emit}}$, one physical system to each adjacent party. As in the case of ordinary Bell scenarios, these two systems are typically correlated; in the classical case, this is shared randomness, while in the quantum case, such a correlation can also be entanglement. The parties receive these systems and each party conducts, at time $t_{\textrm{meas}}>t_{\textrm{emit}}$, a fixed measurement on the system(s) they have received; in the case of \T{A} and \T{B}, who receive two systems each, this will typically be a joint measurement operating on both systems simultaneously. In each run of the experiment, the parties obtain and register outcomes $x$, $a$, $b$, $y$. If the experiment is repeated many times, the parties will notice correlations between these outcomes and determine a joint probability distribution $p(x,a,b,y)$. With the parties as vertices and the sources as edges, Figure~\ref{P4fig} has the structure of the path graph $P_4$, and therefore we will speak of the \emph{$P_4$ scenario}. It has first been studied in~\cite{BRGP}*{Sec.~5}.

Ideally, the timing and the geometry of the experiment should guarantee that the leftmost source cannot causally influence $b$ or $y$ in the time between $t_{\textrm{emit}}$ and $t_{\textrm{meas}}$. Similar causal separation should hold between any other pair of source and measurement which do not share an arrow in Figure~\ref{P4fig}. This ensures the validity of assumption~\ref{L}.

Also, the sources should have been prepared in such a way that the correct quantum-mechanical description of the system will take the joint state of the sources to be a product state, and furthermore such that any correlation between them in a potential hidden variable description should be rendered very implausible. In other words, the experiment should try to guarantee that any hidden variable theory not satisfying~\ref{IS} should be very unreasonable and contrived. This may be achieved, for example, by placing the sources at large spatial separation betwen each other and by using sources which employ different physical mechanisms. But of course, since the past light cones of the sources will always intersect, the requirement~\ref{IS} can never be enforced. It will always be possible to explain all observations by, for example, a superdeterministic theory in which everything is predetermined since the beginning of the universe; compare~\cite{SUS}*{Ch.~12}.

As has already been noticed in~\cite{BGP}, this discussion is completely analogous to the discussion of the validity of property~\ref{FW}: there exist hidden variable theories, like superdeterminism, which do not allow free will and therefore evade the conclusion of Bell's Theorem. However, these are generally so contrived that one cannot regard them as scientific theories of physics. Exactly the same applies to our assumption~\ref{IS} in a suitably conducted experiment.

Now we imagine that many runs of such an experiment have been conducted and we are given the joint outcome statistics $p(x,a,b,y)$. In the following, we work with the ideal case of infinite statistics, so that the outcome probabilities $p(x,a,b,y)$ are known with perfect precision.

Then, due to the causal structure of the experiment, one should find that the outcome $x$ is independent of $y$, since \T{X} and \T{Y} do not connect to a common source. Similarly, $x$ should be independent of $b$; in fact, $x$ should be independent of the pair $(b,y)$. Similarly, $y$ should be indepdendent of the pair $(x,a)$. Checking whether this is indeed the case amounts to a consistency check for the experiment.

More formally, these requirements mean that $p(x,a,b,y)$ should be a \emph{correlation}:

\begin{defn}
\label{P4corrdef}
A \emph{correlation} $p$ in the $P_4$ scenario is a distribution $p(x,a,b,y)$ whose marginals factorize as
\beq
\label{P4corrdefeq}
p(x,a,y) = p(x,a)p(y),\qquad p(x,b,y) = p(x) p(b,y).
\eeq
\end{defn}

Any of these two equations implies $p(x,y) = p(x) p(y)$. Upon using this, one finds that\eq{P4corrdef} is equivalent to $p(a|x,y)=p(a|x)$ and $p(b|x,y)=p(b|y)$ for all those values of $x$ and $y$ for which $p(x)>0$ and $p(y)>0$. Upon reinterpreting $x$ and $y$ as settings in a bipartite Bell scenario having outcomes $a$ and $b$, these are the no-signaling equations. However, conceptually,\eq{P4corrdef} has nothing to do with the impossibility of communication between the parties: these cannot do anything else than apply their fixed measurement in each run of the experiment, which renders the very notion of communication meaningless.

We now ask under which conditions a given correlation $p(x,a,b,y)$ is \emph{classical}, i.e.~consistent with the assumptions~\ref{R},~\ref{L},~\ref{IS}. What would it mean to have such a model? Due to~\ref{R}, the state of the systems sent out by each of the three sources can be described in terms of a classical random variable; we will denote these ``hidden'' variables by $\lambda_{\T{X}\T{A}}$, $\lambda_{\T{A}\T{B}}$, $\lambda_{\T{B}\T{Y}}$, respectively, where the index specifies the source which the hidden variable models. For the precise definition of hidden variable, see~\ref{wihv}.

Assumption~\ref{IS} now means that the joint distribution of these hidden variables is a product distribution:
$$
p(\lambda_{\T{X}\T{A}},\lambda_{\T{A}\T{B}},\lambda_{\T{B}\T{Y}}) = p(\lambda_{\T{X}\T{A}}) p(\lambda_{\T{A}\T{B}}) p(\lambda_{\T{B}\T{Y}}) .
$$
A sensible hidden variable model should also satisfy \emph{locality}~\ref{L}: each outcome should be a (deterministic or probabilistic) function of the hidden variables associated to the sources it interacts with \emph{and no others}.

If such a hidden variable model exists for the correlation $p$, then we call $p$ \emph{classical}. A more precise statement is this:

\begin{defn}[\cite{BRGP}]
\label{P4classdef}
A correlation $p(x,a,b,y)$ is classical in the $P_4$ scenario if and only if it can be written in the form
\beq
\label{P4lhv}
p(x,a,b,y) = \int_{\lambda_{\T{X}\T{A}},\lambda_{\T{A}\T{B}},\lambda_{\T{B}\T{Y}}} p(x|\lambda_{\T{X}\T{A}}) p(\lambda_{\T{X}\T{A}}) p(a|\lambda_{\T{X}\T{A}},\lambda_{\T{A}\T{B}}) p(\lambda_{\T{A}\T{B}}) p(b|\lambda_{\T{A}\T{B}},\lambda_{\T{B}\T{Y}}) p(\lambda_{\T{B}\T{Y}}) p(y|\lambda_{\T{B}\T{Y}}) 
\eeq
for some collection of (conditional) distributions
$$
p(x|\lambda_{\T{X}\T{A}}),\quad p(\lambda_{\T{X}\T{A}}),\quad p(a|\lambda_{\T{X}\T{A}},\lambda_{\T{A}\T{B}}),\quad p(\lambda_{\T{A}\T{B}}),\quad p(b|\lambda_{\T{A}\T{B}},\lambda_{\T{B}\T{Y}}),\quad p(\lambda_{\T{B}\T{Y}}),\quad p(y|\lambda_{\T{B}\T{Y}}).
$$
\end{defn}

See~\ref{appcond} for an explanation of what these conditional distributions mean in case that the hidden variables are not all discrete. 

We take this to be a \emph{definition} instead of a proposition or theorem because it is the first time that we have formalized the notion of classical model in a mathematically rigorous way. The representation\eq{P4lhv} can be informally derived from hypotheses~\ref{R},~\ref{L},~\ref{IS} as follows. Applying~\ref{R} and the definition of conditional probability gives
$$
p(x,a,b,y) = \int_{\lambda_{\T{X}\T{A}},\lambda_{\T{A}\T{B}},\lambda_{\T{B}\T{Y}}} p(x,a,b,y|\lambda_{\T{X}\T{A}},\lambda_{\T{A}\T{B}},\lambda_{\T{B}\T{Y}}) p(\lambda_{\T{X}\T{A}},\lambda_{\T{A}\T{B}},\lambda_{\T{B}\T{Y}}).
$$
By locality~\ref{L}, the first factor in the integrand can be replaced by
$$
p(a,b,x,y|\lambda_{\T{X}\T{A}},\lambda_{\T{A}\T{B}},\lambda_{\T{B}\T{Y}}) = p(x|\lambda_{\T{X}\T{A}}) p(a|\lambda_{\T{X}\T{A}},\lambda_{\T{A}\T{B}}) p(b|\lambda_{\T{A}\T{B}},\lambda_{\T{B}\T{Y}}) p(y|\lambda_{\T{B}\T{Y}})
$$
while independence of sources~\ref{IS} guarantees that the second factor is equal to
$$
p(\lambda_{\T{X}\T{A}},\lambda_{\T{A}\T{B}},\lambda_{\T{B}\T{Y}}) = p(\lambda_{\T{X}\T{A}}) p(\lambda_{\T{A}\T{B}}) p(\lambda_{\T{B}\T{Y}}) ,
$$
and then~(\ref{P4lhv}) directly follows.

\begin{rem}
\label{wlogdet}
In the representation\eq{P4lhv}, it can be assumed without loss of generality that the four conditional distributions on the right-hand side are in fact deterministic, i.e.~it can be assumed that the outcomes are functions
$$
x=x(\lambda_{\T{X}\T{A}}),\quad a=a(\lambda_{\T{X}\T{A}},\lambda_{\T{A}\T{B}}),\quad b=b(\lambda_{\T{A}\T{B}},\lambda_{\T{B}\T{Y}}),\quad y=y(\lambda_{\T{B}\T{Y}}).
$$
In the case of discrete hidden variables, this can be seen as follows: if, for example, $a$ is a probabilistic function of $\lambda_{\T{X}\T{A}}$ and $\lambda_{\T{A}\T{B}}$, then the computation of this function can be regarded as the \emph{deterministic} computation taking the values $\lambda_{\T{X}\T{A}},\lambda_{\T{A}\T{B}}$ and an additional random number $r_{\T{A}}\in[0,1]$ as input, calculating $p(a|\lambda_{\T{X}\T{A}},\lambda_{\T{A}\T{B}})$ for each outcome $a$, and then using $r_{\T{A}}$ to determine which one of these finitely many outcomes occurs. But now we can redefine the hidden variable $\lambda_{\T{X}\T{A}}$ to be the pair $\lambda_{\T{X}\T{A}}'=(\lambda_{\T{X}\T{A}},r_{\T{A}})$ which contains the information about the original $\lambda_{\T{X}\T{A}}$ as well as the additional random number $r_{\T{A}}$ required in the computation; the party $\T{X}$ will then also receive this new component of $\lambda_{\T{X}\T{A}}'$, but can just ignore it. In this way, the function $a(\lambda_{\T{X}\T{A}}',\lambda_{\T{A}\T{B}})$ has become deterministic.

Upon applying this kind of hidden variable redefinition for each party, all the outcomes become deterministic functions of the hidden variables.

This reasoning not only applies to $P_4$, but in exactly the same way to any correlation scenario. We will make use of this in the proof of Theorem~\ref{C4thm}. See~\ref{appwlogdet} for a rigorous and general version of this argument.
\end{rem}


It is also not difficult to define what \emph{quantum correlations} are. Informally speaking, a quantum correlation is a correlation $p(x,a,b,y)$ which can be modelled in terms of quantum resources: a bipartite quantum state for each source together with one measurement for each party operating jointly on all the systems received by that party. The Hilbert space dimension of the quantum systems can be arbitrary and will be infinite in general. We take the definition of quantum correlation to be sufficiently obvious that we need to go into detail here; see Definition~\ref{gendefq} for the technicalities.

The following theorem makes the connection to bipartite Bell scenarios. Its first part has also appeared in~\cite{BRGP}.

\renewcommand{\labelenumi}{(\arabic{enumi})}
\renewcommand{\theenumi}{(\arabic{enumi})}
\begin{thm}
\label{P4thm}
\begin{enumerate}
\item\label{P4cl} A correlation $p(a,b,x,y)$ is \emph{classical} in $P_4$ if and only if the associated conditional distribution $p(a,b|x,y)$ is classical in the Bell scenario sense.
\item\label{P4qu} A correlation $p(a,b,x,y)$ is \emph{quantum} in $P_4$ if and only if the associated conditional distribution $p(a,b|x,y)$ is quantum in the Bell scenario sense.
\end{enumerate}
\end{thm}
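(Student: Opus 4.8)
The plan is to prove both directions by exhibiting explicit translations between a $P_4$ model and a bipartite Bell model, exploiting the fact that the defining equation~\eq{P4lhv} has exactly the structure of a bipartite Bell local-hidden-variable decomposition once we reinterpret $x$ and $y$ as settings. The key observation is that the product $p(x|\lambda_{\T{X}\T{A}})p(\lambda_{\T{X}\T{A}})$ lets us package the pair $(x,\lambda_{\T{X}\T{A}})$ into a single object, and likewise for $(y,\lambda_{\T{B}\T{Y}})$; the middle source $\lambda_{\T{A}\T{B}}$ then plays the role of the shared randomness of a Bell scenario, with $a$ depending on the setting $x$ (through $\lambda_{\T{X}\T{A}}$) and the shared variable, and $b$ depending on the setting $y$ and the shared variable.

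\textbf{Classical direction.} For the forward implication of part~\ref{P4cl}, I would start from a classical $P_4$ model~\eq{P4lhv} and compute the conditional $p(a,b|x,y)$ by dividing by $p(x)p(y)$ (using that $p(x,y)=p(x)p(y)$, as noted after Definition~\ref{P4corrdef}). Carrying out the integral over $\lambda_{\T{X}\T{A}}$ and $\lambda_{\T{B}\T{Y}}$ first, I expect the factors $p(x|\lambda_{\T{X}\T{A}})p(\lambda_{\T{X}\T{A}})$ to collapse into a setting-dependent response function $p(a|x,\lambda_{\T{A}\T{B}})$, and symmetrically $p(b|y,\lambda_{\T{A}\T{B}})$, leaving
\beq
p(a,b|x,y) = \int_{\lambda_{\T{A}\T{B}}} p(a|x,\lambda_{\T{A}\T{B}}) \, p(b|y,\lambda_{\T{A}\T{B}}) \, p(\lambda_{\T{A}\T{B}}),
\eeq
which is manifestly a bipartite local-hidden-variable model with shared variable $\lambda_{\T{A}\T{B}}$. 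Conversely, given a Bell-classical $p(a,b|x,y)$, I would set $\lambda_{\T{A}\T{B}}$ to be the Bell shared variable, let $\lambda_{\T{X}\T{A}}$ carry the setting value $x$ (choosing some fixed marginal $p(x)$ of full support) so that $p(x|\lambda_{\T{X}\T{A}})$ is deterministic and $p(a|\lambda_{\T{X}\T{A}},\lambda_{\T{A}\T{B}})$ reproduces the Bell response function, and symmetrically for the right half, recovering~\eq{P4lhv}.

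\textbf{Quantum direction and the main obstacle.} Part~\ref{P4qu} follows the same bookkeeping at the level of states and measurements: the middle source supplies the shared entangled state of a Bell experiment, while the outer sources and the outcomes $x,y$ encode the settings. The one genuine subtlety — and what I expect to be the main obstacle — is the translation of \emph{settings into sources} in the converse directions. In a Bell scenario the setting is an external free choice, whereas in $P_4$ the values $x$ and $y$ are themselves measured outcomes with a fixed distribution $p(x)$, $p(y)$; one must verify that one can always choose these marginals to have full support on the relevant setting set (so that conditioning on $x,y$ is well-defined and loses no settings) and confirm that the no-signaling/correlation constraints of Definition~\ref{P4corrdef} are exactly what make $p(a,b|x,y)$ a legitimate conditional to which the Bell-scenario definitions apply. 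In the quantum case I must additionally ensure that promoting a setting to an outcome — e.g.\ by having the left source emit a maximally classically-correlated state that \T{X} measures to reveal $x$ and \T{A} uses to select her measurement basis — does not secretly require signaling or a larger-than-allowed resource; invoking Remark~\ref{wlogdet} to make the outer response functions deterministic streamlines this, since then $x$ is literally a function of $\lambda_{\T{X}\T{A}}$ and the encoding is transparent.
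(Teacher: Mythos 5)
Your classical part coincides with the paper's proof: the forward direction is exactly the observation that, conditional on $\lambda_{\T{A}\T{B}}$, the pair $(a,x)$ is independent of $(b,y)$, so integrating out the outer hidden variables yields the bipartite local-hidden-variable form with $\lambda_{\T{A}\T{B}}$ as the shared variable, and the converse via $\lambda_{\T{X}\T{A}}=x$, $\lambda_{\T{B}\T{Y}}=y$ is also the paper's construction. (One small correction: the marginals $p(x)$, $p(y)$ are not yours to choose --- they are part of the given correlation $p(a,b,x,y)$; the full-support issue is handled in the paper simply by discarding setting values of zero probability.) The converse of part~\ref{P4qu} is likewise the paper's argument: a classical source carrying $x$, the middle source carrying $|\psi\rangle$, and \T{A} measuring the combined POVM with elements $|x\rangle\langle x|\otimes A^x_a$.

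The gap is in the \emph{forward} quantum direction, which you dismiss as ``the same bookkeeping'' while locating the main obstacle in the converse; it is the other way around. In a quantum model of $P_4$ the source between \T{X} and \T{A} may emit an arbitrary entangled bipartite state and \T{X} may perform an arbitrary POVM on its half --- nothing in the definition of a quantum correlation says this source merely ``encodes a setting''. To conclude that $p(a,b|x,y)$ is Bell-quantum you must show that, for each outcome $x$, the combination of \T{X}'s measurement and the \T{X}--\T{A} source steers \T{A}'s ancillary system into a definite state $|\chi_x\rangle$ (after the harmless reductions to a pure source state and a nondegenerate projective measurement for \T{X}), so that $A^x_a\equiv\langle\chi_x|A_a|\chi_x\rangle$ is a legitimate POVM for each $x$ acting only on \T{A}'s half of the middle state, and symmetrically for \T{Y} and \T{B}. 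This remote-state-preparation/steering step is the technical core of the paper's proof of part~\ref{P4qu} and is the piece your proposal is missing.
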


Note that the use of conditional probabilities here, or in any other context, does not require any particular causal structure among the variables involved.

In forming $p(a,b|x,y)$, it is implicitly assumed that all outcomes for $x$ and $y$ have strictly positive probability; this can always be achieved by redefining the set of outcomes to consist of only those values which occur with positive probability.

Thus, we can roughly summarize our present results as follows: by Definition~\ref{P4classdef}, a correlation $p(a,b,x,y)$ can be interpreted in a conventional bipartite Bell scenario as a no-signaling box together with a specification of input distributions $p(x)$ and $p(y)$; and the correlation is classical (resp.~quantum) if and only if the associated no-signaling box is classical (resp.~quantum).

\begin{proof}[Proof of Theorem~\ref{P4thm}]
\begin{asparaenum}
\item Suppose that $p$ is classical. Then
$$
p(a,b|x,y) = \int_{\lambda_{\T{A}\T{B}}} p(a,b|x,y,\lambda_{\T{A}\T{B}}) p(\lambda_{\T{A}\T{B}}) .
$$
By the assumption\eq{P4lhv}, upon conditioning on $\lambda_{\T{A}\T{B}}$, the variables $(a,x)$ are independent of the variables $(b,y)$; therefore, $p(a,b|x,y,\lambda_{\T{A}\T{B}})=p(a|x,\lambda_{\T{A}\T{B}})p(b|y,\lambda_{\T{A}\T{B}})$, and
\beq
\label{P4hv}
p(a,b|x,y) = \int_{\lambda_{\T{A}\T{B}}} p(a|x,\lambda_{\T{A}\T{B}}) p(b|y,\lambda_{\T{A}\T{B}}) p(\lambda_{\T{A}\T{B}}) .
\eeq
This is the standard representation of the conditional probabilities obtained from local hidden variables in a bipartite Bell scenario. In particular, $p(a,b|x,y)$ will have to satisfy all Bell inequalities.

Conversely, we start from a correlation $p(a,b,x,y)$ for which $p(a,b|x,y)$ satisfies all Bell inequalities. This means in particular that there is a hidden variable $\lambda$ such that
$$
p(a,b|x,y) = \int_{\lambda} p(a|x,\lambda)p(b|y,\lambda)p(\lambda)
$$
Defining $\lambda_{\T{A}\T{X}}=x$, $\lambda_{\T{B}\T{Y}}=y$ and $\lambda_{\T{A}\T{B}}=\lambda$ now yields a hidden variable model in the $P_4$ correlation scenario, i.e.~the right-hand side of~(\ref{P4hv}).
\item Suppose that $p(a,b,x,y)$ is quantum. Then one has one bipartite quantum state at each source and one quantum measurement at each party. We think of the measurement \T{X} as remotely preparing, via steering depending on the outcome $x$, a quantum system for \T{A}. In order to ease notation, we may assume, without loss of generality, the shared state to be pure and \T{X}'s measurement to be projective. Furthermore, we may take \T{X}'s projective measurement to be nondegenerate; going to a degenerate measurement amounts to a coarse-graining of $\T{X}$, which preserves the quantum-mechanical realizability of $p(a,b,x,y)$. By these assumptions, the steered states for \T{A} are a family $\{|\chi_x\rangle\}$ of pure states. Using the same assumptions for \T{Y}, we end up with a family $\{|\mu_y\rangle\}$ of pure steered states for \T{B}.

We now replace the source between \T{X} and \T{A} by a hidden variable defined to be $\lambda_{\T{A}\T{X}}=x$; then the new measurement protocol of \T{X} simply consists in announcing $\lambda_{\T{A}\T{X}}$'s value as his outcome. The new protocol of \T{A} consists in receving $\lambda_{\T{A}\T{X}}$, preparing the quantum state which \T{X} would have steered to given the outcome $\lambda_{\T{A}\T{X}}$, and then proceeding with the measurement specified in the original protocol. This replacement preserves the overall correlation $p(a,b,x,y)$. The same procedure can be applied in order to replace the source between \T{Y} and \T{B} by a hidden variable $\lambda_{\T{B}\T{Y}}$ and the measurement of \T{Y} by the protocol of simply announcing $\lambda_{\T{B}\T{Y}}$'s value as the outcome $y$.

Let $\{A_a\}$ (resp.~$\{B_b\}$) denote the POVM employed by \T{A} (resp.~\T{B}). Then
\beq
\label{P4quantum}
p(a,b|x,y) = \left( \langle\chi_x|\otimes \langle\psi| \otimes \langle\mu_y| \right) \left(A_a\otimes B_b\right)  \left( |\chi_x\rangle\otimes |\psi\rangle \otimes |\mu_y\rangle \right) ,
\eeq
or, in graphical notation~\cite{Coecke},
$$
\begin{tikzpicture}
\node at (-4.1,0) {$p(a,b|x,y) = $} ;
\draw[black,fill=black!20] (.4,-.5) rectangle (2.4,.5) ;
\draw[black,fill=black!20] (-.4,-.5) rectangle (-2.4,.5) ;
\node at (1.4,0) {$B_b$} ;
\node at (-1.4,0) {$A_a$} ;
\draw[black,fill=black!20] (-1.3,-1) -- (1.3,-1) -- (0,-2) -- cycle;
\node at (0,-1.4) {$\psi$} ;
\draw[black,fill=black!20] (-1.3,1) -- (1.3,1) -- (0,2) -- cycle;
\node at (0,1.4) {$\psi$} ;
\draw[black,fill=black!20] (-1.7,-1) -- (-2.5,-1) -- (-2.1,-2) -- cycle;
\node at (-2.1,-1.4) {$\chi_x$} ;
\draw[black,fill=black!20] (-1.7,1) -- (-2.5,1) -- (-2.1,2) -- cycle;
\node at (-2.1,1.4) {$\chi_x$} ;
\draw[black,fill=black!20] (1.7,1) -- (2.5,1) -- (2.1,2) -- cycle;
\node at (2.1,1.4) {$\mu_y$} ;
\draw[black,fill=black!20] (1.7,-1) -- (2.5,-1) -- (2.1,-2) -- cycle;
\node at (2.1,-1.4) {$\mu_y$} ;
\draw (2.1,-1) -- (2.1,-.5) ;
\draw (2.1,1) -- (2.1,.5) ;
\draw (-2.1,-1) -- (-2.1,-.5) ;
\draw (-2.1,1) -- (-2.1,.5) ;
\draw (.6,1) -- (.6,.5) ;
\draw (.6,-1) -- (.6,-.5) ;
\draw (-.6,1) -- (-.6,.5) ;
\draw (-.6,-1) -- (-.6,-.5) ;
\draw[red,dashed] plot [smooth cycle,tension=.3] coordinates { (-1.4,-0.8) (-1.6,-2.2) (-2.7,-2.1) (-2.7,2.1) (-1.6,2.2) (-1.4,0.8) (-0.2,0.7) (-0.2,-0.7) } ;
\draw[red,dashed] plot [smooth cycle,tension=.3] coordinates { (1.4,-0.8) (1.6,-2.2) (2.7,-2.1) (2.7,2.1) (1.6,2.2) (1.4,0.8) (0.2,0.7) (0.2,-0.7) } ;
\end{tikzpicture}
$$
Here, the dashed line indicates how to consider $A^x_a\equiv\langle\chi_x|A_a|\chi_x\rangle$ as well as $B^y_b\equiv\langle\mu_y|B_b|\mu_y\rangle$ as operators acting on one part of the bipartite state $|\psi\rangle$. By $\sum_a A_a=\mathbbm{1}$ and normalization of $|\chi_x\rangle$, it follows that $\sum_a A^x_a=\mathbbm{1}$ for all $x$; similarly, $\sum_y B^y_b=\mathbbm{1}$ for all $y$. By definition,~(\ref{P4quantum}) can then be written as
\beq
\label{Bellquantum}
p(a,b|x,y) = \langle\psi|A^x_a\otimes B^y_b|\psi\rangle.
\eeq
This is desired quantum representation of $p$ in a bipartite Bell scenario.

Conversely, we start from a correlation $p(a,b,x,y)$ of the form~(\ref{Bellquantum}). As sources between \T{A} and \T{X} and between \T{B} and \T{Y}, we again take hidden variables defined by $\lambda_{\T{X}\T{A}}=x$ and $\lambda_{\T{B}\T{Y}}=y$; again, the protocol of \T{X} and \T{Y} is simply to announce the values of these variables as their outcome. Only the source between \T{A} and \T{B} is taken to be quantum and produces the bipartite state $|\psi\rangle$ of~(\ref{Bellquantum}). The measurement protocol conducted by \T{A} is similar to above: measure $\lambda_{\T{X}\T{A}}$, use the result as the choice of setting for the subsequent measurement on $|\psi\rangle$, and then announce both outcomes as the total outcome. This protocol can be interpreted as measuring a single POVM given by
$$
|x\rangle\langle x|\otimes A^x_a \mapsto (x,a),
$$
where the left-hand side is a POVM element indexed by $x$ and $a$, and the right-hand side denotes the resulting outcome announced by \T{A}. The analogous POVM is measured by \T{B}. By construction, this reproduces both the desired conditional distribution~(\ref{Bellquantum}) and the marginal distribution $p(x,y)=p(x)p(y)$, and therefore also the whole distribution $p(x,a,b,y)$.
\end{asparaenum}
\end{proof}

\begin{cor}
\begin{enumerate}
\item There exist non-classical quantum correlations in $P_4$.
\item There exist non-quantum correlations in $P_4$.
\end{enumerate}
\end{cor}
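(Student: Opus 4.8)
The plan is to read both statements off Theorem~\ref{P4thm}, which reduces classicality (resp.\ quantum realizability) of a correlation $p(a,b,x,y)$ in $P_4$ to classicality (resp.\ quantum realizability) of the conditional box $p(a,b|x,y)$ in an ordinary bipartite Bell scenario. The only genuine work is to promote a given box $p(a,b|x,y)$ into a bona fide correlation in the sense of Definition~\ref{P4corrdef}; I would do this uniformly in both parts by equipping the box with product input distributions and setting $p(a,b,x,y) := p(a,b|x,y)\,p(x)\,p(y)$, with uniform $p(x)=p(y)=\tfrac12$.

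For part (1) I would take any CHSH-violating quantum box, for example the correlations $p(a,b|x,y)$ with $a,b,x,y\in\{0,1\}$ obtained by measuring a maximally entangled two-qubit state along the standard optimal CHSH directions; this box is quantum but not classical in the Bell-scenario sense, since it exceeds the classical CHSH bound $2$. Forming $p(a,b,x,y)=p(a,b|x,y)\,p(x)\,p(y)$ with uniform inputs, I would verify that this is a correlation: because any quantum box is no-signaling, $p(a|x,y)$ is independent of $y$ and $p(b|x,y)$ is independent of $x$, and together with the product form of the inputs this yields exactly the marginal factorizations of\eq{P4corrdefeq}. By Theorem~\ref{P4thm}\ref{P4qu} the resulting correlation is quantum, while by Theorem~\ref{P4thm}\ref{P4cl} it is non-classical, which establishes (1).

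For part (2) I would replace the quantum box by the Popescu--Rohrlich box, $p(a,b|x,y)=\tfrac12$ if $a\oplus b = xy$ and $0$ otherwise. This box is again no-signaling, so the same construction $p(a,b,x,y)=p(a,b|x,y)\,p(x)\,p(y)$ with uniform inputs produces a correlation by the identical verification. Since the PR box attains the algebraic maximum $4$ of the CHSH expression and hence exceeds Tsirelson's bound $2\sqrt{2}$, it lies strictly outside the quantum set of Bell-scenario boxes. Applying Theorem~\ref{P4thm}\ref{P4qu} in the contrapositive, the associated correlation is non-quantum, giving (2).

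Since the whole argument is a translation through Theorem~\ref{P4thm}, I do not expect a deep obstacle. The single point requiring a little care is the verification that $p(a,b|x,y)\,p(x)\,p(y)$ genuinely satisfies Definition~\ref{P4corrdef}, i.e.\ that no-signaling of the box combined with independence of the chosen inputs reproduces precisely the two marginal factorizations of a $P_4$ correlation; everything else is supplied by Theorem~\ref{P4thm} together with the standard facts that quantum theory violates the CHSH inequality (Bell's Theorem) and that the PR box violates Tsirelson's bound.
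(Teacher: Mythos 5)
Your proposal is correct and follows essentially the same route as the paper, which proves the corollary in one line by invoking Theorem~\ref{P4thm} together with the existence of quantum Bell inequality violations (for part (1)) and of no-signaling boxes exceeding Tsirelson's bound, such as the Popescu--Rohrlich box (for part (2)). Your additional verification that a no-signaling box with product input distributions satisfies the marginal factorizations of Definition~\ref{P4corrdef} is a detail the paper leaves implicit, and it is carried out correctly.
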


\begin{proof}
This follows from the existence of Bell inequality violations and no-signaling violations of Tsirelson's bound~\cite{Tsirelson}, respectively.
\end{proof}

\begin{rem}
\label{notconvex}
Due to Theorem~\ref{P4thm}, we can regard $P_4$ as the analogue of a bipartite Bell scenario within our formalism. Nevertheless, there are several important differences. For one, the correlations live in completely different spaces: in a Bell scenario, one works in the space of conditional distributions $p(a,b|x,y)$, which results in the convexity of the sets of classical and quantum correlations. In contrast, in the case of $P_4$, we work on the level of unconditional distributions $p(x,a,b,y)$, which contain, from the point of view of Bell scenarios, also the information about the distributions of settings $p(x)$ and $p(y)$. The sets of classical and quantum correlations in this formulation are not convex, which can be seen as follows: first, the set of classical correlations contains all the deterministic distributions $p(x,a,b,y)$ in which all measurements always produce the same outcome. Second, \emph{any} probability distribution $p(x,a,b,y)$, and in particular every correlation, is a convex combination of deterministic ones. Third, not every correlation is classical. Thus, not every convex combination of classical correlations is a classical correlation; for that matter, most convex combinations of classical correlations are not even correlations! The same reasoning shows that the set of quantum correlations is not convex. Analogous arguments apply to any other correlation scenario in which non-classical (resp.~non-quantum) correlations exist.
\end{rem}




\subsection*{The scenario $P_5$}

We proceed to the second example of a correlation scenario. It is depicted in Figure~\ref{P5fig}. With parties as vertices and sources as edges, this is the path graph $P_5$, and therefore we will speak of the \emph{$P_5$ scenario}; the conceptual discussion we gave of the $P_4$ scenario applies here and to all following examples just as well. We will see that the $P_5$ scenario relates to the ``bilocality'' scenarios of Branciard, Gisin and Pironio~\cite{BGP} (\emph{BGP scenarios}) just as we have seen the $P_4$ scenario to relate standard bipartite Bell scenarios.

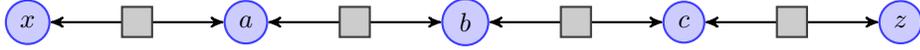
\begin{figure}
\begin{tikzpicture}[node distance=2.3cm,>=stealth',thick]
\tikzstyle{place}=[circle,thick,draw=blue!75,fill=blue!20,minimum size=4mm]
\tikzstyle{transition}=[rectangle,thick,draw=black!75,fill=black!20,minimum size=4.0mm]
\node[place] (a) at (0,0) {$a$} ;
\node[place] (x) [left=of a] {$x$} ;
\node[place] (b) [right=of a] {$b$} ;
\node[place] (c) [right=of b] {$c$} ;
\node[place] (z) [right=of c] {$z$} ;
\draw[<->] (x) -- (a) node [midway,above=-6pt,transition] {} ;
\draw[<->] (a) -- (b) node [midway,above=-6pt,transition] {} ;
\draw[<->] (b) -- (c) node [midway,above=-6pt,transition] {} ;
\draw[<->] (z) -- (c) node [midway,above=-6pt,transition] {} ;
\end{tikzpicture}
\caption{The correlation scenario $P_5$.}
\label{P5fig}
\end{figure}

Given the $5$-variable distribution $p(x,a,b,c,z)$, under which conditions would we expect it to arise from a configuration like Figure~\ref{P5fig}? In other words, what is the analogue of Definition~\ref{P4corrdef}? Following reasoning analogous to the $P_4$ case, the answer is straightforward:

\begin{defn}
\label{P5corrdef}
A \emph{correlation} $p$ in the $P_5$ scenario is a distribution $p(x,a,b,c,z)$ whose marginals factorize as
$$
p(x,a,b,z) = p(x,a,b) p(z),\qquad p(x,a,c,z) = p(x,a) p(c,z),\qquad p(x,b,c,z) = p(x) p(b,c,z).
$$
\end{defn}

Any of these three equations implies $p(x,z)=p(x)p(z)$. Upon using this, the first and third condition can also be written as $\sum_c p(a,b,c|x,z)=\sum_c p(a,b,c|x)$ and $\sum_a p(a,b,c|x,z) = \sum_a p(a,b,c|z)$, respectively, which are formally identical to the no-signaling equations of the BGP scenario. Similarly, the second condition is then equivalent to $p(a,c|x,z)=p(a|x)p(c|z)$, which is also formally identical to a consistency constraint in the BGP scenario~\cite{CF}.

The classicality assumptions~\ref{R},~\ref{L} and~\ref{IS} now yield the following characterization:

\begin{defn}
\label{P5classdef}
A correlation $p(x,a,b,c,z)$ is classical in the $P_5$ scenario if and only if it can be written in the form
\begin{align}
\begin{split}
\label{P5class}
p(&a,b ,c,x,z) \\
& = \int_{\lambda_{\T{X}\T{A}},\lambda_{\T{A}\T{B}},\lambda_{\T{B}\T{C}},\lambda_{\T{C}\T{Z}}} p(x|\lambda_{\T{X}\T{A}}) p(\lambda_{\T{X}\T{A}}) p(a|\lambda_{\T{X}\T{A}},\lambda_{\T{A}\T{B}}) p(\lambda_{\T{A}\T{B}}) p(b|\lambda_{\T{A}\T{B}},\lambda_{\T{B}\T{C}}) p(\lambda_{\T{B}\T{C}}) p(c|\lambda_{\T{B}\T{C}},\lambda_{\T{C}\T{Z}}) p(\lambda_{\T{C}\T{Z}}) p(z|\lambda_{\T{C}\T{Z}})
\end{split}
\end{align}
for some collection of (conditional) distributions 
\begin{align*}
 p(x|\lambda_{\T{X}\T{A}}&),\quad p(a|\lambda_{\T{X}\T{A}},\lambda_{\T{A}\T{B}}),\quad p(b|\lambda_{\T{A}\T{B}},\lambda_{\T{B}\T{C}}),\quad p(c|\lambda_{\T{B}\T{C}},\lambda_{\T{C}\T{Z}}),\quad p(z|\lambda_{\T{C}\T{Z}}).\\[.2cm]
& p(\lambda_{\T{X}\T{A}}),\qquad\quad\:\:  p(\lambda_{\T{A}\T{B}}),\qquad\quad\:\: p(\lambda_{\T{B}\T{C}}),\qquad\quad\:\: p(\lambda_{\T{C}\T{Z}}) .
\end{align*}
\end{defn}

As before, we regard the analogous definition of quantum correlations as straightforward and refer to~\ref{gendefq} for the details.

\begin{thm}
\label{P5thm}
\begin{enumerate}
\item\label{P5cl} A correlation $p(a,b,c,x,z)$ is \emph{classical} in $P_5$ if and only if the associated conditional distribution $p(a,b,c|x,z)$ is classical in the BGP scenario sense.
\item\label{P5qu} A correlation $p(a,b,c,x,z)$ is \emph{quantum} in $P_5$ if and only if the associated conditional distribution $p(a,b,c|x,z)$ is quantum in the BGP scenario sense.
\end{enumerate}
\end{thm}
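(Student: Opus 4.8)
The plan is to mirror the proof of Theorem~\ref{P4thm} step for step, exploiting the fact that $P_5$ stands to the BGP scenario exactly as $P_4$ stands to the bipartite Bell scenario. The dictionary is this: the two outer sources $\lambda_{\T{X}\T{A}}$ and $\lambda_{\T{C}\T{Z}}$ play the role of the BGP settings $x$ and $z$ (with \T{X} and \T{Z} merely announcing them), the two inner sources $\lambda_{\T{A}\T{B}}$ and $\lambda_{\T{B}\T{C}}$ play the role of the two independent BGP sources, and \T{A}, \T{B}, \T{C} become the three BGP parties. The independence of sources assumption~\ref{IS} in $P_5$ becomes precisely the bilocality assumption $p(\lambda_{\T{A}\T{B}},\lambda_{\T{B}\T{C}})=p(\lambda_{\T{A}\T{B}})p(\lambda_{\T{B}\T{C}})$.

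For the forward direction of part~\ref{P5cl}, I would start from the classical representation\eq{P5class} and compute $p(a,b,c|x,z)$ by conditioning on the two inner hidden variables $\lambda_{\T{A}\T{B}},\lambda_{\T{B}\T{C}}$ and integrating out the two outer ones. Grouping the integrand, the $\lambda_{\T{X}\T{A}}$-integral yields a quantity $f(a,x,\lambda_{\T{A}\T{B}})=\int_{\lambda_{\T{X}\T{A}}}p(x|\lambda_{\T{X}\T{A}})p(\lambda_{\T{X}\T{A}})p(a|\lambda_{\T{X}\T{A}},\lambda_{\T{A}\T{B}})$ depending only on $a$, $x$ and $\lambda_{\T{A}\T{B}}$, and symmetrically the $\lambda_{\T{C}\T{Z}}$-integral yields $g(c,z,\lambda_{\T{B}\T{C}})$. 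The crucial observation is the normalization $\sum_a f(a,x,\lambda_{\T{A}\T{B}})=p(x)$, independent of $\lambda_{\T{A}\T{B}}$, which lets me write $f=p(x)\,p(a|x,\lambda_{\T{A}\T{B}})$ for a genuine conditional distribution; likewise $g=p(z)\,p(c|z,\lambda_{\T{B}\T{C}})$. Substituting these and dividing by $p(x,z)=p(x)p(z)$ (which holds since $p$ is a correlation, cf.\ Definition~\ref{P5corrdef}) produces exactly the bilocal form of $p(a,b,c|x,z)$. The converse is the reconstruction already familiar from $P_4$: given a bilocal decomposition of $p(a,b,c|x,z)$, set $\lambda_{\T{X}\T{A}}=x$ and $\lambda_{\T{C}\T{Z}}=z$ with \T{X}, \T{Z} announcing their values and with $p(\lambda_{\T{X}\T{A}})=p(x)$, $p(\lambda_{\T{C}\T{Z}})=p(z)$, which assembles\eq{P5class} and recovers the full distribution $p(x,a,b,c,z)$.

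For part~\ref{P5qu} I would again follow Theorem~\ref{P4thm}, now applying the steering argument at both ends simultaneously. Starting from a quantum model of $p$ in $P_5$, I may assume (without loss of generality, as in the $P_4$ proof) that the outer sources carry pure states and that \T{X}, \T{Z} perform nondegenerate projective measurements; then \T{X}'s outcome $x$ steers \T{A}'s half to a pure state $|\chi_x\rangle$ and \T{Z}'s outcome $z$ steers \T{C}'s half to $|\mu_z\rangle$. Replacing each outer source by a hidden variable that announces the corresponding value, and absorbing the steered states into setting-dependent POVMs $A^x_a\equiv\langle\chi_x|A_a|\chi_x\rangle$ and $C^z_c\equiv\langle\mu_z|C_c|\mu_z\rangle$ (which remain normalized, $\sum_aA^x_a=\mathbbm{1}$ and $\sum_cC^z_c=\mathbbm{1}$), leaves the two inner sources as the genuinely quantum BGP sources and exhibits $p(a,b,c|x,z)$ in the BGP quantum form. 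The converse reverses this: given a BGP quantum correlation, take $\lambda_{\T{X}\T{A}}=x$, $\lambda_{\T{C}\T{Z}}=z$ as classical sources announced by \T{X}, \T{Z}, keep the two BGP states as the inner sources, and let \T{A}, \T{C} each measure the joint POVM $|x\rangle\langle x|\otimes A^x_a$ resp.\ $|z\rangle\langle z|\otimes C^z_c$ that reads the announced setting and then measures accordingly.

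I expect the proof to be essentially routine given Theorem~\ref{P4thm}; the only step requiring genuine care is the bookkeeping in the classical forward direction, where one must verify that integrating out \emph{two} outer sources while retaining \emph{two} inner ones produces exactly the bilocal factorization with its characteristic single middle party \T{B} coupled to both sources. The normalization identities $\sum_a f=p(x)$ and $\sum_c g=p(z)$ are what make this work and are the heart of the argument; everything else is a transcription of the $P_4$ reasoning, including the implicit assumption that all values of $x$ and $z$ occur with positive probability.
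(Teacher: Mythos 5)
Your proposal is correct and follows essentially the same route as the paper: condition on the two inner hidden variables and factorize to get the bilocal form (your explicit computation of $f$, $g$ and their normalizations $\sum_a f = p(x)$, $\sum_c g = p(z)$ just spells out what the paper compresses into ``upon conditioning on $\lambda_{\T{A}\T{B}}$ and $\lambda_{\T{B}\T{C}}$''), take $\lambda_{\T{X}\T{A}}=x$, $\lambda_{\T{C}\T{Z}}=z$ for the converse, and run the two-ended steering argument for the quantum case. No gaps.
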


We abbreviate the proof a bit because it is completely analogous to the proof of Theorem~\ref{P4thm}.

\begin{proof}
\begin{asparaenum}
\item Suppose that $p$ is classical, i.e.~can be written in the form~(\ref{P5class}). Then, 
$$
p(a,b,c|x,z) = \int_{\lambda_{\T{A}\T{B}},\lambda_{\T{B}\T{C}}} p(a,b,c|x,z,\lambda_{\T{A}\T{B}},\lambda_{\T{B}\T{C}}) p(\lambda_{\T{A}\T{B}}) p(\lambda_{\T{B}\T{C}}) .
$$
Upon conditioning on $\lambda_{\T{A}\T{B}}$ and $\lambda_{\T{B}\T{C}}$, we have
$$
p(a,b,c|x,z,\lambda_{\T{A}\T{B}},\lambda_{\T{B}\T{C}}) = p(a|x,\lambda_{\T{A}\T{B}}) p(b|\lambda_{\T{A}\T{B}},\lambda_{\T{B}\T{C}}) p(c|\lambda_{\T{B}\T{C}},z),
$$
and therefore,
$$
p(a,b,c|x,z) = \int_{\lambda_{\T{A}\T{B}},\lambda_{\T{B}\T{C}}} p(a|x,\lambda_{\T{A}\T{B}}) p(b|\lambda_{\T{A}\T{B}},\lambda_{\T{B}\T{C}}) p(c|\lambda_{\T{B}\T{C}},z) p(\lambda_{\T{A}\T{B}}) p(\lambda_{\T{B}\T{C}}),
$$
which is the standard representation of a classical correlation in the BPG scenario~\cite{BGP}. Conversely, upon starting from such a representation, one can again take $\lambda_{\T{X}\T{A}}=x$ and $\lambda_{\T{B}\T{Y}}=y$, and~(\ref{P5class}) also holds.
\item We start with a quantum correlation $p(a,b,c|x,z)$. Upon applying the same steering argument as in the proof of Theorem~\ref{P4thm}, we may assume, in the obvious notation,
\begin{align}
\begin{split}
\label{P5quantum}
p(a,b,& c|x,z) = \\
& \left( \langle\chi_x|\otimes \langle\psi_{\T{AB}}| \otimes \langle\psi_{\T{BC}}| \otimes \langle\mu_y| \right) \left(A_a\otimes B_b\otimes C_c\right)  \left( |\chi_x\rangle\otimes |\psi_{\T{AB}}\rangle \otimes |\psi_{\T{BC}}\rangle \otimes |\mu_y\rangle \right) ,
\end{split}
\end{align}
$$
\begin{tikzpicture}
\node at (-4.1,0) {$p(a,b,c|x,z) = $} ;
\draw[black,fill=black!20] (3.2,-.5) rectangle (5.2,.5) ;
\draw[black,fill=black!20] (.4,-.5) rectangle (2.4,.5) ;
\draw[black,fill=black!20] (-.4,-.5) rectangle (-2.4,.5) ;
\node at (4.2,0) {$C_c$} ;
\node at (1.4,0) {$B_b$} ;
\node at (-1.4,0) {$A_a$} ;
\draw[black,fill=black!20] (-1.3,-1) -- (1.3,-1) -- (0,-2) -- cycle;
\node at (0,-1.4) {$\psi_{\T{AB}}$} ;
\draw[black,fill=black!20] (-1.3,1) -- (1.3,1) -- (0,2) -- cycle;
\node at (0,1.4) {$\psi_{\T{AB}}$} ;
\draw[black,fill=black!20] (1.5,-1) -- (4.1,-1) -- (2.8,-2) -- cycle;
\node at (2.8,-1.4) {$\psi_{\T{BC}}$} ;
\draw[black,fill=black!20] (1.5,1) -- (4.1,1) -- (2.8,2) -- cycle;
\node at (2.8,1.4) {$\psi_{\T{BC}}$} ;
\draw[black,fill=black!20] (-1.7,-1) -- (-2.5,-1) -- (-2.1,-2) -- cycle;
\node at (-2.1,-1.4) {$\chi_x$} ;
\draw[black,fill=black!20] (-1.7,1) -- (-2.5,1) -- (-2.1,2) -- cycle;
\node at (-2.1,1.4) {$\chi_x$} ;
\draw[black,fill=black!20] (4.5,1) -- (5.3,1) -- (4.9,2) -- cycle;
\node at (4.9,1.4) {$\zeta_z$} ;
\draw[black,fill=black!20] (4.5,-1) -- (5.3,-1) -- (4.9,-2) -- cycle;
\node at (4.9,-1.4) {$\zeta_z$} ;
\draw (4.9,-1) -- (4.9,-.5) ;
\draw (4.9,1) -- (4.9,.5) ;
\draw (-2.1,-1) -- (-2.1,-.5) ;
\draw (-2.1,1) -- (-2.1,.5) ;
\draw (.6,1) -- (.6,.5) ;
\draw (.6,-1) -- (.6,-.5) ;
\draw (-.6,1) -- (-.6,.5) ;
\draw (-.6,-1) -- (-.6,-.5) ;
\draw (3.4,1) -- (3.4,.5) ;
\draw (3.4,-1) -- (3.4,-.5) ;
\draw (2.2,1) -- (2.2,.5) ;
\draw (2.2,-1) -- (2.2,-.5) ;
\draw[red,dashed] plot [smooth cycle,tension=.3] coordinates { (-1.4,-0.8) (-1.6,-2.2) (-2.7,-2.1) (-2.7,2.1) (-1.6,2.2) (-1.4,0.8) (-0.2,0.7) (-0.2,-0.7) } ;
\draw[red,dashed] plot [smooth cycle,tension=.3] coordinates { (4.2,-0.8) (4.4,-2.2) (5.5,-2.1) (5.5,2.1) (4.4,2.2) (4.2,0.8) (3.0,0.7) (3.0,-0.7) } ;
\end{tikzpicture}
$$
Here, the dashed line indicates how to consider $A^x_a\equiv\langle\chi_x|A_a|\chi_x\rangle$, respectively $C^z_c\equiv\langle\zeta_z|C_c|\zeta_z\rangle$, as operators acting on one part of the bipartite state $|\psi_{\T{AB}}\rangle$, respectively $|\psi_{\T{BC}}\rangle$. By $\sum_a A_a=\mathbbm{1}$ and normalization of $|\chi_x\rangle$, it follows that $\sum_a A^x_a=\mathbbm{1}$ for all $x$; similarly, $\sum_z C^z_c=\mathbbm{1}$ for all $z$. By definition,~(\ref{P4quantum}) can then be written as
\beq
\label{BGPquantum}
p(a,b,c|x,z) = \left(\langle\psi_{\T{AB}}| \otimes \langle\psi_{\T{BC}}|\right)\left(A^x_a\otimes B_b\otimes C^z_c\right)\left(|\psi_{\T{AB}}\rangle\otimes|\psi_{\T{BC}}\rangle\right) .
\eeq
This is desired quantum representation of $p$ in a BGP scenario.

Conversely, we start from a correlation $p(a,b,c,x,z)$ of the form~(\ref{BGPquantum}). As sources between \T{A} and \T{X} and between \T{C} and \T{Z}, we again take hidden variables defined by $\lambda_{\T{X}\T{A}}=x$ and $\lambda_{\T{C}\T{Z}}=z$; again, the protocol of \T{X} and \T{Z} is simply to announce the values of these variables as their outcome. Only the sources between \T{A} and \T{B} and between \T{B} and \T{C} are taken to be quantum and produce, respectively, the bipartite states $|\psi_{\T{AB}}\rangle$ and $|\psi_{\T{BC}}\rangle$ of~(\ref{BGPquantum}). The measurement protocol conducted by \T{A} is similar to above: measure $\lambda_{\T{X}\T{A}}$, use the result as the choice of setting for the subsequent measurement on $|\psi_{\T{AB}}\rangle$, and then announce both outcomes as the total outcome. This protocol can be interpreted as measuring a single POVM given by
$$
|x\rangle\langle x|\otimes A^x_a \mapsto (x,a),
$$
where the left-hand side is a POVM element indexed by $x$ and $a$, and the right-hand side denotes the resulting outcome announced by \T{A}. The analogous POVM is measured by \T{C}. By construction, this reproduces both the desired conditional distribution~(\ref{BGPquantum}) and the marginal distribution $p(x,z)=p(x)p(z)$, and therefore also the whole distribution $p(x,a,b,c,z)$.
\end{asparaenum}
\end{proof}

Due to this theorem, we can regard $P_5$ as the analogue of the BGP scenario within our formalism.

However, this is not yet the end of the story; our new point of view provides more than just a reformulation of familiar things. Let us imagine that party \T{Z}, in the $P_5$ scenario, has failed to collect data. Or that we disregard \T{Z}'s measurement for some other reason. Then, we can regard the remaining parties \T{X}, \T{A}, \T{B}, \T{C} as forming a $P_4$ scenario and apply Theorem~\ref{P4thm} to the distribution $p(x,a,b,c)$, with $c$ now playing the role of $y$. In this way, the $P_4$ scenario is a natural \emph{subscenario} of $P_5$. This is an observation which does not make sense in the standard formalism.

\subsection*{The triangle scenario $C_3$}
\label{triangle}

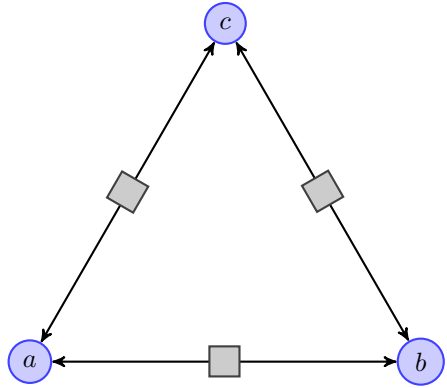
\begin{figure}
\begin{tikzpicture}[node distance=2.1cm,>=stealth',thick]
\tikzstyle{place}=[circle,thick,draw=blue!75,fill=blue!20,minimum size=4mm]
\tikzstyle{transition}=[rectangle,thick,draw=black!75,fill=black!20,minimum size=4.0mm]
\node[place] (a) at (210:3) {$a$} ;
\node[place] (b) at (330:3) {$b$} ;
\node[place] (c) at (90:3) {$c$} ;
\draw[<->] (b) -- (c) node [midway,sloped,above=-6pt,transition] {} ;
\draw[<->] (a) -- (b) node [midway,sloped,above=-6pt,transition] {} ;
\draw[<->] (a) -- (c) node [midway,sloped,above=-6pt,transition] {} ;
\end{tikzpicture}
\caption{The correlation scenario $C_3$.}
\label{trianglefig}
\end{figure}

Our next example, first proposed in~\cite{BRGP}*{Sec.VI}, is the correlation scenario illustrated in Figure~\ref{trianglefig}. It consists of three parties of which each two share a common source. We will see in Corollary~\ref{C3simplest} that it is the smallest scenario in which non-classical correlations exist. In this subsection, we prove the existence of non-classical quantum correlations in $C_3$.

We find this scenario especially appealing both due to its symmetry and due to its appearance in the study of inference of common ancestors~\cite{SA}; see below. Since the main ideas concerning correlation scenarios should already have become clear in the last two examples, we now increase the pace a bit.

\begin{defn}
\label{C3corrdef}
A correlation in $C_3$ is a distribution $p(a,b,c)$. (It is not required to satisfy any particular constraint.)
\end{defn}

This definition seems reasonable to us since, in general, one cannot expect any two of the variables $(a,b,c)$ to be independent.

\begin{expl}
\label{pc}
If all three variables take values in $\{0,1\}$, then 
$$
p(a=b=c=0)=\tfrac{1}{2},\qquad p(a=b=c=1)=\tfrac{1}{2}
$$
defines a correlation. We call this the \emph{perfect correlation} since all three variables are random, but perfectly correlated.
\end{expl}

\begin{defn}
\label{C3classdef}
A correlation $p(a,b,c)$ is classical in the $C_3$ scenario if and only if it can be written in the form
\beq
\label{C3lhv}
p(a,b,c) = \int_{\lambda_{\T{A}\T{B}},\lambda_{\T{B}\T{C}},\lambda_{\T{C}\T{A}}} p(a|\lambda_{\T{C}\T{A}},\lambda_{\T{A}\T{B}}) p(b|\lambda_{\T{A}\T{B}},\lambda_{\T{B}\T{C}}) p(c|\lambda_{\T{B}\T{C}},\lambda_{\T{C}\T{A}}) p(\lambda_{\T{A}\T{B}}) p(\lambda_{\T{B}\T{C}}) p(\lambda_{\T{C}\T{A}})
\eeq
for appropriate (conditional) distributions $p(a|\lambda_{\T{C}\T{A}}$, $\lambda_{\T{A}\T{B}})$, $p(b|\lambda_{\T{A}\T{B}}$, $\lambda_{\T{B}\T{C}})$, $p(c|\lambda_{\T{B}\T{C}},\lambda_{\T{C}\T{A}})$, $p(\lambda_{\T{A}\T{B}})$, $p(\lambda_{\T{B}\T{C}})$, $p(\lambda_{\T{C}\T{A}})$.
\end{defn}

Classical correlations in $C_3$ are monogamous in the following sense:

\begin{prop}
\label{monogamy}
Let $p(a,b,c)$ be classical. If $p(a=c)=1$, then  $a$ is independent of $\lambda_{\T{A}\T{B}}$.
\end{prop}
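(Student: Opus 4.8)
The plan is to reduce to deterministic hidden-variable models via Remark~\ref{wlogdet} and then exploit the independence of the three sources together with the constraint $p(a=c)=1$. The whole argument hinges on the fact that, in the representation~\eqref{C3lhv}, the outcome $a$ touches the sources $\lambda_{\T{C}\T{A}},\lambda_{\T{A}\T{B}}$ while $c$ touches $\lambda_{\T{B}\T{C}},\lambda_{\T{C}\T{A}}$, so the only source they share is $\lambda_{\T{C}\T{A}}$.

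First I would invoke Remark~\ref{wlogdet} to assume, without loss of generality, that the outcomes are deterministic functions of the sources they touch, namely $a=a(\lambda_{\T{C}\T{A}},\lambda_{\T{A}\T{B}})$, $b=b(\lambda_{\T{A}\T{B}},\lambda_{\T{B}\T{C}})$, and $c=c(\lambda_{\T{B}\T{C}},\lambda_{\T{C}\T{A}})$. I would also discard every value of each hidden variable that occurs with probability zero, so that the product distribution $p(\lambda_{\T{A}\T{B}})p(\lambda_{\T{B}\T{C}})p(\lambda_{\T{C}\T{A}})$ assigns positive weight to each point of the remaining product support. The hypothesis then translates cleanly: since $p(a\neq c)=0$ and every surviving triple carries strictly positive weight, the identity
$$
a(\lambda_{\T{C}\T{A}},\lambda_{\T{A}\T{B}}) = c(\lambda_{\T{B}\T{C}},\lambda_{\T{C}\T{A}})
$$
must hold for \emph{every} triple $(\lambda_{\T{A}\T{B}},\lambda_{\T{B}\T{C}},\lambda_{\T{C}\T{A}})$ in the support. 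The crucial observation is that the left-hand side does not depend on $\lambda_{\T{B}\T{C}}$, while the right-hand side does not depend on $\lambda_{\T{A}\T{B}}$.

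The core step is a \emph{reference-value} trick. Fix $\lambda_{\T{C}\T{A}}$ and choose any single value $\lambda_{\T{B}\T{C}}^{0}$ in the support of $\lambda_{\T{B}\T{C}}$. For every $\lambda_{\T{A}\T{B}}$ the displayed identity gives $a(\lambda_{\T{C}\T{A}},\lambda_{\T{A}\T{B}})=c(\lambda_{\T{B}\T{C}}^{0},\lambda_{\T{C}\T{A}})$, whose right-hand side is a constant independent of $\lambda_{\T{A}\T{B}}$. Hence, for each fixed $\lambda_{\T{C}\T{A}}$, the map $\lambda_{\T{A}\T{B}}\mapsto a(\lambda_{\T{C}\T{A}},\lambda_{\T{A}\T{B}})$ is constant, so $a$ is in fact a function of $\lambda_{\T{C}\T{A}}$ alone. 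Because $\lambda_{\T{C}\T{A}}$ is independent of $\lambda_{\T{A}\T{B}}$ by assumption~\ref{IS}, any function of $\lambda_{\T{C}\T{A}}$ is independent of $\lambda_{\T{A}\T{B}}$, which is exactly the claim.

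I expect the only genuine obstacle to be making this rigorous when the hidden variables are not discrete. There the phrases ``full support'' and ``fix a value $\lambda_{\T{B}\T{C}}^{0}$'' must be replaced by statements holding almost everywhere, and the reference-value trick becomes an application of Fubini's theorem to the null set $\{a\neq c\}$: one shows that for almost every $\lambda_{\T{C}\T{A}}$ the conditional law makes $a$ almost surely constant in $\lambda_{\T{A}\T{B}}$. This is precisely the measure-theoretic upgrade deferred to Appendix~\ref{app}, and the discrete argument above is a faithful blueprint for it.
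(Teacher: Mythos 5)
Your argument is correct, but it is genuinely different from the paper's. The paper derives Proposition~\ref{monogamy} as a byproduct of the entropic Lemma~\ref{C3ei}: using the data processing inequality (since $a$ and $b$ are conditionally independent given $\lambda_{\T{A}\T{B}}$, and likewise for $a,c$ and $\lambda_{\T{C}\T{A}}$), submodularity of Shannon entropy, and $I(\lambda_{\T{A}\T{B}}:\lambda_{\T{C}\T{A}})=0$, it obtains the chain $I(a:b)+I(a:c)\leq I(a:\lambda_{\T{A}\T{B}})+I(a:\lambda_{\T{C}\T{A}})\leq H(a)$; the hypothesis $p(a=c)=1$ forces $I(a:c)=H(a)$, which saturates the chain and leaves $I(a:\lambda_{\T{A}\T{B}})=0$. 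Your route is instead combinatorial: determinize via Remark~\ref{wlogdet}, use independence of sources to see that the support of the joint hidden-variable distribution is a product, read off the pointwise identity $a(\lambda_{\T{C}\T{A}},\lambda_{\T{A}\T{B}})=c(\lambda_{\T{B}\T{C}},\lambda_{\T{C}\T{A}})$ on that product, and conclude by the reference-value trick that $a$ is a function of $\lambda_{\T{C}\T{A}}$ alone. This is more elementary (no entropy needed) and is closer in spirit to the Hardy-type support argument of Theorem~\ref{C4thm}; what it does \emph{not} buy you is the quantitative inequality~(\ref{ei}), which the paper needs elsewhere (to exclude the perfect correlation of Example~\ref{pc} and to sharpen the Steudel--Ay bound), whereas the paper gets the proposition and the inequality from one computation. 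Two small points you should make explicit: first, determinization replaces $\lambda_{\T{A}\T{B}}$ by an enlarged variable $\lambda_{\T{A}\T{B}}'=(\lambda_{\T{A}\T{B}},r)$, and since the proposition concerns the \emph{given} model's $\lambda_{\T{A}\T{B}}$ you should either note that independence of $a$ from $\lambda_{\T{A}\T{B}}'$ implies independence from its marginal $\lambda_{\T{A}\T{B}}$, or attach the auxiliary randomness for $\T{A}$ to $\lambda_{\T{C}\T{A}}$ so that $\lambda_{\T{A}\T{B}}$ is untouched; second, your measure-theoretic upgrade via Fubini on the null set $\{a\neq c\}$ is plausible but would require its own appendix-level treatment, distinct from the one the paper actually gives in~\ref{eigenproof}, which instead redoes the entropic estimates for general probability spaces.
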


Intuitively, this is because in order to create these perfect correlations between $a$ and $c$, the outcome $a$ cannot depend on $\lambda_{\T{A}\T{B}}$. In particular, this implies that there cannot be any correlations between $a$ and $b$. Rigorously, the proof technique is the same as the one used in the proof of this inequality relating Shannon entropy and mutual information, which can be regarded as a monogamy inequality:

\begin{lem}
\label{C3ei}
Let $p(a,b,c)$ be classical. Then
\beq
\label{ei}
I(a:b) + I(a:c) \leq H(a) .
\eeq
\end{lem}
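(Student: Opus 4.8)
The plan is to reduce everything to elementary manipulations of Shannon entropy, exploiting two features of the $C_3$ decomposition~\eqref{C3lhv}: the three sources $\lambda_{\T{A}\T{B}}$, $\lambda_{\T{B}\T{C}}$, $\lambda_{\T{C}\T{A}}$ are mutually independent, and (by Remark~\ref{wlogdet}) each outcome may be taken to be a deterministic function of its two adjacent sources, so that $a=a(\lambda_{\T{C}\T{A}},\lambda_{\T{A}\T{B}})$, $b=b(\lambda_{\T{A}\T{B}},\lambda_{\T{B}\T{C}})$, $c=c(\lambda_{\T{B}\T{C}},\lambda_{\T{C}\T{A}})$. The heuristic to make rigorous is that every bit of correlation between $a$ and $b$ must be mediated by their \emph{only} shared source $\lambda_{\T{A}\T{B}}$, and likewise every bit shared by $a$ and $c$ must pass through $\lambda_{\T{C}\T{A}}$; since $a$ is determined by the single pair $(\lambda_{\T{A}\T{B}},\lambda_{\T{C}\T{A}})$, these two amounts of information cannot together exceed $H(a)$.

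First I would establish the two Markov chains $a-\lambda_{\T{A}\T{B}}-b$ and $a-\lambda_{\T{C}\T{A}}-c$. For the first: once we condition on $\lambda_{\T{A}\T{B}}$, the outcome $a$ is a function of $\lambda_{\T{C}\T{A}}$ alone and $b$ is a function of $\lambda_{\T{B}\T{C}}$ alone, and $\lambda_{\T{C}\T{A}}$, $\lambda_{\T{B}\T{C}}$ remain independent; hence $a$ and $b$ are conditionally independent given $\lambda_{\T{A}\T{B}}$. The data-processing inequality then gives $I(a:b)\le I(a:\lambda_{\T{A}\T{B}})$, and symmetrically $I(a:c)\le I(a:\lambda_{\T{C}\T{A}})$. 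Thus it remains to prove
\[
I(a:\lambda_{\T{A}\T{B}}) + I(a:\lambda_{\T{C}\T{A}}) \le H(a).
\]

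To finish I would expand $I\!\left(a:(\lambda_{\T{A}\T{B}},\lambda_{\T{C}\T{A}})\right)$ by the chain rule as $I(a:\lambda_{\T{A}\T{B}}) + I(a:\lambda_{\T{C}\T{A}}\mid\lambda_{\T{A}\T{B}})$, and observe that this quantity equals $H(a)$: since $a$ is a deterministic function of the pair $(\lambda_{\T{A}\T{B}},\lambda_{\T{C}\T{A}})$ we have $H(a\mid\lambda_{\T{A}\T{B}},\lambda_{\T{C}\T{A}})=0$. Comparing with the displayed inequality, the whole proof collapses to the single claim $I(a:\lambda_{\T{C}\T{A}})\le I(a:\lambda_{\T{C}\T{A}}\mid\lambda_{\T{A}\T{B}})$.

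This last claim is where the main obstacle lies, because conditioning does \emph{not} increase mutual information in general; it does so here only by virtue of the independence of the two sources. I would prove it through the symmetric co-information identity
\[
I(a:\lambda_{\T{C}\T{A}}\mid\lambda_{\T{A}\T{B}}) - I(a:\lambda_{\T{C}\T{A}}) = I(\lambda_{\T{A}\T{B}}:\lambda_{\T{C}\T{A}}\mid a) - I(\lambda_{\T{A}\T{B}}:\lambda_{\T{C}\T{A}}),
\]
whose right-hand side is nonnegative: the subtracted term $I(\lambda_{\T{A}\T{B}}:\lambda_{\T{C}\T{A}})$ vanishes because the sources are independent, while $I(\lambda_{\T{A}\T{B}}:\lambda_{\T{C}\T{A}}\mid a)\ge 0$ always. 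Chaining the inequalities then yields $I(a:b)+I(a:c)\le I(a:\lambda_{\T{A}\T{B}})+I(a:\lambda_{\T{C}\T{A}})\le I(a:\lambda_{\T{A}\T{B}})+I(a:\lambda_{\T{C}\T{A}}\mid\lambda_{\T{A}\T{B}})=H(a)$, as desired. In the non-discrete case one replaces the sums by the integrals of Appendix~\ref{app}, but the information-theoretic identities used are unchanged.
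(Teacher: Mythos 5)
Your proposal is correct and follows essentially the same route as the paper: the data-processing step via the Markov chains $a-\lambda_{\T{A}\T{B}}-b$ and $a-\lambda_{\T{C}\T{A}}-c$ is identical, and your co-information identity combined with $I(\lambda_{\T{A}\T{B}}:\lambda_{\T{C}\T{A}})=0$ is just a repackaging of the paper's submodularity step $H(a\lambda_{\T{A}\T{B}})+H(a\lambda_{\T{C}\T{A}})\geq H(a)+H(a\lambda_{\T{A}\T{B}}\lambda_{\T{C}\T{A}})$, i.e.\ of $I(\lambda_{\T{A}\T{B}}:\lambda_{\T{C}\T{A}}\mid a)\geq 0$. (The appeal to Remark~\ref{wlogdet} is superfluous: you only need $H(a\mid\lambda_{\T{A}\T{B}},\lambda_{\T{C}\T{A}})\geq 0$, not $=0$.)
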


The interpretation of this is a kind of monogamy: $a$ can share strong correlations with only $b$ or $c$, but not with both. In particular, this inequality shows that the perfect correlation of Example~\ref{pc} is not classical.

\begin{proof}[Proof of Proposition~\ref{monogamy} and Lemma~\ref{C3ei}]
The present proof concerns the case that the hidden variables are discrete; see~\ref{eigenproof} for the general case.

Since $a$ and $b$ are conditionally independent given $\lambda_{\T{A}\T{B}}$, and similarly for $a$ and $c$, the data processing inequality can be used to bound the left-hand side of\eq{ei} by
$$
I(a:b) + I(a:c) \leq I(a:\lambda_{\T{A}\T{B}}) + I(a:\lambda_{\T{C}\T{A}}) = 2H(a) + H(\lambda_{\T{A}\T{B}}) + H(\lambda_{\T{C}\T{A}}) - H(a\lambda_{\T{A}\T{B}}) - H(a\lambda_{\T{C}\T{A}}).
$$
Submodularity of Shannon entropy guarantees that $H(a\lambda_{\T{A}\T{B}}) + H(a\lambda_{\T{C}\T{A}}) \geq H(a) + H(a\lambda_{\T{A}\T{B}}\lambda_{\T{C}\T{A}})$, which can be applied here to obtain
$$
I(a:b) + I(a:c) \leq H(a) + H(\lambda_{\T{A}\T{B}}) + H(\lambda_{\T{C}\T{A}}) - H(a\lambda_{\T{A}\T{B}}\lambda_{\T{C}\T{A}}) \leq H(a) + I(\lambda_{\T{A}\T{B}}:\lambda_{\T{C}\T{A}}) .
$$
Since $I(\lambda_{\T{A}\T{B}}:\lambda_{\T{C}\T{A}})=0$, the claim of Lemma~\ref{C3ei} follows.

Concerning Proposition~\ref{monogamy}, its assumption implies $I(a:c)=H(a)$; the sequence of inequalities derived in this proof then guarantees that $I(a:\lambda_{\T{A}\T{B}})=0$, as was to be shown.
\end{proof}

\begin{cor}
\label{moncor}
Let $p(a,b,c)$ be classical and $f,g$ functions such that $f(a)$ and $g(c)$ are defined. If $p\left(f(a)=g(c)\right)=1$, then $f(a)$ and $g(c)$ are independent of $\lambda_{\T{A}\T{B}}$.
\end{cor}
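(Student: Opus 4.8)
The plan is to reduce the Corollary to Proposition~\ref{monogamy} by a coarse-graining argument. The only genuinely new feature is that we compare the functions $f(a)$ and $g(c)$ rather than the raw outcomes $a$ and $c$, so I would first argue that passing to $f(a)$ and $g(c)$ produces a new classical correlation on the coarse-grained outcome alphabets, to which the Proposition applies verbatim.

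Concretely, set $a'\equiv f(a)$ and $c'\equiv g(c)$, and verify that the marginal $p(a',b,c')$ is again classical in $C_3$. Starting from the representation~\eqref{C3lhv} of $p(a,b,c)$ and summing over the fibers $f^{-1}(a')$ and $g^{-1}(c')$, one obtains
\beq
p(a',b,c') = \int_{\lambda_{\T{A}\T{B}},\lambda_{\T{B}\T{C}},\lambda_{\T{C}\T{A}}} p(a'|\lambda_{\T{C}\T{A}},\lambda_{\T{A}\T{B}}) \, p(b|\lambda_{\T{A}\T{B}},\lambda_{\T{B}\T{C}}) \, p(c'|\lambda_{\T{B}\T{C}},\lambda_{\T{C}\T{A}}) \, p(\lambda_{\T{A}\T{B}}) \, p(\lambda_{\T{B}\T{C}}) \, p(\lambda_{\T{C}\T{A}}),
\eeq
where the coarse-grained response functions are $p(a'|\lambda_{\T{C}\T{A}},\lambda_{\T{A}\T{B}}) = \sum_{a\in f^{-1}(a')} p(a|\lambda_{\T{C}\T{A}},\lambda_{\T{A}\T{B}})$ and likewise for $c'$. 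The crucial point is that the \emph{same} hidden variables $\lambda_{\T{A}\T{B}},\lambda_{\T{B}\T{C}},\lambda_{\T{C}\T{A}}$, with the same distributions, appear in this new model; only the outcome-generating conditionals have changed, and each still depends exactly on the allowed pair of sources. Hence $p(a',b,c')$ is of the form~\eqref{C3lhv} and is classical.

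I would then apply Proposition~\ref{monogamy} directly to this classical correlation. The hypothesis $p(f(a)=g(c))=1$ is precisely $p(a'=c')=1$, so the Proposition yields that $a'=f(a)$ is independent of $\lambda_{\T{A}\T{B}}$. Finally, since $f(a)=g(c)$ holds with probability one, the random variable $g(c)$ agrees with $f(a)$ almost surely and is therefore independent of $\lambda_{\T{A}\T{B}}$ as well, which is the claim.

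I do not expect any real obstacle: the entire content lies in the observation that classicality is preserved under coarse-graining of outcomes while leaving the hidden variables untouched, after which the Corollary is an immediate instance of Proposition~\ref{monogamy}. The one point deserving a word of care is that $\lambda_{\T{A}\T{B}}$ is literally unchanged by the coarse-graining, so the Proposition's independence conclusion is a statement about the original hidden variable and not about some modified one.
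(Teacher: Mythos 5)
Your proposal is correct and follows essentially the same route as the paper: observe that $p(f(a),b,g(c))$ is again a classical correlation in $C_3$ (classicality is preserved under coarse-graining of outcomes, with the hidden variables untouched) and then invoke Proposition~\ref{monogamy}. The extra details you supply — the explicit fiber-summed response functions and the almost-sure identification of $g(c)$ with $f(a)$ — are exactly what the paper's terse two-line proof leaves implicit.
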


\begin{proof}
The assumptions imply that $p(f(a),b,g(c))$ is also a classical correlation in $C_3$. Now the claim follows from Proposition~\ref{monogamy}.
\end{proof}

\begin{thm}
\label{C3thm}
There exist non-classical quantum correlations in $C_3$.
\end{thm}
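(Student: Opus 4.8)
The plan is to reduce the existence of non-classical quantum correlations in $C_3$ to the existence of Bell-inequality violations in an ordinary bipartite scenario, by embedding a CHSH experiment into the triangle in such a way that the two measurement settings are broadcast as part of the outcomes. Concretely, I would let the source $\lambda_{\T{C}\T{A}}$ carry a maximally entangled two-qubit state shared between \T{A} and \T{C}, while the sources $\lambda_{\T{A}\T{B}}$ and $\lambda_{\T{B}\T{C}}$ carry uniformly random classical bits $s$ and $t$ (legitimate quantum sources, being diagonal states). Party \T{A} reads off $s$, measures her half of $\lambda_{\T{C}\T{A}}$ with the CHSH setting labelled by $s$ to obtain an outcome $\alpha$, and outputs $a=(s,\alpha)$; symmetrically \T{C} outputs $c=(t,\gamma)$; and \T{B}, who sees both $s$ and $t$, simply outputs $b=(s,t)$. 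Choosing the standard CHSH-optimal measurements, the conditional distribution $p(\alpha,\gamma\mid s,t)$ violates the CHSH inequality, up to Tsirelson's bound~\cite{Tsirelson}. Since classical sources and read-off measurements are special cases of the quantum resources of Definition~\ref{gendefq}, the resulting $p(a,b,c)$ is manifestly a quantum correlation in $C_3$.

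It then remains to show that this correlation is not classical. Suppose it were; by Remark~\ref{wlogdet} we may take all three outcomes to be deterministic functions of the sources, say $a=a(\lambda_{\T{C}\T{A}},\lambda_{\T{A}\T{B}})$, $b=b(\lambda_{\T{A}\T{B}},\lambda_{\T{B}\T{C}})$ and $c=c(\lambda_{\T{B}\T{C}},\lambda_{\T{C}\T{A}})$. The key structural consequence is that the broadcast setting $s$ must be a function of $\lambda_{\T{A}\T{B}}$ alone: the $s$-component of $a$ (a function of $\lambda_{\T{C}\T{A}},\lambda_{\T{A}\T{B}}$) equals the $s$-component of $b$ (a function of $\lambda_{\T{A}\T{B}},\lambda_{\T{B}\T{C}}$) with probability one, and since $\lambda_{\T{C}\T{A}}$ and $\lambda_{\T{B}\T{C}}$ are independent of each other and of $\lambda_{\T{A}\T{B}}$ and range freely, both expressions must be constant in those two variables. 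This is exactly the monogamy argument of Proposition~\ref{monogamy} and Corollary~\ref{moncor}, applied here to the pair $(\T{A},\T{B})$ sharing $\lambda_{\T{A}\T{B}}$; symmetrically, $t$ is a function of $\lambda_{\T{B}\T{C}}$ alone.

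With this in hand, I would condition on a fixed announced pair $b=(s_0,t_0)$. Because $s_0$ constrains only $\lambda_{\T{A}\T{B}}$ and $t_0$ constrains only $\lambda_{\T{B}\T{C}}$, the three sources remain mutually independent after conditioning, with $\lambda_{\T{C}\T{A}}$ still distributed according to its original law. The residual dependence of $\alpha$ on $\lambda_{\T{A}\T{B}}$ and of $\gamma$ on $\lambda_{\T{B}\T{C}}$ becomes ordinary local randomness, so one obtains
\[
p(\alpha,\gamma\mid s_0,t_0) = \int_{\lambda_{\T{C}\T{A}}} p(\alpha\mid s_0,\lambda_{\T{C}\T{A}})\,p(\gamma\mid t_0,\lambda_{\T{C}\T{A}})\,p(\lambda_{\T{C}\T{A}}).
\]
This is a local hidden-variable model for the bipartite distribution $p(\alpha,\gamma\mid s,t)$, with hidden variable $\lambda_{\T{C}\T{A}}$ and settings $s,t$, so $p(\alpha,\gamma\mid s,t)$ would have to satisfy every Bell inequality --- contradicting its CHSH violation. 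Hence $p(a,b,c)$ is non-classical, which is what we want.

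I expect the main obstacle to be the conditioning step and its rigorous justification: one must argue carefully that conditioning on the \emph{announced} settings $(s_0,t_0)$ preserves the independence of $\lambda_{\T{C}\T{A}}$ from the conditioned local randomness carried by the other two sources, which is precisely what lets the bipartite factorization survive. This in turn hinges on the \emph{functional} (not merely statistical) dependence $s=s(\lambda_{\T{A}\T{B}})$ and $t=t(\lambda_{\T{B}\T{C}})$ established above, so the monogamy step must be phrased to yield functional rather than just distributional consequences. For general, non-discrete hidden variables these manipulations require the measure-theoretic machinery of Appendix~\ref{app}, but the discrete case already captures all the essential ideas.
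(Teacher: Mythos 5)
Your proposal is correct and is essentially the paper's own proof up to a relabelling of the triangle: the paper places the entangled pair on the \T{A}--\T{B} edge and lets \T{C} receive (and announce) both settings via the other two sources, whereas you place it on \T{C}--\T{A} and let \T{B} play the referee, which is the same construction by symmetry. Your non-classicality argument is likewise the paper's monogamy argument (Proposition~\ref{monogamy}/Corollary~\ref{moncor}), merely spelled out more explicitly as a functional statement followed by conditioning on the announced settings.
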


\begin{proof}
We take $|\psi\rangle$ to be a bipartite two-qubit state which violates the CHSH inequality~\cite{CHSH} with respect to measurements in the two bases $\{|\phi_0,\rangle,|\phi_1\rangle\}$, $\{|\omega_0,\rangle,|\omega_1\rangle\}$, which are the same for both parties.

The quantum correlations we consider in $C_3$ are obtained as follows. We take \T{A} and \T{B} to share $|\psi\rangle$, while \T{A} and \T{C} as well as \T{B} and \T{C} share either a maximally entangled state
\beq
\label{entstate}
\frac{|00\rangle + |11\rangle}{\sqrt{2}}
\eeq
or, equivalently, a classically correlated mixed state
\beq
\label{mixstate}
\frac{1}{2}\left(|00\rangle\langle 00| + |11\rangle\langle 11|\right)
\eeq
of two qubits. The purpose of these states is simple: it obsoletes free will in that \T{A} and \T{B} first measure the system they receive from the source shared with \T{C} in the $\{|0\rangle,|1\rangle\}$-basis and use the resulting outcome as a measurement setting on $|\psi\rangle$; this is similar to how the proofs of Theorems~\ref{P4thm} and~\ref{P5thm} work. \T{A} and \T{B} announce the outcomes of both measurements as their total outcome. Similarly, we take \T{C} to apply the $\{|0\rangle,|1\rangle\}$-measurement on each of his qubits, so that \T{C} knows the measurement ``setting'' used by \T{A} and \T{B}. He announces both of them as his outcome $c$. We regard the two bits announced by each party as the outcome of a single four-outcome measurement. The resulting correlation $p(a,b,c)$ is a probability distribution on $4^3$ outcomes which does not depend on whether\eq{entstate} or\eq{mixstate} is used.

More formally, we can define the measurements as follows: both \T{A} and \T{B} measure in the following basis and announce respective outcomes:
\begin{align*}
|0\rangle\langle 0| \otimes |\phi_0\rangle\langle\phi_0| \mapsto (0,0),\quad 
|0\rangle\langle 0| \otimes |\phi_1\rangle\langle\phi_1| \mapsto (0,1), \\
|1\rangle\langle 1| \otimes |\omega_0\rangle\langle\omega_0| \mapsto (1,0),\quad 
|1\rangle\langle 1| \otimes |\omega_1\rangle\langle\omega_1| \mapsto (1,1),
\end{align*}
while \T{C} simply measures both his qubits in the standard basis and announces both results. 

It needs to be proven that these correlations are non-classical in $C_3$. This is guaranteed by the monogamy property of Corollary~\ref{moncor}: since \T{C} has perfect information about the ``settings'' employed by \T{A} and \T{B}, these ``settings'' are necessarily indepedendent of $\lambda_{\T{A}\T{B}}$. This simulates the free will (``$\lambda$-independence'') required for a standard Bell test to apply. The hidden variable $\lambda_{\T{A}\T{B}}$ in any potential classical model would therefore have to function exactly like a hidden variable in a standard Bell scenario, which is guaranteed to be impossible due to the Bell inequality violation.
\end{proof}

These arguments apply in the same way to a construction of a non-classical quantum correlations from a Bell inequality violation in any bipartite Bell scenario.

Although this class of examples proves the theorem, we do not find such examples satisfying since they are again based on a Bell test in the standard sense. It is difficult to regard them as entirely new kinds of non-classicality. Nevertheless, we find it surprising that non-classical quantum correlations exist in $C_3$ even in the case when \emph{only one} of the sources produces entanglement. We had not expected this at all when we started thinking about the $C_3$ scenario.

\begin{prob}
Find an example of non-classical quantum correlations in $C_3$ together with a proof of its non-classicality which does not hinge on Bell's Theorem.
\end{prob}

In order to find more examples of non-classical quantum correlations in $C_3$, it would be helpful to have inequalities bounding the set of classical correlations and violated by some quantum correlations. Unfortunately, our proof of Theorem~\ref{C3thm} does provide inequalities only conditional on the perfect correlations required between \T{A} and \T{C} and between \T{B} and \T{C}. However, we expect that our idea can be used to derive unconditional inequalities, if one knows bounds on the maximal classical value of a Bell inequality as a function of the correlation between the measurement settings and the hidden variable. We expect that such bounds can be derived by considerations similar to those of~\cite{BG} and/or~\cite{CR} or may even be implicitly contained in these works.

Before moving on to the next example of a correlation scenario, we return briefly to the work of Steudel and Ay~\cite{SA} on the inference of common ancestors. So, what is a ``common ancestor''?

If one makes certain (say, real-world macroscopic) observations $a$ and $b$, repeats them many times in order to gather statistics, and detects a correlation between these, then one can conclude that $a$ and $b$ need to have a \emph{common ancestor}: there needs to be some quantity or property $\lambda$ such that both $a$ and $b$ depend on $\lambda$, and $\lambda$ is not deterministic; this includes the possibilities $\lambda=a$ and $\lambda=b$ as degenerate cases. This $\lambda$ is a common ancestor of $a$ and $b$ in the sense of a preexisting condition on which both $a$ and $b$ depend.

This is Reichenbach's principle of common cause~\cites{Reichenbach,Eber}; it is based on the premise that good models of the world adhere to assumption~\ref{IS} in the sense that a good model should predict $a$ and $b$ to be independent, unless there is some previously occurring event causally connected to both variables, i.e.~a common ancestor.

Now what if one does the same for three observations $a$, $b$, $c$? How can one conclude that there is a common ancestor $\lambda$ on which all three of them depend? Or for any number $n\in \N$ of observations? Among other things, it has been shown in~\cite{SA} that the entropy of the common ancestors is lower bounded by a certain linear combination of the joint entropy and the single-variable entropies; therefore, strict positivity of that linear combination witnesses the necessity of a common ancestor. See also~\cite{Ay2} for related work providing a generalization and quantification of Reichenbach's principle.

Let us consider the particular case of $n=3$ variables. Then the main observation is that the causal structure of the $C_3$ scenario is precisely the null hypothesis: if no common ancestor exists, then there can at most be common ancestors for every pair of variables, but not for all three variables together. Therefore, \emph{if no common ancestor exists, then $p(a,b,c)$ is a classical correlation in the $C_3$ scenario.} Figure~\ref{trianglefig} coincides with~\cite{SA}*{Figure~1}. The results of Steudel and Ay for this particular case state that if $p(a,b,c)$ is classical, then
\beq
\label{SAei}
H(a) + H(b) + H(c) \leq 2 H(abc),
\eeq
where $H(abc)$ is the entropy of the joint distribution. Intuitively, if this inequality is violated, then the joint entropy is relatively small in comparison to the single-variable entropies, implying the existence of strong correlations between the variables and therefore of a common ancestor. 

Writing out our inequality~(\ref{ei}) in terms of joint entropies, one obtains
$$
H(a) + H(b) + H(c) \leq H(ab) + H(ac),
$$
which is an improvement over~(\ref{SAei}) since the right-hand side is bounded by $2H(abc)$. In particular, a violation
\beq
\label{newei}
H(a) + H(b) + H(c) > H(ab) + H(ac)
\eeq
successfully witnesses the necessity of a common ancestor in strictly more cases than\eq{SAeiI}.

In the case of $n>3$ variables, it is still true that the null hypothesis of non-existence of a common cause corresponds to classicality in the appropriate correlation scenario: for the necessity of a common ancestor of some $(k+1)$-element subset of $n$ variables, the null hypothesis is that at most each $k$-tuple has common ancestor(s). Roughly speaking, it is enough to consider only those ancestors which themselves do not have any parents: all the randomness creation can be delegated to those without changing the distribution of the observed variables, while all other nodes then carry out deterministic information processing; compare Remark~\ref{wlogdet} and~\ref{appwlogdet}. Then each such initial node can be replaced by a source connecting to at most $k$ observed variables, and the deterministic information processing can as well delegated to the measurement nodes, again without changing the distribution of outcomes. Therefore, this corresponds to a classical correlation in the correlation scenario defined by $n$ measurements in which each $k$-tuple of measurements is allowed to share a source. Conversely, it is clear that every such classical correlation represents a joint distribution of $n$ variables which can be modelled without a common ancestor for any $(k+1)$-tuple. To summarize, the given joint distribution is a classical correlation in this scenario if and only if the joint distribution can be obtained from a Bayesian network in which no $(k+1)$-element subset of the given variables has a common ancestor.

However, at the moment we do not know how to generalize our inequality\eq{newei} to these cases, and refer once again to~\cite{SA} for the current state of the art.

\subsection*{The square scenario $C_4$.}

\begin{figure}
\centering
\begin{tikzpicture}[node distance=2.3cm,>=stealth',thick]
\tikzstyle{place}=[circle,thick,draw=blue!75,fill=blue!20,minimum size=4mm]
\tikzstyle{transition}=[rectangle,thick,draw=black!75,fill=black!20,minimum size=4.0mm]
\node[place] (a) at (0,0) {$a$} ;
\node[place] (x) [above=of a] {$x$} ;
\node[place] (b) [right=of a] {$b$} ;
\node[place] (y) [above=of b] {$y$} ;
\draw[<->] (x) -- (a) node [midway,above=-6pt,transition] {} ;
\draw[<->] (a) -- (b) node [midway,above=-6pt,transition] {} ;
\draw[<->] (y) -- (b) node [midway,above=-6pt,transition] {} ;
\draw[<->] (x) -- (y) node [midway,above=-6pt,transition] {} ;
\end{tikzpicture}
\caption{The correlation scenario $C_4$.}
\label{C4fig}
\end{figure}
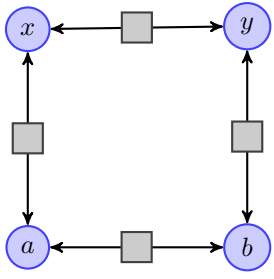

Another interesting correlation scenario is the \emph{square scenario} illustrated in Figure~\ref{C4fig}. In this case, the underlying graph is $C_4$, the cycle graph on four vertices. It can be regarded as $P_4$ (Figure~\ref{P4fig}) equipped with an additional source between \T{X} and \T{Y}. Along the lines of Theorem~\ref{P4thm}, this would suggest that correlations $p(a,b,x,y)$ in $C_4$ should be interpretable as arising from a Bell scenario together with correlations between the measurement settings. However, the forthcoming Proposition~\ref{C4prop} will show that this intuition is false.

\begin{defn}
\label{C4corrdef}
A \emph{correlation} $p$ in the $C_4$ scenario is a distribution $p(a,b,x,y)$ whose marginals factorize as
$$
p(a,y) = p(a)p(y),\qquad p(b,x) = p(b) p(x).
$$
\end{defn}

\begin{defn}
\label{C4classdef}
A correlation $p(a,b,x,y)$ is classical in the $C_4$ scenario if and only if it can be written in the form
\beq
\label{C4lhv}
p(a,b,x,y) = \int_{\lambda_{\T{A}\T{B}},\lambda_{\T{B}\T{Y}},\lambda_{\T{Y}\T{X}},\lambda_{\T{X}\T{A}}} p(a|\lambda_{\T{X}\T{A}},\lambda_{\T{A}\T{B}}) p(b|\lambda_{\T{A}\T{B}},\lambda_{\T{B}\T{Y}}) p(y|\lambda_{\T{B}\T{Y}},\lambda_{\T{Y}\T{X}}) p(x|\lambda_{\T{Y}\T{X}},\lambda_{\T{X}\T{A}}) p(\lambda_{\T{A}\T{B}}) p(\lambda_{\T{B}\T{Y}}) p(\lambda_{\T{Y}\T{X}}) p(\lambda_{\T{X}\T{A}})
\eeq
for appropriate (conditional) distributions $p(a|\lambda_{\T{X}\T{A}},\lambda_{\T{A}\T{B}})$, $p(b|\lambda_{\T{A}\T{B}},\lambda_{\T{B}\T{Y}})$, $p(y|\lambda_{\T{B}\T{Y}},\lambda_{\T{Y}\T{X}})$, $p(x|\lambda_{\T{Y}\T{X}},\lambda_{\T{X}\T{A}})$, $p(\lambda_{\T{A}\T{B}})$, $p(\lambda_{\T{B}\T{Y}})$, $p(\lambda_{\T{Y}\T{X}})$, $p(\lambda_{\T{X}\T{A}})$.
\end{defn}

\begin{prop}
\label{C4prop}
There are classical correlations $p(a,b,x,y)$ in the $C_4$ scenario such that the associated conditional distribution $p(a,b|x,y)$ is signaling.
\end{prop}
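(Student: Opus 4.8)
The plan is to prove the proposition constructively, by exhibiting a single explicit classical correlation whose conditional distribution $p(a,b|x,y)$ signals. The whole point is to exploit the one structural feature distinguishing $C_4$ from $P_4$, namely the extra source $\lambda_{\T{Y}\T{X}}$ shared between \T{X} and \T{Y}. The guiding intuition is that this source lets the outcome $y$ carry information about $\lambda_{\T{X}\T{A}}$ once $x$ is also known, even though $y$ and $a$ stay marginally independent; since $a$ is a function of $\lambda_{\T{X}\T{A}}$, learning $y$ in addition to $x$ then sharpens one's knowledge of $a$, which is precisely signaling from $y$ to $a$.

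Concretely, I would take all hidden variables to be bits and, by Remark~\ref{wlogdet}, all response functions to be deterministic. Let $\lambda_{\T{X}\T{A}}=\alpha$ and $\lambda_{\T{Y}\T{X}}=\beta$ be independent uniform bits, let $\lambda_{\T{A}\T{B}}$ and $\lambda_{\T{B}\T{Y}}$ be trivial, and set
$$
a=\alpha,\qquad x=\alpha\oplus\beta,\qquad y=\beta,\qquad b=0.
$$
Each assignment respects the graph --- $a$ depends only on $\lambda_{\T{X}\T{A}}$, $x$ on $(\lambda_{\T{Y}\T{X}},\lambda_{\T{X}\T{A}})$, $y$ on $\lambda_{\T{Y}\T{X}}$, and $b$ is constant --- so the resulting $p(a,b,x,y)$ has the classical form~\eqref{C4lhv}.

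I would then verify the two constraints of Definition~\ref{C4corrdef}: marginally $a=\alpha$ and $y=\beta$ are independent, so $p(a,y)=p(a)p(y)$, while $b$ is constant, so $p(b,x)=p(b)p(x)$ trivially; hence $p$ is a genuine correlation. Since $(\alpha,\beta)\mapsto(\alpha\oplus\beta,\beta)$ is a bijection, $(x,y)$ is uniform on $\{0,1\}^2$, so every setting pair has positive probability and $p(a,b|x,y)$ is well defined. For the signaling, note that $x=\alpha\oplus\beta$ and $y=\beta$ give $\alpha=x\oplus y$, hence $a=x\oplus y$ deterministically and $p(a|x,y)$ is the point mass at $x\oplus y$; averaging over the uniform $y$ yields $p(a|x)=\tfrac{1}{2}$, so $p(a|x,y)\neq p(a|x)$ and the conditional signals from $y$ to $a$.

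The only genuine subtlety --- and what I expect to be the crux --- is the apparent tension between the two requirements: Definition~\ref{C4corrdef} demands that $a$ and $y$ be \emph{marginally} independent, whereas signaling demands that they be \emph{conditionally} dependent given $x$. The XOR masking resolves this, because $\beta$ is hidden inside $x$ by the independent bit $\alpha$, so that $a$ and $y$ look independent until $x$ is revealed as well. The single check requiring care is confirming that this masking does not inadvertently reintroduce a marginal dependence between $a$ and $y$ (or between $b$ and $x$), which here follows immediately from the independence of $\alpha$ and $\beta$.
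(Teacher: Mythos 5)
Your proof is correct, and it takes a genuinely different route from the paper's. You give a single explicit deterministic model — independent uniform bits $\alpha=\lambda_{\T{X}\T{A}}$, $\beta=\lambda_{\T{Y}\T{X}}$ with $a=\alpha$, $x=\alpha\oplus\beta$, $y=\beta$, $b=0$ — and verify directly that the resulting $p$ satisfies Definition~\ref{C4corrdef} (marginal independence of $(a,y)$ and of $(b,x)$) while $p(a|x,y)$ is the point mass at $x\oplus y$, hence signaling. Every step checks out: the response functions respect the connectivity of $C_4$, $(x,y)$ is uniform so the conditional is well defined, and the XOR masking is exactly the right mechanism to reconcile marginal independence with conditional dependence. (Your appeal to Remark~\ref{wlogdet} is unnecessary, since deterministic response functions are in any case a special case of the probabilistic ones allowed in Definition~\ref{C4classdef}, but this is harmless.) The paper instead argues abstractly: it takes an arbitrary classical $P_4$ correlation, applies the relabeling $a\leftrightarrow x$, $b\leftrightarrow y$ (a graph automorphism of $C_4$), and invokes the result of Coecke and Lal that the time reversal of some classical no-signaling box is necessarily signaling. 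Your construction is in fact an instance of that recipe (it is the time reversal of the local box in which Alice outputs $x\oplus\beta$ and Bob outputs the shared bit $\beta$), but your argument is self-contained and more elementary, buying a concrete minimal witness at the cost of not exhibiting the general time-reversal mechanism and its connection to the literature that the paper's proof highlights.
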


\begin{proof}
We start from any classical correlation $p_0(x,a,b,y)$ in the $P_4$ scenario. In particular, by Theorem~\ref{P4thm}, $p_0(a,b|x,y)$ does not violate any Bell inequality. We now apply the relabeling
$$
a\longleftrightarrow x,\qquad b\longleftrightarrow y
$$
and take the resulting correlation to be $p(a,b,x,y)$. By construction, the resulting correlation $p(a,b,x,y)$ is classical in $C_4$. By construction, $p(x,y|a,b)$ does not violate a Bell inequality. The conditional distribution
$$
p(a,b|x,y) = p(x,y|a,b) \cdot\frac{p(x,y)}{p(a)p(b)}
$$
then is precisely the time reversal, in the sense of Coecke and Lal~\cite{CL}, of the classical no-signaling box $p(x,y|a,b)$ with respect to $p(a,b)=p(a)p(b)$ as its distribution of settings. It was shown in~\cite{CL} that there exist $p_0(a,b|x,y)$ for which this time reversal is necessarily signaling.
\end{proof}

In particular, Proposition~\ref{C4prop} shows that the conditional distribution $p(a,b|x,y)$ associated to a classical correlation $p(a,b,x,y)$ in $C_4$ may \emph{violate Bell inequalities}.

Any classical (resp.~quantum) correlation in a bipartite Bell scenario can be turned into a classical (resp.~quantum) correlation in the $C_4$ scenario in four different ways: one of the four edges of $C_4$ needs to be designated as the Bell scenario's source, while the source corresponding to the opposite edge does nothing at all.

\begin{thm}
\label{C4thm}
There exist non-classical correlations in $C_4$.
\end{thm}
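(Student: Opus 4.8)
The plan is to argue by contradiction from a local hidden variable model, exploiting the cyclic structure through a Hardy-type paradox. First I would invoke Remark~\ref{wlogdet} to assume that any classical correlation in $C_4$ is given by deterministic response functions $a=f_{\T A}(\lambda_{\T{X}\T{A}},\lambda_{\T{A}\T{B}})$, $b=f_{\T B}(\lambda_{\T{A}\T{B}},\lambda_{\T{B}\T{Y}})$, $y=f_{\T Y}(\lambda_{\T{B}\T{Y}},\lambda_{\T{Y}\T{X}})$ and $x=f_{\T X}(\lambda_{\T{Y}\T{X}},\lambda_{\T{X}\T{A}})$, with the four hidden variables mutually independent as in~\eqref{C4lhv}. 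The single most useful consequence of independence is that the joint support of $(\lambda_{\T{X}\T{A}},\lambda_{\T{A}\T{B}},\lambda_{\T{B}\T{Y}},\lambda_{\T{Y}\T{X}})$ is the Cartesian product of the four marginal supports. Thus hidden-variable values witnessed in different runs may be freely recombined into a new admissible tuple, and this recombination is what plays, in a scenario with no measurement settings, the role that free choice of settings plays in the usual Hardy argument.

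Next I would exhibit an explicit target distribution $p_0(a,b,x,y)$ on binary outcomes, specified possibilistically: a short list of outcome patterns declared to have probability zero (\emph{forbidden}), together with one or more patterns declared to have positive probability (\emph{required}). Two things must be checked about $p_0$ itself: that it satisfies the correlation constraints $p_0(a,y)=p_0(a)p_0(y)$ and $p_0(b,x)=p_0(b)p_0(x)$ of Definition~\ref{C4corrdef}, and that it is realized by \emph{some} distribution at all --- for which it suffices to write down one explicit (in general signaling) model, so that $p_0$ is a bona fide correlation and the theorem is not vacuous. The key restriction here is that the independence constraints concern precisely the diagonal pairs $(a,y)$ and $(b,x)$; hence no forbidden pattern may exhaust an entire cell $\{a=i,\,y=j\}$ or $\{b=k,\,x=l\}$ with $i,j,k,l$ of positive marginal probability, since that would force a diagonal marginal to degenerate and trivialize $p_0$.

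The core of the proof is then the Hardy contradiction. Assuming $p_0$ classical, I would pick hidden tuples realizing the required patterns, restrict each hidden variable to the finite set of values so witnessed, and examine the induced response functions on this product domain. Because the domain is a product, every recombination of the witnessing values is again admissible and its induced pattern must avoid the forbidden list; reading off the forced values of $f_{\T A},f_{\T B},f_{\T Y},f_{\T X}$ and propagating them around the $4$-cycle $\T{X}\to\T{A}\to\T{B}\to\T{Y}\to\T{X}$ should close up into an inconsistency. Here lies the main obstacle, and it is genuinely sharper than in the bipartite case: the only two \emph{balanced} recombinations --- splitting the hidden variables as $\{\lambda_{\T{X}\T{A}},\lambda_{\T{A}\T{B}}\}$ versus $\{\lambda_{\T{B}\T{Y}},\lambda_{\T{Y}\T{X}}\}$, or as $\{\lambda_{\T{A}\T{B}},\lambda_{\T{B}\T{Y}}\}$ versus $\{\lambda_{\T{Y}\T{X}},\lambda_{\T{X}\T{A}}\}$ --- inherit from their two parents exactly a diagonal pair, namely $(a,y)$ respectively $(b,x)$, and these are the very pairs the independence constraints forbid us from pinning down. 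A naive chain of implications around the cycle is therefore unavailable. To close the loop one must instead use \emph{conditional} perfect correlations: forbidden patterns arranged so that, on suitable sub-supports, some response function is forced to ignore one of its two arguments (the same mechanism that underlies Proposition~\ref{monogamy}), which converts an otherwise uncontrollable cross term into an inheritable one. Combining such collapses with recombination is what should make the propagation around $C_4$ terminate in a contradiction.

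Once the list of conditions is fixed, the impossibility of a classical model becomes a finite combinatorial statement about functions on a product of two-element sets, which can be verified directly; the creative step --- and the part I expect to be hardest --- is choosing that list so that it is simultaneously loop-closing, compatible with the diagonal independences, and still satisfiable by a non-classical distribution.
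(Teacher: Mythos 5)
Your strategy is the right one and matches the paper's in outline: assume a deterministic hidden-variable model via Remark~\ref{wlogdet}, use independence of the four sources to recombine witnessed hidden-variable values into new admissible tuples, and derive a Hardy-type contradiction from the support of a suitably chosen correlation. But the proposal stops exactly where the proof has to begin: you never exhibit the correlation. An existence theorem needs a witness, and ``choose forbidden and required patterns so that the propagation around the cycle closes up'' is a description of the problem, not a solution to it. The paper's witness is concrete and simple: take $p(a,b|x,y)$ to be the Popescu--Rohrlich box (support $a\oplus b = xy$ on bits) with $p(x,y)$ uniform, i.e.\ the table~\eqref{PRbox}. This manifestly satisfies $p(a,y)=p(a)p(y)$ and $p(b,x)=p(b)p(x)$, so it is a correlation in the sense of Definition~\ref{C4corrdef}, and the contradiction is then a short explicit chain: from tuples $\ell$ and $\kappa$ producing $(0,0,0,0)$ and $(1,0,1,1)$, the recombination $(\kappa_{\T{A}\T{B}},\kappa_{\T{B}\T{Y}},\ell_{\T{Y}\T{X}},\kappa_{\T{X}\T{A}})$ inherits $(a,b)=(1,0)$ and the support forces $x=y=1$; two further recombinations force $f_{\T{A}}(\kappa_{\T{X}\T{A}},\ell_{\T{A}\T{B}})=0$ and $f_{\T{B}}(\ell_{\T{A}\T{B}},\kappa_{\T{B}\T{Y}})=0$; and then $(\ell_{\T{A}\T{B}},\kappa_{\T{B}\T{Y}},\ell_{\T{Y}\T{X}},\kappa_{\T{X}\T{A}})$ yields $(0,0,1,1)$ with positive probability, which has probability zero in~\eqref{PRbox}.

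A secondary point: your diagnosis of how to overcome the ``balanced recombination'' obstacle is not what is actually needed. You propose engineering conditional perfect correlations that force a response function to ignore one of its arguments, in the spirit of Proposition~\ref{monogamy}. The paper's argument never does this. It only uses unbalanced (three--one) recombinations, in which one of the pairs $(a,b)$, $(a,x)$, $(b,y)$, or $(x,y)$ is inherited wholesale from a single parent tuple, and the zero pattern of the PR box then pins down the remaining two outcomes. The diagonal independence constraints are never in tension with this, because no step ever needs to inherit the pair $(a,y)$ or $(b,x)$. So while your plan correctly identifies the combinatorial shape of the argument, it both misses the witness and overcomplicates the mechanism by which the loop closes. (Note also that the full theorem requires handling non-discrete hidden variables, where no single tuple has positive probability; the paper does this in~\ref{C4proof} by replacing values with positive-measure product sets, a step your proposal does not address.)
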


\begin{proof}
We define the correlation $p(a,b,x,y)$ by taking $p(a,b|x,y)$ to be a Popescu-Rohrlich box~\cite{PR} and $p(x,y)$ to be the uniform distribution. More concretely, all four outcomes are bits $a,b,x,y\in\{0,1\}$ with the table of joint probabilities given by:
\beq
\label{PRbox}
\textrm{
\begin{tabular}{cc|cccc}
&& \multicolumn{4}{c}{$(x,a)=$} \\
&& $(0,0)$ & $(0,1)$ & $(1,0)$ & $(1,1)$ \\
\hline
\multirow{4}{*}{$(y,b)=$} & $(0,0)$ & $\tfrac{1}{8}$ & $0$ & $\tfrac{1}{8}$ & $0$ \\
& $(0,1)$ & $0$ & $\tfrac{1}{8}$ & $0$ & $\tfrac{1}{8}$ \\
& $(1,0)$ & $\tfrac{1}{8}$ & $0$ & $0$ & $\tfrac{1}{8}$ \\
& $(1,1)$ & $0$ & $\tfrac{1}{8}$ & $\tfrac{1}{8}$ & $0$ 
\end{tabular}}
\eeq
We now use a Hardy-type~\cite{Hardy} argument in order to show that this correlation is not classical. For the sake of contradiction, let us assume $p(a,b,x,y)$ to be classical with hidden variable distributions $p(\lambda_{\T{A}\T{B}})$, $p(\lambda_{\T{B}\T{Y}})$, $p(\lambda_{\T{Y}\T{X}})$, $p(\lambda_{\T{X}\T{A}})$; thanks to Remark~\ref{wlogdet}, we can take the four outcomes to be deterministic funtions of the hidden variables. We start by considering the case of discrete hidden variables. Then, there has to be a hidden variable combination
$$
(\lambda_{\T{A}\T{B}},\lambda_{\T{B}\T{Y}},\lambda_{\T{Y}\T{X}},\lambda_{\T{X}\T{A}}) = (\ell_{\T{A}\T{B}},\ell_{\T{B}\T{Y}},\ell_{\T{Y}\T{X}},\ell_{\T{X}\T{A}})
$$
occuring with positive probability, which produces the outcome $(a,b,x,y)=(0,0,0,0)$; similarly, there has to be a hidden variable combination
$$
(\lambda_{\T{A}\T{B}},\lambda_{\T{B}\T{Y}},\lambda_{\T{Y}\T{X}},\lambda_{\T{X}\T{A}}) = (\kappa_{\T{A}\T{B}},\kappa_{\T{B}\T{Y}},\kappa_{\T{Y}\T{X}},\kappa_{\T{X}\T{A}}),
$$
occuring with positive probability, which produces the outcome $(a,b,x,y)=(1,0,1,1)$. Then, the independence of sources guarantees that the hidden variable combination
$$
(\lambda_{\T{A}\T{B}},\lambda_{\T{B}\T{Y}},\lambda_{\T{Y}\T{X}},\lambda_{\T{X}\T{A}}) = (\kappa_{\T{A}\T{B}},\kappa_{\T{B}\T{Y}},\ell_{\T{Y}\T{X}},\kappa_{\T{X}\T{A}}),
$$
also has positive probability. Because of locality and determinism, it necessarily produces an outcome $(1,0,\hat{x},\hat{y})$; by\eq{PRbox}, $\hat{x}=\hat{y}=1$. Likewise, the hidden variable combination $(\ell_{\T{A}\T{B}},\ell_{\T{B}\T{Y}},\ell_{\T{Y}\T{X}},\kappa_{\T{X}\T{A}})$ has positive probality, and produces some outcome of the form $(a',0,1,0)$. Thanks to the form of\eq{PRbox}, necessarily $a'=0$. Similarly, the hidden variable combination $(\ell_{\T{A}\T{B}},\kappa_{\T{B}\T{Y}},\ell_{\T{Y}\T{X}},\ell_{\T{X}\T{A}})$ must give the outcome $(0,0,0,1)$. However, the hidden variable combination $(\ell_{\T{A}\T{B}},\kappa_{\T{B}\T{Y}},\ell_{\T{Y}\T{X}},\kappa_{\T{X}\T{A}})$ then gives the outcome $(0,0,1,1)$ with positive probability, a contradiction with\eq{PRbox}.

In the case of general (non-discrete) hidden variables, the same proof idea can be used, although the technical details are quite involved; see~\ref{C4proof}.
\end{proof}

\begin{prob}
\label{C4prob}
\begin{enumerate}
\item\label{C4qstnq} Are there non-classical quantum correlations in $C_4$?
\item Is there a simple way to characterize the classical correlations in $C_4$?
\end{enumerate}
\end{prob}

\subsection*{Scenarios with multipartite sources}

So far, we have only considered example scenarios in which each source produces a pair of systems which it distributes among two parties. However, it is quite common to consider Bell scenarios involving a source that distributes systems among several parties~\cite{GHZ}. The same can be easily done in our framework; an example scenario of this type is illustrated in Figure~\ref{Skfig}.

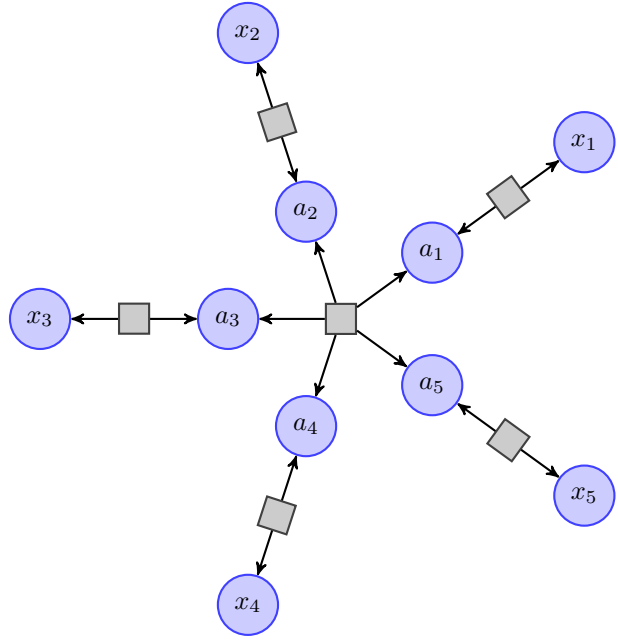
\begin{figure}
\begin{tikzpicture}[node distance=2.1cm,>=stealth',thick]
\tikzstyle{place}=[circle,thick,draw=blue!75,fill=blue!20,minimum size=8mm]
\tikzstyle{transition}=[rectangle,thick,draw=black!75,fill=black!20,minimum size=4.0mm]
\node[place] (a1) at (36:1.5) {$a_1$} ;
\node[place] (a2) at (108:1.5) {$a_2$} ;
\node[place] (a3) at (180:1.5) {$a_3$} ;
\node[place] (a4) at (252:1.5) {$a_4$} ;
\node[place] (a5) at (324:1.5) {$a_5$} ;
\node[place] (x1) at (36:4) {$x_1$} ;
\node[place] (x2) at (108:4) {$x_2$} ;
\node[place] (x3) at (180:4) {$x_3$} ;
\node[place] (x4) at (252:4) {$x_4$} ;
\node[place] (x5) at (324:4) {$x_5$} ;
\node[transition] (l) at (0,0) {} ;
\draw[->] (l) -- (a1) ;
\draw[->] (l) -- (a2) ;
\draw[->] (l) -- (a3) ;
\draw[->] (l) -- (a4) ;
\draw[->] (l) -- (a5) ;
\draw[<->] (a1) -- (x1) node [midway,sloped,above=-6pt,transition] {} ;
\draw[<->] (a2) -- (x2) node [midway,sloped,above=-6pt,transition] {} ;
\draw[<->] (a3) -- (x3) node [midway,sloped,above=-6pt,transition] {} ;
\draw[<->] (a4) -- (x4) node [midway,sloped,above=-6pt,transition] {} ;
\draw[<->] (a5) -- (x5) node [midway,sloped,above=-6pt,transition] {} ;
\end{tikzpicture}
\caption{The correlation scenario $A_5$.}
\label{Skfig}
\end{figure}

More generally, we want to consider the family of \emph{multiarm scenarios} $A_k$ indexed by the number of arms $k\in\N$; each arm consists of two parties sharing a bipartite source, and there is one $k$-partite source shared by all the parties obtained by choosing one party in each arm. Figure~\ref{Skfig} represents the case $k=5$, while $k=2$ is the $P_4$ scenario of Figure~\ref{P4fig}.

The following considerations are immediate generalizations of those of the $P_4$ scenario. Just as $P_4$ corresponds to a bipartite Bell scenario, $A_k$ corresponds to a $k$-partite Bell scenario. We use ``hat'' notation like $a_1,\ldots,\hat{a}_i,\ldots,a_k$ as short for $a_1,\ldots,a_{i-1},a_{i+1},\ldots,a_k$.

\begin{defn}
\label{Skcorrdef}
A \emph{correlation} in the $A_k$ scenario is a probability distribution $p(a_1,\ldots,a_k,x_1\ldots,x_k)$ whose marginals factorize as
\beq
\label{Sknosig}
p(a_1,\ldots,\hat{a}_i,\ldots,a_k,x_1,\ldots,x_k) = p(x_i) p(a_1,\ldots,\hat{a}_i,\ldots,a_k,x_1,\ldots,\hat{x}_i,\ldots,x_k). \quad \forall i
\eeq
\end{defn}

Repeated application of\eq{Sknosig} implies $p(x_1,\ldots,x_n) = p(x_1) \cdots p(x_n)$. Upon using this, and considering only those values $x_i$ for which $p(x_i)>0$, the condition\eq{Sknosig} becomes equivalent to the equations
$$
p(a_1,\ldots,\hat{a}_i,\ldots,a_n|x_1,\ldots,x_n) = p(a_1,\ldots,\hat{a}_i,\ldots,a_n|x_1,\ldots,\hat{x}_i,\ldots,x_n) ,
$$
which are formally identical to the no-signaling equations in a $k$-partite Bell scenario.

\begin{thm}
\label{Skthm}
\begin{enumerate}
\item\label{Skcl} A correlation $p$ is \emph{classical} in $A_k$ if and only if the associated conditional distribution $p(a_1,\ldots,a_k|x_1,\ldots,x_k)$ is classical in the Bell scenario sense.
\item\label{Skqu} A correlation $p(a,b,x,y)$ is \emph{quantum} in $A_k$ if and only if the associated conditional distribution $p(a_1,\ldots,a_k|x_1,\ldots,x_k)$ is quantum in the Bell scenario sense
\end{enumerate}
\end{thm}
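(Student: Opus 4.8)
The plan is to mimic the proof of Theorem~\ref{P4thm} step by step, with the central $k$-partite source playing the role that $\lambda_{\T{A}\T{B}}$ played in $P_4$, and each of the $k$ arm sources $\lambda_{\T{X}_i\T{A}_i}$ playing the role of $\lambda_{\T{X}\T{A}}$ and $\lambda_{\T{B}\T{Y}}$. Write $\lambda_0$ for the central source, so that $x_i$ depends only on $\lambda_i\equiv\lambda_{\T{X}_i\T{A}_i}$ while $a_i$ depends on both $\lambda_i$ and $\lambda_0$. A classical model in $A_k$ then expresses $p$ as
\beq
p(a_1,\ldots,a_k,x_1,\ldots,x_k) = \int_{\lambda_0,\lambda_1,\ldots,\lambda_k} p(\lambda_0) \prod_{i=1}^k p(\lambda_i) \, p(x_i|\lambda_i) \, p(a_i|\lambda_i,\lambda_0).
\eeq

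For part~\ref{Skcl}, forward direction, I would compute $p(a_1,\ldots,a_k|x_1,\ldots,x_k)$ by conditioning on $\lambda_0$. The crucial structural observation is that, conditioned on $\lambda_0$, the pairs $(a_i,x_i)$ become mutually independent: the arm sources $\lambda_1,\ldots,\lambda_k$ are independent of one another and of $\lambda_0$, and each pair $(a_i,x_i)$ is a function of $\lambda_i$ and $\lambda_0$ alone. Hence $p(a_1,\ldots,a_k|x_1,\ldots,x_k,\lambda_0) = \prod_i p(a_i|x_i,\lambda_0)$, giving
\beq
p(a_1,\ldots,a_k|x_1,\ldots,x_k) = \int_{\lambda_0} p(\lambda_0) \prod_{i=1}^k p(a_i|x_i,\lambda_0),
\eeq
which is exactly the local hidden variable form of a $k$-partite Bell correlation with shared variable $\lambda_0$. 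For the converse I would start from such a Bell decomposition, set $\lambda_0=\lambda$, and encode each setting by defining the arm source $\lambda_i=x_i$, so that $\T{X}_i$ simply announces $\lambda_i$ and $\T{A}_i$ uses it as its setting; this reproduces the full distribution since $p(x_1,\ldots,x_k)=\prod_i p(x_i)$.

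For part~\ref{Skqu}, I would run the steering argument of Theorem~\ref{P4thm} in parallel on every arm. After reducing, without loss of generality, each arm state to a pure state and each $\T{X}_i$-measurement to a nondegenerate projective measurement, the measurement of $\T{X}_i$ steers $\T{A}_i$'s half of the arm state to a family of pure states $\{|\chi^{(i)}_{x_i}\rangle\}$. Defining $A^{x_i}_{a_i}\equiv\langle\chi^{(i)}_{x_i}| A^{(i)}_{a_i} |\chi^{(i)}_{x_i}\rangle$ as operators on $\T{A}_i$'s share of the central $k$-partite state $|\Psi\rangle$, and using $\sum_{a_i} A^{x_i}_{a_i}=\mathbbm{1}$, one obtains
\beq
p(a_1,\ldots,a_k|x_1,\ldots,x_k) = \langle\Psi| A^{x_1}_{a_1} \otimes \cdots \otimes A^{x_k}_{a_k} |\Psi\rangle,
\eeq
a quantum $k$-partite Bell correlation. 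The converse is the obvious generalization: keep the central source quantum, replace each arm source by the classical variable $\lambda_i=x_i$, and have $\T{A}_i$ measure the POVM $|x_i\rangle\langle x_i|\otimes A^{x_i}_{a_i}$.

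Since the paper already flags these as immediate generalizations, I expect no genuine conceptual obstacle. The main point requiring care is the bookkeeping across $k$ arms in the classical forward direction, specifically verifying that the independence of the arm sources $\lambda_1,\ldots,\lambda_k$ really does render the pairs $(a_i,x_i)$ conditionally independent given $\lambda_0$. This is what collapses the integral over all the $\lambda_i$ into the clean product form, and it is the only place where the combinatorics genuinely differ from the $k=2$ case of $P_4$.
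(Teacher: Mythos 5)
Your proposal is correct and is exactly the argument the paper intends: its proof of Theorem~\ref{Skthm} consists solely of the remark that it is analogous to the proof of Theorem~\ref{P4thm}, and you have carried out that analogy faithfully (conditioning on the central source $\lambda_0$ to get the $k$-partite LHV form, encoding settings as arm-source values for the converse, and running the steering argument on each arm in the quantum case). The one point you rightly single out --- that independence of the arm sources makes the pairs $(a_i,x_i)$ conditionally independent given $\lambda_0$ --- is indeed the only place where the $k>2$ bookkeeping differs from $P_4$, and your treatment of it is sound.
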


\begin{proof}
Analogous to the proof of Theorem~\ref{P4thm}.
\end{proof}

\section{General theory of correlation scenarios}
\label{gentlcs}

We now adopt a more abstract point of view. Looking at the previous examples, one should come to the conclusion that a general definition of correlation scenario should define the data of a correlation scenario to consist of a set of measurements ($=$ parties $=$ observers) $M$, a set of sources $S$, and a relation $C\subseteq S\times M$ between sources and measurements, where we write $(s,m)\in C$ also as $sCm$ and read it as ``$s$ connects to $m$''. As before, the physical picture is that each source sends out one physical system to each party it connects to, and each party conducts a fixed measurement on the collection of systems it receives from the sources it is connected to. The temporal (or rather causal) structure of such a scenario consists of a primary layer of sources and a secondary layer of measurements. In~\cite{FS}, we will go beyond this ``two-layer'' approach and consider a vastly more general formalism allowing for any kind of causal structure.

Finally, a correlation scenario should also specify how many possible outcomes each measurement has. For simplicity, we take this to be the same number $d\in\N$ for all measurements. We usually omit mention of $d$ and regard it as implicitly defined through the correlation: given the joint outcome distribution, $d$ can be taken to be equal to the highest number of actually occurring outcomes over all measurements.

\begin{defn}
\label{2layerdef}
A \emph{correlation scenario} is a quadruple $(S,M,C,d)$ consisting of a finite set of \emph{sources} $S$, a finite set of \emph{measurements} $M$, a relation $C\subseteq S\times M$ (read: ``connects'') and a natural number $d\in\N$. The relation is required to satisfy the conditions
\begin{enumerate}
\item $\left( s_1Cm\Rightarrow s_2Cm \:\forall m \right) \:\Longleftrightarrow\:  s_1 = s_2$
\item $\left( sCm_1\Leftrightarrow sCm_2 \:\forall s \right) \:\Longleftrightarrow\:  m_1 = m_2$
\end{enumerate}
\end{defn}

These two conditions are to be interpreted as follows: if source $s_2$ connects to each measurement to which also $s_1$ connects, then $s_1$ is redundant. Therefore, we may assume without loss of generality that such redundancies do not occur: if $s_1$ connects to a subset of the measurements to which $s_2$ connects, or to exactly the same measurements, then $s_1=s_2$. Similarly, if there are two measurements which connect to exactly the same set of sources, then we may replace both measurements by a single one. Therefore, we assume without loss of generality that if $m_1$ and $m_2$ connect to the same set of sources, then $m_1=m_2$.

The scenarios depicted in Figures~\ref{P4fig}--\ref{Skfig} are exactly of this form: the circles represent $M$, the boxes form $S$, and the arrows define $C$.

\begin{defn}
A \emph{hypergraph} $G=(V,E)$ consists of a finite set of vertices $V$ and a set of edges $E\subseteq 2^V$, i.e.~every edge $e\in E$ is a subset $e\subseteq V$.
\end{defn}

The combinatorial data of Definition~\ref{2layerdef} can equivalently be specified in terms of a hypergraph. One obtains a hypergraph from a correlation scenario $(S,M,C,d)$ by using the vertex set $V=M$ and introducing one edge for each source which contains exactly those vertices ($=$ measurements) to which the source connects. Formally, the resulting set of edges is
$$
E=\left\{ \{r\in P\::\: sCr\} ,\: s\in S\right\} .
$$
Then the two requirements of Definition~\ref{2layerdef} translates into the properties
\begin{enumerate}
\item\label{hgp1} $G$ is an anti-chain: there is no edge which is contained in a different one.
\item\label{hgp2} There are no two different vertices which belong to exactly the same set of edges.
\end{enumerate}
Conversely, every hypergraph with these properties defines a correlation scenario in the obvious way: vertices become measurements, and every edge defines a source which connects to all those measurements contained in the edge.

For now, we stick with this hypergraph picture. In other words, we identify a source with the set of measurements that it connects to. For the following, we fix a hypergraph $G=(V,E)$, satisfying~\ref{hgp1},~\ref{hgp2}, together with some $d\in\N$ for the number of possible outocmes. We take this data to represent any correlation scenario. We write $V=\{v_1,\ldots,v_n\}$ and associate to each vertex $v_i$ a random variable, representing the measurement outcome distribution, which we also denote by $v_i$. The following definition generalizes the Definitions~\ref{P4corrdef},~\ref{P5corrdef},~\ref{C3corrdef},~\ref{C4corrdef} and~\ref{Skcorrdef}.

\begin{defn}
\label{gencorr}
A \emph{correlation} $p$ in $G$ is a probability distribution $p(v_1,\ldots v_n)$ such that for every pair of subsets $U,W\subseteq V$ which are not connected in $G$ (i.e.~$\not\exists e\in E$ with $U\cap e\neq\emptyset \land W\cap e\neq\emptyset$),
$$
p(u_1,\ldots u_{|U|}, w_1,\ldots,w_{|W|}) = p(u_1,\ldots,u_{|U|}) p(w_1,\ldots,w_{|W|}).
$$
where $U=\{u_1,\ldots,u_{|U|}\}$ and $W=\{w_1,\ldots,w_{|W|}\}$.
\end{defn}

It follows immediately that the same property not only holds for a pair of subsets of $V$, but for any number of pairwise not connected subsets.

\begin{prob}
For every standard Bell scenario, there is a general probabilistic theory~\cite{Barrett} which reproduces all no-signaling correlations in that scenario\footnote{For example, take the corresponding no-signaling polytope as the state space of the total system.}. Is this also true that for every correlation scenario? If not, are there other frameworks beyond general probabilistic theories in which this would be the case? Or would that mean that our Definition~\ref{gencorr} is too lax?
\end{prob}

In a hidden variable model, each source $e\in E$ is described by a hidden variable $\lambda_e$ with some distribution $p(\lambda_e)$. The locality assumption~\ref{L} then allows an outcome $v_i$ to depend on all the sources connected to $v_i$; we write $\Lambda_i=\{\lambda_e; v_i\in e\}$ for the set of hidden variables associated to all those sources.

\begin{defn}
\label{genclassdef}
A correlation $p$ in $G$ is classical if there are distributions $p(\lambda_e)$ and conditional distributions $p(v_i|\Lambda_i)$ such that
\beq
\label{classcor}
p(v_1,\ldots,v_n) = \int_{\{\lambda_e,\:e\in E\}} \prod_{v_i\in V} p(v_i|\Lambda_i) \prod_{e\in E} p(\lambda_e)
\eeq
\end{defn}

See~\ref{genclassdef} for the precise measure-theoretical definition. It is a simple exercise to check that every classical correlation is indeed a correlation as in Definition~\ref{gencorr}.

\begin{prob}
\label{allclass}
Under which conditions on $G$ are all correlations classical?
\end{prob}

A class of scenarios in which all correlations are trivially classical is this:

\begin{prop}
If there is a source in $G$ connecting to all vertices, i.e.~if $E=\{V\}$, then every distribution $p(v_1,\ldots,v_n)$ is a classical correlation in $G$.
\end{prop}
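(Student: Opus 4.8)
The plan is to exploit the fact that the hypothesis $E=\{V\}$ leaves only a single source, so that the independence-of-sources requirement built into Definition~\ref{genclassdef} becomes vacuous and the one available hidden variable can be made to carry the entire joint outcome. Intuitively, a source that talks to everybody is allowed to be arbitrarily correlated internally, and can therefore simply broadcast a copy of whatever the joint distribution prescribes.

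First I would specialize the defining formula\eq{classcor} to the situation at hand. Since $E=\{V\}$ consists of a single edge, there is exactly one hidden variable, which I will write simply as $\lambda$. Every vertex $v_i$ lies in this edge, so $\Lambda_i=\{\lambda\}$ for all $i$, and\eq{classcor} collapses to
\beq
p(v_1,\ldots,v_n) = \int_{\lambda} p(\lambda) \prod_{i=1}^n p(v_i|\lambda).
\eeq
Thus the task reduces to producing a prior $p(\lambda)$ together with conditionals $p(v_i|\lambda)$ that realize the given $p$.

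Next I would make the natural choice in which $\lambda$ records the whole outcome tuple. Concretely, let $\lambda$ range over $\{1,\ldots,d\}^n$, set $p(\lambda=(c_1,\ldots,c_n))\equiv p(v_1=c_1,\ldots,v_n=c_n)$, and take each conditional to be the deterministic readout of the $i$-th coordinate, $p(v_i=c|\lambda=(c_1,\ldots,c_n))=\delta_{c,c_i}$. Substituting into the displayed formula, the product of Kronecker deltas selects the single summand with $(c_1,\ldots,c_n)=(v_1,\ldots,v_n)$, so the sum returns exactly $p(v_1,\ldots,v_n)$; this is a routine one-line verification. I would then observe that $p$ is automatically a correlation in the sense of Definition~\ref{gencorr}: because the unique edge $V$ meets every nonempty subset of vertices, any two nonempty subsets $U,W$ are connected, and hence the only not-connected pairs are those in which one side is empty, for which the required factorization is trivial.

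There is essentially no hard step here; the whole content is that a lone all-connecting source can encode everything, and the construction above is deterministic. The only place demanding any care is the non-discrete setting treated in Appendix~\ref{app}: there one should take $\lambda$ valued in the (finite) outcome space equipped with the law $p$, and phrase the coordinate-readout conditionals as the appropriate deterministic Markov kernels. Since the outcome alphabet is finite, this presents no genuine measure-theoretic difficulty, so I expect the general case to go through verbatim.
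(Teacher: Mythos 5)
Your proposal is correct and is essentially identical to the paper's own proof: both take the single hidden variable to be the full outcome tuple $(v_1,\ldots,v_n)$ distributed according to $p$ itself, with each measurement deterministically reading off its own coordinate. The extra remarks you add (the vacuity of the correlation condition when $E=\{V\}$, and the triviality of the measure-theoretic version since the outcome alphabet is finite) are accurate but not needed beyond what the paper states.
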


\begin{proof}
The hidden variable carried by the common source can be taken to be $\lambda=(v_1,\ldots,v_n)$ itself: in each run of the experiment, it selects a joint outcome $(v_1,\ldots,v_n)$ according to the desired distribution, sends this joint outcome as a hidden variable $\lambda$ to all measurements. The outcome $v_i$ is then defined to be the $i$th component of $\lambda$. 
\end{proof}

Using our previous analysis of example scenarios together with a bit of graph theory, we can answer Problem~\ref{allclass} at least in the case of bipartite sources, i.e.~when the hypergraph $G=(V,E)$ is a (undirected, simple) graph. The relevant class of correlation scenarions turns out to be the class of \emph{star scenarios} $S_k$ indexed by the number $k\in\N$. The star graph $S_k$ is defined to have vertices $V=\{a,b_1,\ldots,b_k\}$ and one edge between $a$ and every $b_i$, i.e.
$$
E = \{\{a,b_1\},\ldots,\{a,b_k\}\}.
$$
See Figure~\ref{S5fig}.

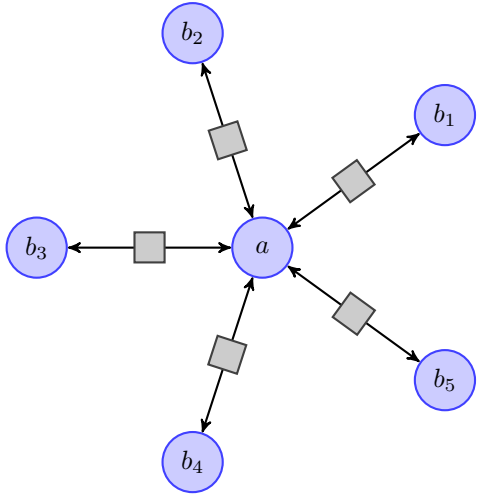
\begin{figure}
\begin{tikzpicture}[node distance=2.1cm,>=stealth',thick]
\tikzstyle{place}=[circle,thick,draw=blue!75,fill=blue!20,minimum size=8mm]
\tikzstyle{transition}=[rectangle,thick,draw=black!75,fill=black!20,minimum size=4.0mm]
\node[place] (a) at (0,0) {$a$} ;
\node[place] (b1) at (36:3) {$b_1$} ;
\node[place] (b2) at (108:3) {$b_2$} ;
\node[place] (b3) at (180:3) {$b_3$} ;
\node[place] (b4) at (252:3) {$b_4$} ;
\node[place] (b5) at (324:3) {$b_5$} ;
\draw[<->] (a) -- (b1) node [midway,sloped,above=-6pt,transition] {} ;
\draw[<->] (a) -- (b2) node [midway,sloped,above=-6pt,transition] {} ;
\draw[<->] (a) -- (b3) node [midway,sloped,above=-6pt,transition] {} ;
\draw[<->] (a) -- (b4) node [midway,sloped,above=-6pt,transition] {} ;
\draw[<->] (a) -- (b5) node [midway,sloped,above=-6pt,transition] {} ;
\end{tikzpicture}
\caption{The correlation scenario $S_5$.}
\label{S5fig}
\end{figure}

\begin{thm}
\label{classgraph}
If the hypergraph $G$ is a graph, then all correlations in $G$ are classical if and only if $G$ is a star graph or a disjoint union of star graphs.
\end{thm}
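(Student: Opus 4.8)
The plan is to prove the two implications separately, reducing the harder (``only if'') direction to the non-classicality results already established for $P_4$, $C_3$ and $C_4$.

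First I would treat sufficiency: if $G$ is a disjoint union of stars then every correlation is classical. Since vertices in different components are never connected, Definition~\ref{gencorr} (in the form extended to arbitrarily many pairwise unconnected subsets) shows that a correlation $p$ factors as the product of its restrictions to the components, and a product of classical models is classical; so it suffices to treat a single star with centre $a$ and leaves $b_1,\dots,b_k$. In a star the only pairs of disjoint unconnected subsets are pairs of sets of leaves, so the correlation condition forces the leaves to be mutually independent, $p(b_1,\dots,b_k)=\prod_i p(b_i)$. I would then write down an explicit model: let the source on the edge $\{a,b_i\}$ carry a variable $\lambda_i$ distributed as $p(b_i)$, set $b_i:=\lambda_i$ deterministically, and let the centre $a$ --- which is connected to \emph{every} source of the star --- respond by $p(a\mid\lambda_1,\dots,\lambda_k):=p(a\mid b_1,\dots,b_k)$. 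Integrating and using the independence of the leaves reproduces $p$ exactly.

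For the converse I would argue the contrapositive, and first isolate its combinatorial content: a graph $G$ is \emph{not} a star forest if and only if it contains $P_4$, $C_3$ or $C_4$ as an \emph{induced} subgraph. If $G$ has a cycle, a shortest one is chordless and hence an induced $C_g$; for $g=3,4$ this is $C_3$ or $C_4$, and for $g\ge 5$ any four consecutive vertices of it induce a $P_4$. If $G$ is a forest that is not a star forest, some tree component has diameter at least $3$ and therefore contains an induced $P_4$. The reverse implication is clear, since a star forest has no $P_4$ subgraph and no cycle at all. The heart of the argument is then a subscenario lemma: if $H$ is an induced subgraph of $G$ on a vertex set $W$ and $q$ is a non-classical correlation in $H$, then the distribution $p$ on $G$ that equips $W$ with $q$ and fixes every remaining outcome to a constant is a non-classical correlation in $G$. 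That $p$ is a correlation is immediate, because for subsets of $W$ connectedness in $G$ agrees with connectedness in $H$ (as $H$ is induced) and the frozen outcomes factor off trivially. For non-classicality I would assume a classical model for $p$ in $G$, take the responses deterministic by Remark~\ref{wlogdet}, and marginalise onto $W$, so that the frozen responses sum to $1$. The decisive point --- exactly where the induced property is used --- is that any edge of $G$ joining two vertices of $W$ is already an edge of $H$; hence the only extra sources seen by a vertex $w\in W$ are edges leaving $W$, and these sit on distinct edges, so their hidden variables are mutually independent and private to single vertices of $W$. Integrating out these private variables converts the $G$-responses into effective response functions $\tilde p(w\mid\Lambda^H_w)$ depending only on the $H$-sources at $w$, yielding a classical model for $q$ in $H$ and contradicting its non-classicality.

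Putting the pieces together, if $G$ is not a star forest it contains an induced $P_4$, $C_3$ or $C_4$; these carry non-classical correlations by Theorem~\ref{P4thm} together with the existence of Bell-inequality violations, by Theorem~\ref{C3thm}, and by Theorem~\ref{C4thm} respectively, and the subscenario lemma then produces a non-classical correlation in $G$. I expect the main obstacle to be the rigorous justification of the subscenario lemma --- verifying that the outside-going sources contribute only independent private randomness (precisely where a chord, which would turn a would-be induced $P_4$ into a $C_4$, would instead create illicit shared randomness between the endpoints), and that the marginalisation and the absorption of private variables into $\tilde p(w\mid\Lambda^H_w)$ are legitimate. For non-discrete hidden variables these steps need the measure-theoretic apparatus of the appendix, though the discrete case already displays every idea.
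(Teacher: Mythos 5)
Your proposal is correct and follows essentially the same route as the paper: sufficiency via the explicit model $\lambda_{\T{A}\T{B}_i}=b_i$ for stars (with components factoring off), and necessity by locating an induced $P_4$, $C_3$ or $C_4$ in any non-star-forest, freezing the remaining outcomes, and restricting a hypothetical classical model back to the subscenario so that Theorems~\ref{P4thm},~\ref{C3thm} and~\ref{C4thm} apply. The only deviations are minor: you prove the graph-theoretic lemma via shortest chordless cycles and tree diameter rather than the paper's induction on $|V|$, and you spell out in detail the restriction step (integrating out the private outgoing sources, which is where the induced property is used) that the paper asserts in a single sentence.
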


We begin the proof with a lemma.

\begin{lem}
Let $G=(V,E)$ be a connected simple graph. If $G$ is not a star graph, then $G$ has some induced subgraph which is a $C_3$, $C_4$ or $P_4$.
\end{lem}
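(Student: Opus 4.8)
The plan is to argue directly, by a case distinction based on the shortest cycles in $G$ and on its diameter $\mathrm{diam}(G)$, producing one of the three forbidden induced subgraphs in each case. Two preliminary observations streamline the argument. Since $G$ is connected and not a star, it has at least three vertices (a connected graph on at most two vertices is $K_1$ or $K_2$, both of which are stars), and its maximum degree is at least $2$ (a connected graph of maximum degree at most $1$ has at most one edge, hence at most two vertices).

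First, if $G$ contains three pairwise adjacent vertices they induce a $C_3$ and we are done, so I may assume $G$ is triangle-free. Next, if $\mathrm{diam}(G) \geq 3$, I would pick vertices $u, w$ with $d(u,w) = 3$ together with a geodesic $u = p_0 - p_1 - p_2 - p_3 = w$. Because this is a shortest path, none of $p_0 p_2$, $p_1 p_3$, $p_0 p_3$ is an edge (each such chord would give a $u$--$w$ walk of length at most $2$), so $\{p_0,p_1,p_2,p_3\}$ induces a $P_4$ and we are done. Thus I may further assume $\mathrm{diam}(G) \leq 2$.

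The remaining, and main, case is that $G$ is triangle-free, connected, of diameter at most $2$, and not a star. Here I would fix a vertex $v$ of maximum degree, with neighborhood $N(v) = \{u_1,\dots,u_m\}$ where $m \geq 2$; note that $N(v)$ is an independent set by triangle-freeness. If $v$ were adjacent to every other vertex, triangle-freeness would force all other vertices to be pairwise non-adjacent, making $G$ a star centered at $v$; since $G$ is not a star there is a vertex $z \neq v$ with $z \not\sim v$, and by $\mathrm{diam}(G) \leq 2$ this $z$ has (after relabeling) a neighbor $u_1 \in N(v)$. I then split according to the remaining neighbors of $z$. If $z$ has a second neighbor $s \in N(v)$, then $\{v,u_1,z,s\}$ induces a $C_4$, since the two diagonals $vz$ and $u_1 s$ are non-edges. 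If $z$ has a neighbor $s \notin N(v) \cup \{v\}$, then $s \not\sim u_1$ by triangle-freeness and $s \not\sim v$ by choice, so $\{v,u_1,z,s\}$ induces a $P_4$. Finally, if $u_1$ is the only neighbor of $z$, then choosing any $u_2 \in N(v) \setminus \{u_1\}$ (which exists since $m \geq 2$) makes $z - u_1 - v - u_2$ an induced $P_4$.

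I expect the genuinely delicate step to be this last case, where the work is twofold: checking that the three subcases are exhaustive, and checking that the four chosen vertices induce exactly a $C_4$ or a $P_4$ with no extra chord. The exhaustiveness holds because any neighbor of $z$ other than $u_1$ lies either inside $N(v)$ or outside $N(v) \cup \{v\}$, while the absence of spurious chords follows uniformly from triangle-freeness together with the independence of $N(v)$; once the right four vertices are selected, the verification is routine.
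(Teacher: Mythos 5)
Your proof is correct, and it takes a genuinely different route from the paper's. The paper argues by induction on $|V|$: it removes a vertex, invokes the induction hypothesis to conclude that the remaining $n-1$ vertices induce a star, and then checks how the removed vertex can attach to that star without creating an induced $C_3$, $C_4$ or $P_4$ (concluding it must become another leaf). You instead give a direct structural analysis: first dispose of triangles, then of diameter $\geq 3$ via a chordless geodesic, and finally, in the triangle-free diameter-$\leq 2$ case, locate the four vertices $\{v,u_1,z,s\}$ or $\{z,u_1,v,u_2\}$ by hand; your three subcases are exhaustive and each induced-subgraph verification checks out (the needed non-edges all follow from triangle-freeness, the independence of $N(v)$, and the explicit choices of $z$ and $s$). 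Your version is somewhat longer but entirely self-contained, and it sidesteps a small imprecision in the paper's induction, namely that one cannot take ``any'' induced subgraph on $n-1$ vertices — one must delete a non-cut vertex so that the subgraph remains connected and the induction hypothesis applies. The paper's induction, once that choice is made explicit, is shorter and makes the ``star plus one vertex'' structure more transparent; your argument generalizes more readily to situations where one wants to exhibit the forbidden subgraph constructively from local data (a shortest path or a second neighbor).
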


\begin{proof}
We use induction on $n=|V|$. For $n\leq 3$, the statement is clear, since the only connected graphs at most three vertices are $C_3$ and the star graphs $P_1=S_0$, $P_2=S_1$ and $P_3=S_2$. For $n\geq 4$, we start with $G$ and assume that $G$ does not contain any induced $C_3$, $C_4$ or $P_4$. We now select any induced subgraph on $n-1$ vertices. By the induction assumption, this subgraph is a star graph with some central vertex $a\in V$ and leaves $b_1,\ldots,b_{n-2}\in V$. For the induction step, we ask: how can the additional vertex $c\in V$ be connected to $a,b_1,\ldots,b_{n-2}$? An edge from $c$ to $a$ together with one from $c$ to some $b_i$ would give rise to an induced subgraph of type $C_3$; no edge to $a$ but an edge to some $b_i$ would give rise to an induced subgraph of type $C_4$ or $P_4$. Therefore, $c$ cannot share an edge with any $b_i$. Then due to connectedness, it needs to share an edge with $a$, which turns it into another leaf of the star.
\end{proof}

\begin{proof}[Proof of Theorem~\ref{classgraph}]
If $G$ is not a star graph or a disjoint union of star graphs, then the lemma guarantees that $G$ contains an induced $C_3$, $C_4$ or $P_4$. Any correlation on such an induced subgraph can be extended to a correlation on $G$ by taking the measurements associated to the additional vertices to have a deterministic outcome. Any hidden variable model of this extension can be restricted to a hidden variable model of the original correlation on the subgraph; in other words, if the original correlation is non-classical, then so is the extension. The existence of non-classical correlations on $G$ now follows from Theorems~\ref{P4thm},~\ref{C3thm} and~\ref{C4thm}.

We now consider the case that $G=(V,E)$ is a star graph. This means that $V=\{a,b_1,\ldots,b_n\}$, where $a$ is the central vertex sharing an edge with each $b_i$, and there are no other edges. It follows from Definition~\ref{gencorr} that a correlation on $G$ is a distribution $p(a,b_1,\ldots,b_n)$ satisfying
$$
p(a,b_1,\ldots,b_n) = p(a|b_1,\ldots,b_n) \prod_{i=1}^n p(b_i) .
$$
Defining hidden variables as $\lambda_{\T{A}\T{B}_i} = b_i$ shows that $p$ is indeed classical.
\end{proof}

Since $C_3$ is the only hypergraph on $3$ vertices for which not all measurements share a common source and which is not a star graph, we obtain as a direct consequence:

\begin{cor}
\label{C3simplest}
$C_3$ is the smallest scenario in which non-classical correlations exist.
\end{cor}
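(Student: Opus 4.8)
The plan is to read ``smallest'' as ``fewest measurements,'' i.e.\ fewest vertices of the underlying hypergraph, and then to establish two things: that $C_3$, which has three vertices, does admit non-classical correlations, and that every scenario on at most two vertices admits only classical ones. The first is nothing but Theorem~\ref{C3thm}, so the real work lies in the second claim, supplemented by a check that, among three-vertex scenarios, $C_3$ is the unique one escaping the two ``trivially classical'' families.

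First I would dispose of the small cases by a short classification. A scenario on a single vertex is classical for any $p(v_1)$, since its lone source may carry $\lambda=v_1$. For two vertices $\{v_1,v_2\}$, the antichain and non-redundancy conventions of Definition~\ref{2layerdef} leave essentially two possibilities: either the two measurements share a source, in which case the common-source Proposition applies verbatim and \emph{every} distribution is classical; or they share no source, in which case Definition~\ref{gencorr} forces $p(v_1,v_2)=p(v_1)p(v_2)$ and classicality is witnessed by the private hidden variables $\lambda_i=v_i$. Either way, no non-classical correlation exists on two or fewer vertices.

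Next I would enumerate the three-vertex scenarios to confirm that $C_3$ is the smallest witness. If a single source connects to all three measurements (that is, $E=\{V\}$), the common-source Proposition again yields classicality. Otherwise every source connects to at most two measurements, so that (after absorbing any size-one ``private'' source into local randomness, which never introduces non-classicality and only decouples a vertex, reducing to the two-vertex case already handled) the scenario is a simple graph and Theorem~\ref{classgraph} applies. The only graphs on three vertices are the empty graph, a single edge, the path $P_3$, and the triangle; all but the triangle are star graphs or disjoint unions of star graphs and hence purely classical, while the triangle is exactly $C_3$. Together with Theorem~\ref{C3thm}, this exhibits $C_3$ as a three-vertex scenario carrying non-classical correlations, and the previous paragraph shows no smaller scenario does.

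The main obstacle here is bookkeeping rather than mathematics. One must take care that the reduction conventions of Definition~\ref{2layerdef} are genuinely used to eliminate spurious small scenarios, and that the common-source Proposition---rather than Theorem~\ref{classgraph}, which is stated only for simple graphs---is invoked for the single three-vertex scenario whose source is tripartite. Once the case split is organized around \emph{common source} versus \emph{genuine (sub)graph}, everything collapses to the already-proved Theorems~\ref{classgraph} and~\ref{C3thm}.
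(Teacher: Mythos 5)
Your argument is correct and follows essentially the same route as the paper, which obtains the corollary as a direct consequence of Theorem~\ref{classgraph}, the common-source proposition, and Theorem~\ref{C3thm}, after observing that $C_3$ is the only hypergraph on three vertices which neither is a star graph nor has all measurements sharing a common source. Your extra bookkeeping for the one- and two-vertex cases and for singleton (private) sources merely makes explicit what the paper leaves implicit.
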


If Problem~\ref{C4prob}\ref{C4qstnq} has a positive answer, then ``non-classical'' can also be replaced in Theorem~\ref{classgraph} and Corollary~\ref{C3simplest} by ``non-classical quantum''.

For Bell scenarios, it is an open problem whether all quantum correlations in a fixed Bell scenario can be achieved quantum-mechanically in terms of quantum states on a Hilbert space of fixed dimension. Numerical evidence suggests that this is not the case in general~\cite{PV}. Due to Theorem~\ref{P4thm}, this question as well as the numerical evidence automatically transfer to the $P_4$ scenario. The analogous question for the classical case is: how many values for the hidden variable(s) are required in order to simulate all classical correlations? In a Bell scenario, this is easily seen to be a finite number since the set of classical correlations is a convex polytope with the deterministic correlations as extremal points, so that Carath{\'e}odory's Theorem gives an explicit bound on the number of hidden variable values needed. However, in our more general formalism, the answer to the same question is not at all clear.

\begin{prob}
\label{hvfin}
Are there correlation scenarios in which no finite number of values for the hidden variables is enough for obtaining all classical correlations with a given number of outcomes?
\end{prob}

Due to Theorem~\ref{P4thm}, we know that a finite number is sufficient in the case of $P_4$. The natural next step will be to consider this problem for $C_3$, where it already seems difficult. 

\begin{prob}
\label{finpi}
Can the set of classical correlations be described by a finite number of polynomial inequalities?
\end{prob}

This is in fact related to Problem~\ref{hvfin}:

\begin{prop}
Let $G$ be a correlation scenario with a fixed number of outcomes for each measurement. If a finite number of hidden variable value suffices in $G$ to obtain all classical correlations, then the set of classical correlations in $G$ can be described in terms of a finite number of polynomial inequalities.
\end{prop}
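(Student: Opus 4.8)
The plan is to exhibit the set of classical correlations as the image of a finite-dimensional semialgebraic set under a polynomial map, and then to invoke the Tarski--Seidenberg theorem. First I would fix the number of outcomes $d$ and let $N$ be a finite bound on the number of hidden variable values whose existence is guaranteed by the hypothesis, so that each $\lambda_e$ may be assumed to range over $\{1,\ldots,N\}$. A hidden variable model of the form \eq{classcor} is then specified by finitely many real parameters: for each source $e\in E$ a distribution $p(\lambda_e)$, a point in the simplex $\Delta_{N-1}$, and for each vertex $v_i$ together with each joint value of the connected hidden variables $\Lambda_i$ a distribution $p(v_i|\Lambda_i)$, a point in $\Delta_{d-1}$. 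The space of admissible parameters is then a finite product of simplices
$$
\mathcal{P} = \prod_{e\in E}\Delta_{N-1}\times\prod_{v_i\in V}\prod_{\Lambda_i}\Delta_{d-1},
$$
a compact semialgebraic subset of some $\R^K$ cut out by the nonnegativity inequalities on all coordinates together with the normalization equalities stating that each block of coordinates sums to $1$.

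Next I would observe that, because the hidden variables now take only finitely many values, the integral in \eq{classcor} collapses to a finite sum
$$
p(v_1,\ldots,v_n) = \sum_{\{\lambda_e\}} \prod_{v_i\in V} p(v_i|\Lambda_i) \prod_{e\in E} p(\lambda_e),
$$
so that each of the $d^n$ output coordinates is a polynomial in the parameters. This defines a polynomial map $\Phi\colon\mathcal{P}\to\R^{d^n}$, and by the finiteness hypothesis the set of classical correlations in $G$ is exactly the image $\Phi(\mathcal{P})$.

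It then suffices to invoke that the image of a semialgebraic set under a polynomial map is again semialgebraic; this is the Tarski--Seidenberg theorem, equivalently quantifier elimination over the real closed field $\R$. Concretely, the predicate ``$p$ is classical'' is the formula $\exists\,\theta\in\mathcal{P}$ with $\Phi(\theta)=p$, and eliminating the existential quantifier yields a quantifier-free formula in the coordinates of $p$, i.e.\ a finite Boolean combination of polynomial inequalities. Hence $\Phi(\mathcal{P})$ is semialgebraic.

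The step I expect to be the main obstacle is conceptual rather than computational, and it is precisely where the finiteness hypothesis enters. Without it, classicality is an existential statement quantifying over hidden variables living on arbitrary probability spaces, i.e.\ over an infinite-dimensional and non-semialgebraic domain, to which Tarski--Seidenberg does not apply; the hypothesis is exactly what replaces this unbounded quantifier by quantification over the finite-dimensional set $\mathcal{P}$. The algebraic geometry itself is routine once this reduction is in place. One caveat I would flag is that ``a finite number of polynomial inequalities'' must be read in the semialgebraic sense of a finite Boolean combination, since Tarski--Seidenberg produces such a combination rather than a single system of simultaneous inequalities.
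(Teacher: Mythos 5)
Your proposal is correct and follows essentially the same route as the paper: parametrize the finitely many hidden variable values and conditional distributions by real variables subject to linear (in)equalities, note that \eq{classcor} becomes a polynomial system with the correlation as parameters, and apply Tarski's real quantifier elimination. Your caveat that the output is a finite Boolean combination of polynomial inequalities is a fair refinement of the statement, but the argument itself matches the paper's.
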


\begin{proof}
If $k\in\N$ hidden variable values are enough to simulate all classical correlations, then a distribution over these values is specified by $k-1$ real numbers satisfying $k$ linear inequalities. Similarly, a conditional distribution $p(v_i|\Lambda_i)$ is specified by a certain finite number of real variables satisfying certain linear inequalities. The question of whether a given correlation is classical is then equivalent to asking whether these real variables can be chosen in such a way that they satisfy these linear inequalities and reproduce the given $p$ via\eq{classcor}. In other words, it boils down to deciding whether a given system of polynomial inequalities, containing the $p(v_1,\ldots,v_n)$ as parameters, has a solution over $\R$.

Thanks to Tarski's real quantifier elimination~\cite{Tarski}, this system of polynomial inequalities is solvable if and only if $p(v_1,\ldots,v_n)$ itself satisfies certain polynomial inequalities which can in principle be computed explicitly.
\end{proof}

Besides the trivial case of star graph scenarios, the only cases for which we know a positive answer to Problem~\ref{hvfin}, and therefore Problem~\ref{finpi}, are the $P_4$ scenario (Theorem~\ref{P4thm} and~\cite{Fine}) and the $P_5$ scenario (Theorem~\ref{P5thm} and~\cite{BGP}).

We have already noted in Remark~\ref{notconvex} that the set of classical correlations is not convex in general. So one may wonder:

\begin{prob}
What is the shape of the set of classical correlations? Can it have a non-trivial topology, or is it always homeomorphic to a ball of the appropriate dimension? If yes, what is this dimension? If no, is the set nevertheless contractible, or can it have ``holes''? Is it always simply connected? What about the analogous questions for the set of quantum correlations?
\end{prob}

At the moment, we can only offer a very simple observation concerning these topological questions:

\begin{prop}
Let $G=(V,E)$ be any correlation scenario with the number of outcomes of each measurement fixed to some $d\in\N$. Then the set of classical correlations is path-connected.
\end{prop}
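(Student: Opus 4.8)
The plan is to fix one base point and show that every classical correlation can be continuously deformed into it while staying classical. A convenient base point is the deterministic correlation $\delta_0$ assigning probability $1$ to the outcome $(v_1,\ldots,v_n)=(0,\ldots,0)$. This $\delta_0$ is classical: taking the response functions $p(v_i\mid\Lambda_i)=\delta_{v_i,0}$ (``always output $0$'') together with \emph{any} choice of source distributions $p(\lambda_e)$ reproduces it via\eq{classcor}, and a joint distribution of constant variables trivially satisfies Definition~\ref{gencorr}. Crucially, this representation of $\delta_0$ works for any hidden-variable value spaces, so the same base point is available no matter which representation an arbitrary correlation happens to use.

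The key idea is that one may deform within the classical set by deforming the response functions alone, keeping the product structure $\prod_e p(\lambda_e)$ fixed; this automatically preserves the independence of sources and hence classicality. So, given an arbitrary classical correlation $p$ with some representation, I would set, for $t\in[0,1]$,
\[
p_t(v_i\mid\Lambda_i) \;=\; (1-t)\,p(v_i\mid\Lambda_i) \;+\; t\,\delta_{v_i,0},
\]
which is again a conditional distribution for every $t$, and leave the source distributions $p(\lambda_e)$ unchanged. Feeding this into\eq{classcor} produces a family
\[
p_t(v_1,\ldots,v_n) \;=\; \int_{\{\lambda_e\}} \prod_{v_i\in V} p_t(v_i\mid\Lambda_i) \prod_{e\in E} p(\lambda_e).
\]

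I would then verify three points. First, each $p_t$ is manifestly of the form\eq{classcor}, hence classical, so the path never leaves the set. Second, $p_0=p$ while $p_1=\delta_0$, since at $t=1$ every measurement outputs $0$ regardless of the hidden variables. Third, $t\mapsto p_t$ is continuous: expanding the product over $v_i\in V$ exhibits $p_t(v_1,\ldots,v_n)$ as a polynomial in $t$ whose coefficients are fixed finite marginals of the original representation, so continuity (indeed smoothness) in $t$ is immediate. Together these give a continuous path inside the classical set from the arbitrary $p$ to the common base point $\delta_0$, and path-connectedness follows at once.

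The one subtlety to keep in mind is the non-convexity noted in Remark~\ref{notconvex}: one cannot join two classical correlations by a straight segment in the space of distributions $p(v_1,\ldots,v_n)$, because intermediate points need not even be correlations. The homotopy above sidesteps this entirely by acting on the response functions rather than on the output distributions, so the source factorization is respected throughout. In the discrete case this argument is already rigorous; for general hidden variables the identical homotopy works once the integral in\eq{classcor} is read measure-theoretically as in Appendix~\ref{app}.
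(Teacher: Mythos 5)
Your proof is correct, and the core mechanism is the same as the paper's: deform the response functions $p(v_i\mid\Lambda_i)$ linearly in $t$ while leaving the product source distributions $\prod_e p(\lambda_e)$ untouched, so that every intermediate point is manifestly of the form\eq{classcor}. Where you differ is in the global structure of the path. The paper connects two given classical correlations $p_0,p_1$ \emph{directly}: since their representations may live on different hidden-variable spaces, it first forms the paired variables $\lambda_e=(\lambda_e^0,\lambda_e^1)$ with product distribution $p(\lambda_e^0)p(\lambda_e^1)$, and then interpolates $(1-t)p(v_i\mid\Lambda_i^0)+t\,p(v_i\mid\Lambda_i^1)$ on that common space. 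You instead route every correlation through the fixed deterministic base point $\delta_0$, exploiting the fact that the constant response function $\delta_{v_i,0}$ is defined on \emph{whatever} hidden-variable spaces the given representation uses; this sidesteps the need to merge representations at all, at the cost of concatenating two paths rather than producing one. Both arguments are valid; yours is marginally more economical on the measure-theoretic side (no product-space construction), while the paper's gives a single explicit one-parameter family between any two prescribed endpoints. Your observation that the homotopy acts on response functions precisely to evade the non-convexity of Remark~\ref{notconvex} is exactly the right point, and your continuity argument (polynomial dependence on $t$ with coefficients given by fixed integrals) is sound in both the discrete and the general measure-theoretic setting.
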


\begin{proof}
Given classical correlations $p_0(v_1,\ldots,v_n)$ and $p_1(v_1,\ldots,v_n)$ on $G$, we describe how to construct an explicit $1$-parameter family of correlations continuously interpolating between these two. The assumption of classicality means that there are hidden variable distributions $p(\lambda_1^0),\:\ldots,\: p(\lambda_m^0)$ and $p(\lambda^1_1),\:\ldots,\: p(\lambda^1_m)$ together with the appropriate conditional distributions $p(v_i|\Lambda^0_i)$ and $p(v_i|\Lambda_i^1)$ such that
$$
p_0(v_1,\ldots,v_n) = \int_{\{\lambda_e;\: e\in E\}} \prod_{v_i\in V} p(v_i|\Lambda_i^0) \prod_{e\in E} p(\lambda_e^0) ,
$$
$$
p_1(v_1,\ldots,v_n) = \int_{\{\ell_e;\: e\in E\}} \prod_{v_i\in V} p(v_i|\Lambda_i^1) \prod_{e\in E} p(\lambda_e^1) ,
$$
We now define a continuous family of classical correlations indexed by a parameter $t\in[0,1]$. These us as hidden variables the pairs $\lambda_e = (\lambda_e^0,\lambda_e^1)$ with distribution $p(\lambda_e) = p(\lambda_e^0,\lambda_e^1) \eqdef p(\lambda_e^0) p(\lambda_e^1)$.

For every $t\in[0,1]$, we define a new conditional distribution for each random variable $v_i$,
\beq
\label{conddeform}
p_t(v_i|\Lambda_i) = (1-t)\cdot p(v_i | \Lambda_i^0 ) + t\cdot p(v_i | \Lambda_i^1 ) ,
\eeq
and consider the resulting joint distribution
\beq
\label{corrdeform}
p_t(v_1,\ldots,v_n) = \int_{\{\lambda_e;\: e\in E\}} \prod_{i\in V} p_t(v_i|\Lambda_i) \prod_{e\in E} p(\lambda_e) .
\eeq
By construction, this is a family of classical correlations depending continuously on $t$. For $t=0$, the conditional distributions\eq{conddeform} do not depend on the $\lambda_e^1$ component of $\lambda_e=(\lambda_e^0,\lambda_e^1)$, so that the integration over $\lambda_e^1$ in\eq{corrdeform} is trivial and the original $p_0(v_1,\ldots,v_n)$ is reproduced. Similarly for $t=1$. Then by continuity in $t$, the family $p_t$ defines a continuous path of classical correlations between the two given classical correlations $p_0$ and $p_1$.
\end{proof}

For a similar proof idea, see~\cite{BRGP}*{App.~A.1}.

We now return to the original picture of Definition~\ref{2layerdef} and consider some generalities on quantum correlations, starting with the rigorous definition. For a Hilbert space $\H$, we write $S(\H)$ for the set of states on $\H$, i.e.~positive trace-class operators of unit trace norm.

\begin{defn}
\label{gendefq}
Let $G=(S,M,C,d)$ be a correlation scenario. A correlation $p(v_1,\ldots,v_n)$ in $G$ is \emph{quantum} if the following data exist:
\begin{enumerate}
\item for every connection $(s,m)\in C$, a Hilbert space $\H_{(s,m)}$;
\item for every source $s\in S$, a quantum state $\rho_s \in\S\left(\bigotimes_{\{m\::\:sCm\}}\H_{(s,m)}\right)$;
\item for every measurement $m\in M$, a POVM $\{\mathcal{F}_m\}$ with elements $\mathcal{F}_m\in \B\left(\bigotimes_{\{s\::\:sCm\}}\H_{(s,m)}\right)$;
\end{enumerate}
such that
\beq
\label{rigdefeq}
p(m_1,\ldots,m_n) = \mathrm{tr}\left[ \left( \bigotimes_{s\in S} \rho_s \right) \left( \bigotimes_{v_i\in V} \mathcal{F}_{v_i} \right) \right]
\eeq
\end{defn}

In this equation, both the left as well as the right tensor product evaluate to operators on $\bigotimes_{(s,m)\in C} \H_{(s,m)}$, but the two tensor products are taken with respect to different orders on $C$. We take it is as understood that these tensor products are taken to be reordered in such a way that the corresponding factors match.

We leave it to the reader to show that every classical correlation is also quantum.

\begin{prop}
\label{sepclass}
If all sources in a correlation scenario emit separable quantum states, then the resulting correlation is classical.
\end{prop}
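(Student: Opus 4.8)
The plan is to turn the classical index in the separable decomposition of each source state into a hidden variable, and then to verify that the data so obtained is exactly of the form demanded by Definition~\ref{genclassdef}. Since the statement is purely about the existence of a classical model, no inequality or dimension argument is needed; everything reduces to the multiplicativity of the trace across a tensor product.

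First I would fix quantum data $\{\H_{(s,m)}\}$, $\{\rho_s\}$, $\{\mathcal{F}_m\}$ realizing $p$ in the sense of Definition~\ref{gendefq}, and use the separability hypothesis to write each source state as
$$
\rho_s = \sum_{\lambda_s} p(\lambda_s)\,\bigotimes_{\{m\,:\,sCm\}} \sigma_{(s,m)}^{\lambda_s},
$$
where $p(\lambda_s)$ is a probability distribution over a discrete index set and each $\sigma_{(s,m)}^{\lambda_s}$ is a state on $\H_{(s,m)}$. (For a multipartite source this is full separability across all parties it connects to.) In the general, possibly continuous, case this sum becomes an integral; I would treat the discrete case here and relegate the measure-theoretic version to the spirit of Appendix~\ref{app}. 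Next I declare the hidden variable attached to the edge/source $e \leftrightarrow s$ to be $\lambda_e := \lambda_s$, with distribution $p(\lambda_e) := p(\lambda_s)$, and I set, for each measurement,
$$
p(v_i \mid \Lambda_i) := \tr\left[\left(\bigotimes_{\{s\,:\,sCv_i\}} \sigma_{(s,v_i)}^{\lambda_s}\right)\mathcal{F}_{v_i}\right],
$$
which is a genuine conditional distribution precisely because $\{\mathcal{F}_{v_i}\}$ is a POVM and the $\sigma$'s are states, and which depends only on the variables $\Lambda_i=\{\lambda_s : v_i\in s\}$ of sources touching $v_i$, as locality requires.

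The core computation is then to substitute the separable decompositions into $\bigotimes_{s\in S}\rho_s$ and expand the product of sums, obtaining
$$
\bigotimes_{s\in S}\rho_s = \sum_{\{\lambda_s\}}\left(\prod_{s\in S} p(\lambda_s)\right)\bigotimes_{(s,m)\in C}\sigma_{(s,m)}^{\lambda_s}.
$$
Inserting this into \eqref{rigdefeq} and reordering the tensor factors so that those belonging to a fixed measurement $m$ are grouped together, the trace factorizes over measurements, because $\mathcal{F}_m$ acts only on $\bigotimes_{\{s:sCm\}}\H_{(s,m)}$ and the factor subspaces attached to distinct measurements are disjoint. Multiplicativity of the trace across the tensor product then yields
$$
p(v_1,\ldots,v_n) = \sum_{\{\lambda_s\}}\left(\prod_{s\in S}p(\lambda_s)\right)\prod_{v_i\in V} p(v_i\mid\Lambda_i),
$$
which is exactly \eqref{classcor} with the product distribution $\prod_{e\in E}p(\lambda_e)$ demanded by independence of sources. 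Hence $p$ is classical.

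The only genuinely delicate points are bookkeeping rather than conceptual. One must make sure that the reordering of tensor factors in the expanded product matches the subspaces on which the POVM elements act, which is the same reordering already left implicit in Definition~\ref{gendefq}; and in the non-discrete case one must replace the finite separable sums by integral representations of separable states and justify interchanging the integral with the trace. The algebraic heart of the matter---that factorized states together with factorized measurements produce a product-form classical model with independent sources---is otherwise immediate.
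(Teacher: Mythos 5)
Your proposal is correct and follows essentially the same route as the paper: the hidden variable for each source is the index in its separable decomposition, and the conditional outcome distributions are the traces of the local product states against the POVM elements. The only cosmetic differences are that the paper invokes Carath\'eodory's Theorem to fix a uniform finite index set in the finite-dimensional case and replaces your explicit trace-factorization computation with an equivalent physical-protocol argument, while likewise deferring the non-discrete (Choquet integral) version to the appendix.
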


\begin{proof}
Here, we assume all Hilbert spaces to be finite-dimensional; see~\ref{gensepproof} for the general case.

Carath{\'e}odory's Theorem guarantees the existence of some number $k\in\N$ such that every $\rho_s$ can be decomposed as
\beq
\label{sepstates}
\rho_s = \sum_{j=1}^k \mu_{s,j} \bigotimes_{\{m\::\:sCm\}} \rho_{(s,m,j)}
\eeq
for certain coefficients $\mu_{s,j}\geq 0$ with $\sum_{j=1}^k \mu_{s,j}=1$ and certain states $\rho_{(s,m,j)}\in S\left(\H_{(s,m)}\right)$. For each source $s$, we define its hidden variable $\lambda_s$ to take values $j_s\in\{1,\ldots,k\}$ with distribution $p(\lambda_s=j_s) = \mu_{s,j}$ and
\beq
\label{scm}
p(m | \lambda_s = j_s \textrm{ for all } s \textrm{ with } sCm ) \equiv \mathrm{tr}\left[ \bigotimes_{s\::\: sCm} \rho_{(s,m,j_s)} \mathcal{F}_{m} \right] .
\eeq
This reproduces the correlation\eq{rigdefeq} for the states\eq{sepstates}. Instead of verifying this formally, we would like to mention its interpretation as a concrete physical protocol. According to the decomposition\eq{sepstates}, each source $s$ can produce its state $\rho_s$ by randomly generating $\lambda_s$, distribution according to the weights $\mu_{s,j}$, and preparing and sending the corresponding state $\rho_{(s,m,j)}$ to each party $m$ for which $sCm$. In order to turn this into a completely classical protocol, we may shift the preparation of the states $\rho_{(s,m,j)}$ from the sources to the parties: if each party $m$ knows the values of the hidden variables $\lambda_s$ for all $s$ with $sCm$, then this party itself can prepare the required states $\rho_{(s,m,j)}$ locally and measure them. In this way, only classical information $\lambda_s$ has to be sent from the sources to the parties, and the parties' preparation and measurement can be considered as a single classical measurement on the $\lambda_s$'s given by the conditional probabilities\eq{scm}.
\end{proof}

\begin{prob}
Does every entangled quantum state display non-classical quantum correlations? I.e.~can one obtain non-classical quantum correlations by choosing an appropriate correlation scenario and putting one copy of the state in each source? Does it help if each source also emits classical shared randomness in addition to the entangled state?
\end{prob}

\appendix

\section{Measure-theoretical technicalities and other nuisances}
\label{app}

In the main text, we have assumed all our hidden variables to be discrete for the sake of readability. We drop this assumption here and consider the most general case: hidden variables can be arbitrary probability spaces. The following subsection are all referenced from the main text, so this appendix should be referred to only as needed.

\subsection{What is a hidden variable?} 
\label{wihv}

The literature knows examples of discrete hidden variables and continuous hidden variables. In standard Bell scenarios, Carath{\'e}odory's Theorem guarantees that considering discrete hidden variables is enough; unfortunately, we do not know whether this also holds for our case (Problem~\ref{hvfin}). Therefore, we should allow hidden variables which are as general as possible and require a definition which not only comprises discrete and continuous hidden variables, but also allos intermediate possibilities and even hidden variable with \emph{more than} continuously many values.

Since the only successful general theory of (classical) randomness is the one based on the Kolmogorov axioms for probability measures and probability spaces, this is what seems to us to be the only reasonable general definition of hidden variable:

\begin{udefn}
A \emph{hidden variable} is a probability space $(\Omega,\mathcal{E},P)$.
\end{udefn}

We think of the actual value of the hidden variable to be a ranodm element $\lambda\in\Omega$ with distribution $P$. This is the most general kind of classical hidden variable we can imagine. It comprises both discrete and continuous variables as special cases as well as everything else, for example hidden variables with so many values that $\Omega$ has cardinality greater than the continuum.

\subsection{Distributions conditional on hidden variables}
\label{appcond}

Definitions~\ref{P4classdef},~\ref{P5classdef},~\ref{C3classdef},~\ref{C4classdef},~\ref{genclassdef} talk about outcome distributions conditional on one or several hidden variables. What does a conditional distribution, like $p(a|\lambda)$, mean when $\lambda$ is not discrete?

There are several equivalent ways to answer this question. We have chosen the following one which is convenient in that it is partly formulated in terms familiar from quantum theory.

\begin{udefn}
Let $L^\infty(\Omega,\mathcal{E},P)$ be the von Neumann algebra associated to $(\Omega,\mathcal{E},P)$. A \emph{distribution of $a$ conditional on $\lambda\in\Omega$} is an assignment of some positive operator $\mathcal{O}_a^*=\mathcal{O}_a\in L^\infty(\Omega,\mathcal{E},P)$, $\mathcal{O}_a\geq 0$, to every $a$ such that $\sum_{a} \mathcal{O}_a = \mathbbm{1}$. 
\end{udefn}

The attentive reader will have noticed that this is nothing but a POVM in $L^\infty(\Omega,\mathcal{E},P)$ indexed by $a$. Roughly speaking, each $\mathcal{O}_a$ is a real-valued function on $\Omega$ whose values $\mathcal{O}_a(\lambda)$ represent the conditional probabilities $p(a|\lambda)$. For finite $\Omega$ with $\mathcal{E}=2^\Omega$ and $P({\lambda})>0$ for every $\lambda\in\Omega$, this intuition is exact; in general though, it has to be kept in mind that $\mathcal{O}_a$ is not a single function, but rather a whole equivalence class of functions, such that expressions like
$$
p(a) = \int_{\lambda} \mathcal{O}_a(\lambda) P(\lambda)
$$
are well-defined in the sense that the value of the integral is independent of the choice of representative.


In general, a measurement $a$ will depend on several hidden variables given by probability spaces $(\Omega_1,\mathcal{E}_1,P_1),\:\ldots,\:(\Omega_n,\mathcal{E}_n,P_n)$. In this case, $\mathcal{O}$ should be a POVM in the von Neumann algebra of the product probability space $\left(\prod_i\Omega_i,\prod_i\mathcal{E}_i,\prod_i P_i\right)$.

We now state Definition~\ref{genclassdef} again in the present language.

\begin{udefn}
\label{rigdef}
Let $G=(V,E)$ be a correlation scenario. A correlation $p(v_1,\ldots,v_n)$ in $G$ is \emph{classical} if the following data exist:
\begin{enumerate}
\item for every $e\in E$, a hidden variable $\lambda_e$ given in terms of a probability space $(\Omega_e,\mathcal{E}_e,P_e)$;
\item Conditional probabilities $\mathcal{O}_a\in L^\infty(\Omega_{\Lambda_i},\mathcal{E}_{\Lambda_i},P_{\Lambda_i})$ where $\Lambda_i=\{\lambda_e; v_i\in e\}$ is the collection of hidden variables associated to all the sources connected to $v_i$, and $(\Omega_{\Lambda_i},\mathcal{E}_{\Lambda_i},P_{\Lambda_i})$ is the corresponding product probability space;
\end{enumerate}
such that
\beq
\label{rigdefeq}
p(v_1,\ldots,v_n) = \int_{\{\lambda_e;\: e\in E\}} \prod_{v_i\in V} \mathcal{O}_{v_i}(\Lambda_i) \prod_{e\in E} dP(\lambda_e)
\eeq
\end{udefn}

In particular, this clarifies also the definitions of classical correlation in our example scenarios, Definitions~\ref{P4classdef}~\ref{P5classdef},~\ref{C3classdef},~\ref{C4classdef}

\subsection{Hidden variables can be assumed deterministic}
\label{appwlogdet}

We have outlined in Remark~\ref{wlogdet} why the conditional distributions $\mathcal{O}_a$ as used above can in fact taken to be deterministic. In our present picture, determinism means $\mathcal{O}_a^2=\mathcal{O}_2$, i.e.~that $\mathcal{O}_a$ is a projection. This is equivalent to $\mathcal{O}_a(\lambda)\in\{0,1\}$ for almost all $\lambda\in\Omega$ which corresponds to determinism in the form $p(a|\lambda)\in\{0,1\}$.

We now turn the intuitive argument of Remark~\ref{wlogdet} into a rigorous proof sketch.

\begin{uprop}
Let $G=(V,E)$ be a correlation scenario. If $p$ is classical, then there exists a classical model for $p$ in which all $\mathcal{O}_{v_i}$ are projections.
\end{uprop}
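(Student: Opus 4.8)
The plan is to make rigorous the informal derandomization of Remark~\ref{wlogdet} by attaching to each measurement an independent continuous randomizer, carried along one of the sources feeding that measurement, and then reading off the outcome deterministically from a cumulative-distribution construction. Concretely, for each $v_i\in V$ fix a source $e(i)\in E$ with $v_i\in e(i)$; here we assume every vertex lies in some edge, since an isolated vertex carries no hidden variable and is a genuine exception (see the last paragraph). I then replace the probability space $(\Omega_e,\mathcal{E}_e,P_e)$ of each source $e$ by its product with one copy of $([0,1],\mathcal{B},\mathrm{Leb})$ for every index $i$ with $e(i)=e$, writing $r_i$ for the new coordinate. The measure on each augmented source is kept a product (the original $P_e$ on $\Omega_e$ times Lebesgue measure on each $[0,1]$ factor), so the enlarged family of hidden variables is still independent, as \ref{IS} demands.

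Next I build the deterministic measurements. Fix $v_i$, let $a$ range over its finitely many outcomes, and let $q_a=\mathcal{O}_{v_i,a}\in L^\infty(\Omega_{\Lambda_i})$ be measurable representatives of its conditional probabilities, so $q_a\geq 0$ and $\sum_a q_a=1$ almost everywhere. Put $Q_0\equiv 0$ and $Q_a=\sum_{b\leq a}q_b$, so the $Q_a$ are measurable functions of $\Lambda_i$ with $Q_d\equiv 1$. I then let the new measurement return outcome $a$ on the set where $Q_{a-1}(\Lambda_i)\leq r_i<Q_a(\Lambda_i)$; that is, I define $\mathcal{O}'_{v_i,a}$ to be the indicator of this set, regarded as an element of $L^\infty$ of the augmented product space for $\Lambda_i$ (it depends only on the original $\Lambda_i$ and on $r_i$, ignoring the other randomizers carried by the sources connected to $v_i$). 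Each such set is measurable because the $Q_a$ are measurable; the indicators take values in $\{0,1\}$, so every $\mathcal{O}'_{v_i,a}$ is a projection; and the intervals $[Q_{a-1},Q_a)$ partition $[0,1)$, so $\sum_a\mathcal{O}'_{v_i,a}=\mathbbm{1}$. Hence $\{\mathcal{O}'_{v_i,a}\}_a$ is a projective, i.e.\ deterministic, POVM.

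It remains to check that the augmented model reproduces $p$. First, by Fubini,
\[
\int_0^1 \mathcal{O}'_{v_i,a}(\Lambda_i,r_i)\,dr_i = Q_a(\Lambda_i)-Q_{a-1}(\Lambda_i)=q_a(\Lambda_i),
\]
so integrating out $r_i$ turns $\mathcal{O}'_{v_i}$ back into $\mathcal{O}_{v_i}$. Second, in the classical-model integral~\eq{rigdefeq} for the augmented model, and for fixed values of the original hidden variables, the integrand factorizes over the mutually independent randomizers, since each factor $\mathcal{O}'_{v_j}$ depends on the single coordinate $r_j$. Integrating out all the $r_j$ simultaneously therefore replaces every $\mathcal{O}'_{v_j}$ by $\mathcal{O}_{v_j}$, and the integral collapses to the original $\int \prod_{v_i}\mathcal{O}_{v_i}(\Lambda_i)\prod_{e}dP(\lambda_e)=p$. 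Thus the augmented, purely projective model realizes the same correlation.

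I expect the main difficulty to lie not in any single idea but in the measure-theoretic bookkeeping: selecting honestly measurable representatives $q_a$ so that the cumulative functions $Q_a$ and the sets $\{Q_{a-1}\leq r_i<Q_a\}$ are measurable and independent of the choice of representative up to null sets, and justifying the repeated Fubini interchanges on the finite products of probability spaces (here harmless, as all integrands are bounded by $1$). The one structural caveat is the isolated-vertex case: a measurement attached to no source can be made deterministic only when its marginal is already a point mass, so such vertices must be excluded or handled separately, exactly as the discrete argument of Remark~\ref{wlogdet} tacitly assumed the presence of a source to absorb the auxiliary randomness.
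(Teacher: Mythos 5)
Your construction is correct and is essentially the paper's own argument: the paper likewise adjoins a Lebesgue $[0,1]$ factor to a source connected to the given vertex and defines the new projective measurement via the cumulative sums $\sum_{w'\leq w}\mathcal{O}_{w'}(\Lambda_i)$, the only difference being that it derandomizes one vertex at a time and iterates rather than handling all vertices in parallel as you do. Your caveat about isolated vertices is a genuine (if degenerate) edge case that the paper's proof also tacitly assumes away by starting with ``choosing a source $e\in E$ which connects to $v_1$''.
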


\begin{proof}
We show how to replace the $\mathcal{O}_{w}$'s by a projection for some fixed $w\in V$; the claim then follows from applying this procedure to every vertex $w\in V$. We start by choosing a source $e\in E$ which connects to $v_1$ and replace the given probability space $(\Omega_e,\mathcal{E}_e,P_e)$ by $\Omega'_e\equiv \Omega_e\times[0,1]$, which we take to be equipped with the product $\sigma$-algebra $\mathcal{E}'_e$ and the product measure $P'_e$, where $[0,1]$ carries the Lebesgue $\sigma$-algebra and measure; the second factor in this product represents the additional random number mentioned in Remark~\ref{wlogdet}. We enumerate the possible outcomes as $w\in\{1,\ldots,d\}$ for some $d\in\N$, and define
$$
\mathcal{O}'_{w} \;:\; \Omega_e\times [0,1] \to \{0,1\},\quad (\lambda,x) \mapsto \left\{ \begin{array}{cl} 1 & \textrm{if } \sum_{w'=1}^{w-1} \mathcal{O}_{w'}(\lambda) \leq x < \sum_{w'=1}^{w} \mathcal{O}_{w'}(\lambda) \\ 0 & \textrm{otherwise} \end{array}\right.
$$
which is easily seen to represent a projection in $L^\infty(\Omega'_e,\mathcal{E}'_e,P'_e)$. The requirement $\sum_{w=1}^d \mathcal{O}'_w = \mathbbm{1}$ holds by construction in $L^\infty(\Omega'_e,\mathcal{E}'_e,P'_e)$, i.e.~up to a set of measure zero.

All $\mathcal{O}_{v_i}$ with $v_i\neq w$ connecting to $e$ we take to operate as before in the sense that we replace them by $\mathcal{O}'_{v_i}(\lambda,x) = \mathcal{O}_{v_i}(\lambda)$; all other sources $\neq e$ and all measurements not connected to $e$ remain completely unchanged. 

We leave it to the reader to verify that these replacements preserve the correlation. 
\end{proof}

\subsection{General proof of Lemma~\ref{C3ei}}
\label{eigenproof}

We follow essentially the same lines as in the discrete-variable proof of the main text. Since we do not know of a formulation of the data processing inequality for (relative) Shannon entropy on arbitrary probability spaces, and similarly for submodularity of entropy, we make our own definitions and derive our inequalities in analogy with the discrete case. We start with the first argument involving the data processing inequality. In order to obtain finite quantitites, we need to work with conditional entropies, in which the hidden variables appear only as conditioning variables. For the sake of illustration, we start with the discrete-variable case, in which
$$
H(a|\lambda_{\T{A}\T{B}}) = \sum_{a,\lambda_{\T{A}\T{B}}} f\left(p(a|\lambda_{\T{A}\T{B}})\right) p(\lambda_{\T{A}\T{B}}),
$$
where we abbreviated $f(x)=-x\cdot \log x$, with $f(0)\equiv 0$ as usual. Thanks to the condtional independence $p(a|\lambda_{\T{A}\T{B}})=p(a|\lambda_{\T{A}\T{B}},b)$ and concavity of $f$,
\begin{align*}
H(a|\lambda_{\T{A}\T{B}}) &= \sum_{a,b,\lambda_{\T{A}\T{B}}} f\left(p(a|\lambda_{\T{A}\T{B}})\right) p(\lambda_{\T{A}\T{B}})p(b|\lambda_{\T{A}\T{B}}) = \sum_{a,b} \sum_{\lambda_{\T{A}\T{B}}} f\left(p(a|\lambda_{\T{A}\T{B}},b)\right) p(\lambda_{\T{A}\T{B}}|b) p(b) \\
&\leq \sum_{a,b} f\left(\sum_{\lambda_{\T{A}\T{B}}} p(a|\lambda_{\T{A}\T{B}},b) p(\lambda_{\T{A}\T{B}}|b)\right) p(b) = \sum_{a,b} f\left( p(a|b)\right) p(b) = H(a|b).
\end{align*}
We now emulate this estimate in the general case by defining
$$
H(a|\lambda_{\T{A}\T{B}}) \equiv \sum_a \int_{\lambda_{\T{A}\T{B}}} f\left( \mathcal{O}_a(\lambda_{\T{A}\T{B}}) \right) dP_{\T{A}\T{B}}(\lambda_{\T{A}\T{B}})
$$
and noting that this is well-defined, thanks to $\mathcal{O}_a(\lambda_{\T{A}\T{B}})\in [0,1]$ a.s., and coincides with the standard definition in the discrete case. We rewrite this as
$$
H(a|\lambda_{\T{A}\T{B}}) = \sum_{a,b} \int_{\lambda_{\T{A}\T{B}}} f\left( \mathcal{O}_a(\lambda_{\T{A}\T{B}}) \right) \frac{\mathcal{O}_b(\lambda_{\T{A}\T{B}}) dP_{\T{A}\T{B}}(\lambda_{\T{A}\T{B}})}{p(b)} p(b) .
$$
Now for $p(b)>0$, the fraction in the integrand is again a measure on $(\Omega_{\T{A}\T{B}},\mathcal{E}_{\T{A}\T{B}})$, and Jensen's inequality gives
$$
H(a|\lambda_{\T{A}\T{B}}) \leq \sum_{a,b} f\left( \int_{\lambda_{\T{A}\T{B}}} \mathcal{O}_a(\lambda_{\T{A}\T{B}}) \cdot \frac{\mathcal{O}_b(\lambda_{\T{A}\T{B}}) dP_{\T{A}\T{B}}(\lambda_{\T{A}\T{B}})}{p(b)} \right) p(b)
$$
Since $p(a,b) = \int_{\lambda_{\T{A}\T{B}}} \mathcal{O}_a(\lambda_{\T{A}\T{B}})\mathcal{O}_b(\lambda_{\T{A}\T{B}}) dP_{\T{A}\T{B}}$, the integral inside $f$ evaluates to $p(a|b)$, so that
\beq
\label{ing1}
H(a|\lambda_{\T{A}\T{B}}) \leq \sum_{a,b} f\left( p(a|b) \right) p(b) = H(a|b),
\eeq
which is the data processing inequality we wanted to prove.

We now make the usual estimates known from proofs of nonnegativity of conditional mutual information or nonnegativity of Kullback-Leibler divergence~\cite{CT}*{Thm.~8.6.1},
\begin{align*}
& H(a|\lambda_{\T{A}\T{B}}) + H(a|\lambda_{\T{A}\T{C}}) - H(a) - H(a|\lambda_{\T{A}\T{B}}\lambda_{\T{A}\T{C}})  \\
& = \sum_a \left[ \int_{\lambda_{\T{A}\T{B}},\lambda_{\T{A}\T{C}}} \mathcal{O}_a(\lambda_{\T{A}\T{B}},\lambda_{\T{A}\T{C}}) \bigg( - \log(\mathcal{O}_a(\lambda_{\T{A}\T{B}})) - \log(\mathcal{O}_a(\lambda_{\T{A}\T{C}})) + \log(p(a)) + \log(\mathcal{O}_a(\lambda_{\T{A}\T{B}},\lambda_{\T{A}\T{C}})) \bigg)\, dP_{\T{A}\T{B}}\, dP_{\T{A}\T{C}} \right] \\
& = - \sum_a \left[ \int_{\lambda_{\T{A}\T{B}},\lambda_{\T{A}\T{C}}} \mathcal{O}_a(\lambda_{\T{A}\T{B}},\lambda_{\T{A}\T{C}}) \log \left( \frac{\mathcal{O}_a(\lambda_{\T{A}\T{B}})\mathcal{O}_a(\lambda_{\T{A}\T{C}})}{p(a)\mathcal{O}_a(\lambda_{\T{A}\T{B}},\lambda_{\T{A}\T{C}})} \right) \, dP_{\T{A}\T{B}}\, dP_{\T{A}\T{C}} \right] \\
& \geq - \log \left[ \sum_a \int_{\lambda_{\T{A}\T{B}},\lambda_{\T{A}\T{C}}} \mathcal{O}_a(\lambda_{\T{A}\T{B}},\lambda_{\T{A}\T{C}}) \cdot \frac{\mathcal{O}_a(\lambda_{\T{A}\T{B}})\mathcal{O}_a(\lambda_{\T{A}\T{C}})}{p(a)\mathcal{O}_a(\lambda_{\T{A}\T{B}},\lambda_{\T{A}\T{C}})} dP_{\T{A}\T{B}}\, dP_{\T{A}\T{C}} \right] = - \log\left[ \sum_a \frac{p(a)p(a)}{p(a)} \right] = 0.
\end{align*}
Since $H(a|\lambda_{\T{A}\T{B}}\lambda_{\T{A}\T{C}})$ is defined as the integral of an a.s.~nonnegative function, it is itself nonnegative, and therefore
\beq
\label{ing2}
H(a|\lambda_{\T{A}\T{B}}) + H(a|\lambda_{\T{A}\T{C}}) \geq H(a) .
\eeq
Piecing finally the two ingredients\eq{ing1} and\eq{ing2} together, we find
$$
I(a:b) + I(a:c) = 2H(a) - H(a|b) - H(a|c) \leq 2H(a) - H(a|\lambda_{\T{A}\T{B}}) - H(a|\lambda_{\T{A}\T{C}}) \leq H(a),
$$
as was to be shown.

\subsection{General proof of Theorem~\ref{C4thm}}
\label{C4proof}

In the discrete-variable case, we started with the assumption that the measurements were deterministic and noticed that if a certain combination of outcomes has positive probability, then there has to be a combination of hidden variable values, each occurring with positive probability, which produces that outcome combination.

This reasoning needs to be modified in order to apply in the general case; when dealing with non-atomic probability spaces, no single hidden variable combination has positive probability. It is therefore necessary to consider combinations of \emph{sets} of hidden variable values, which is unfortunately somewhat technical.

\begin{ulem}
Let $(\Omega_1,\mathcal{E}_1,P_1),\:\ldots,\:(\Omega_n,\mathcal{E}_n,P_n)$ be probability spaces and let $\Omega=\prod_{i=1}^n\Omega_i$ be equipped with the product $\sigma$-algebra $\mathcal{E}=\sigma\left(\prod_{i=1}^n \mathcal{E}_i\right)$ and the product measure $P=\prod_{i=1}^n P_i$, so that $(\Omega,\mathcal{E},P)$ is a probability space.

Then, for a measurable function $f:\Omega\to\{0,1\}$ with $P(f=1)>0$ and any $\eps>0$, there exist measurable subsets $\Xi_i\subseteq\Omega_i$, with $P_i(\Xi_i)>0$, such that\\
$$
P\left(\,f=1\,|\,\Xi_1\times\ldots\times\Xi_n\,\right) > 1 - \eps.
$$
\end{ulem}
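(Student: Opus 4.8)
The plan is to reduce the statement to a density/averaging argument after approximating the set $A \equiv \{f=1\}$ from within the algebra generated by measurable rectangles. Write $\alpha = P(A) > 0$. Since the desired conclusion is precisely that some measurable rectangle $\Xi_1 \times \cdots \times \Xi_n$ of positive measure has density larger than $1-\eps$ inside $A$, the natural strategy is to produce many candidate rectangles at once, by approximating $A$, and then to extract a good one by a pigeonhole/averaging argument.

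First I would invoke the standard fact that the finite disjoint unions of measurable rectangles $E_1 \times \cdots \times E_n$ (with $E_i \in \mathcal{E}_i$) form an algebra $\mathcal{A}$ whose generated $\sigma$-algebra is exactly the product $\sigma$-algebra $\mathcal{E}$. The approximation theorem for finite measures then guarantees that for every $\delta > 0$ there is a set $B \in \mathcal{A}$, say $B = \bigsqcup_{k=1}^m R_k$ with $R_k = \Xi_1^{(k)} \times \cdots \times \Xi_n^{(k)}$ and the $R_k$ pairwise disjoint, such that $P(A \triangle B) < \delta$. From this one reads off $P(A \cap B) \geq \alpha - \delta$ and $P(B) \leq \alpha + \delta$, while disjointness gives $P(A \cap B) = \sum_k P(A \cap R_k)$ and $P(B) = \sum_k P(R_k)$.

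The averaging step is then elementary. Suppose, for contradiction, that every rectangle with $P(R_k) > 0$ satisfies $P(A \cap R_k) \leq (1-\eps) P(R_k)$. Summing over $k$ yields $P(A \cap B) \leq (1-\eps) P(B)$, hence $\alpha - \delta \leq (1-\eps)(\alpha + \delta)$, which rearranges to $\eps \alpha \leq (2-\eps)\delta \leq 2\delta$. Choosing $\delta < \eps\alpha/2$ at the outset makes this impossible, so some $R_k$ must satisfy $P(A \cap R_k)/P(R_k) > 1-\eps$; in particular $P(R_k) > 0$, which forces $P_i(\Xi_i^{(k)}) > 0$ for each $i$. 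Setting $\Xi_i \equiv \Xi_i^{(k)}$ gives the claim, since $P(f=1 \mid \Xi_1 \times \cdots \times \Xi_n) = P(A \cap R_k)/P(R_k)$.

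The only genuinely nontrivial ingredient is the rectangle-approximation theorem, and I expect this to be the main point to get right: it rests on the fact that the complement of a measurable rectangle is a finite disjoint union of rectangles (so that $\mathcal{A}$ is indeed an algebra) together with the standard approximation-from-the-generating-algebra lemma for finite measures, provable via the monotone class theorem or Carath\'eodory extension. Everything after the approximation is purely combinatorial and requires no further measure-theoretic input.
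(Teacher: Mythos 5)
Your proposal is correct and follows essentially the same route as the paper: approximate $\{f=1\}$ by a finite disjoint union of measurable rectangles via the standard approximation lemma for the generating algebra, then extract a single high-density rectangle by an averaging argument. Your averaging step (summing $P(A\cap R_k)\leq(1-\eps)P(R_k)$ and deriving $\eps\alpha\leq(2-\eps)\delta$) is a slightly cleaner bookkeeping than the paper's weighted average of the ratios $P(R_k\setminus A)/P(R_k\cap A)$, but the substance is identical.
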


\begin{proof}
This lemma can be reformulated as saying that if $\Theta\subseteq\Omega$ has positive measure, then there exist $\Xi_i\subseteq\Theta$ of positive measure such that $P(\Theta|\prod_{i=1}^n \Xi_i) > 1-\eps$.

We start to prove this reformulation by noting that the collection of sets which are finite disjoint unions of product sets is an algebra of sets~\cite{Halmos}*{33.E}. It then follows from the approximation lemma of measure theory~\cite{Halmos}*{13.D} that $\Theta$, a set of positive measure, can be $\delta$-approximated by a set $S(\delta)$ which is a finite union of product sets, i.e.~for every $\delta>0$ we can find such $S(\delta)$ with $P(\Theta\setminus S(\delta))<\delta$ and $P(S(\delta)\setminus\Theta)<\delta$. We assume $\delta<P(\Theta)$, so that $P(S(\delta))>0$ is guaranteed.

Decomposing this $S(\delta)$ into a finite union of disjoint product sets gives
$$
S(\delta) = \bigcup_{j=1}^{k(\delta)} \Xi^j(\delta)
$$
for product sets $\Xi^j(\delta)=\Xi_1^j(\delta)\times\ldots\times\Xi_n^j(\delta)$, which we may assume to be of positive measure (if some $\Xi^j(\delta)$ has zero measure, then it may as well be omitted). By construction, we know
$$
\sum_{j=1}^{k(\delta)} P(\Xi^j(\delta)\cap \Theta) > P(\Theta) - \delta,\qquad \sum_{j=1}^{k(\delta)} P(\Xi^j(\delta)\setminus\Theta) < \delta.
$$
Since the second inequality states that
$$
\sum_{j=1}^k \frac{P(\Xi^j(\delta)\cap\Theta)}{P(S(\delta))} \cdot \frac{P(\Xi^j(\delta)\setminus\Theta)}{P(\Xi^j(\delta)\cap\Theta)} < \frac{\delta}{P(S(\delta))} ,
$$
and this sum is a convex combination, we conclude that there is at least one index $j$ for which
$$
\frac{P(\Xi^j(\delta)\setminus\Theta)}{P(\Xi^j(\delta)\cap\Theta)} < \frac{\delta}{P(S(\delta))} < \frac{\delta}{P(\Theta)-\delta} .
$$
We define $\Xi = \prod_{i=1}^n \Xi_i$ to be equal to this $\Xi^j(\delta)$. Then
$$
P(\Theta|\Xi) = \frac{P(\Theta\cap\Xi)}{P(\Xi\setminus\Theta)+P(\Xi\cap\Theta)} = \left(1 + \frac{P(\Xi\setminus\Theta)}{P(\Xi\cap\Theta)}\right)^{-1} > \left(1 + \frac{\delta}{P(\Theta)-\delta}\right)^{-1}.
$$
For $\delta$ sufficiently small, this is $>1-\eps$, as has been claimed.
\end{proof}

We return to the main line of the proof of Theorem~\ref{C4thm} and fix $\eps>0$. In a hidden variable combination like $(\ell_{\T{A}\T{B}},\ell_{\T{B}\T{Y}},\ell_{\T{Y}\T{X}},\ell_{\T{X}\T{A}})$, each component now becomes a \emph{set} of hidden variable values having positive probability. By the lemma, we can choose these sets in such a way that this when such a combination of hidden variables occurs, then the joint outcome is $(0,0,0,0)$ with probability $>1-\eps$. In particular, when a hidden variable combination in $(\ell_{\T{A}\T{B}},\ell_{\T{B}\T{Y}},\ast,\ast)$ occurs, where the last two components are unspecified, then $b=0$ with probability $>1-\eps$. Similarly, we find a combination of sets $\kappa_{\T{A}\T{B}},\kappa_{\T{B}\T{Y}},\kappa_{\T{Y}\T{X}},\kappa_{\T{X}\T{A}})$ producing $(1,0,1,1)$ with probability $>1-\eps$. Therefore, the combination $(\kappa_{\T{A}\T{B}},\kappa_{\T{B}\T{Y}},\ell_{\T{Y}\T{X}},\kappa_{\T{X}\T{A}})$ yields $(1,0,1,1)$ with probability $>1-2\eps$; it should now be clear how to complete the proof, following the steps of the discrete-variable case and bounding the probabilities in each step. Choosing $\eps$ small enough then shows that the probability to get the outcome $(0,0,1,1)$ is strictly positive in contradiction with\eq{PRbox}.

\subsection{Separable states give rise to classical correlations}
\label{gensepproof}

Here, we lift the restriction of finite-dimensionality from the proof of Proposition~\ref{sepclass}. First of all, what does separability even mean in the infinite-dimensional case? In the following, we work with arbitrary Hilbert spaces $\H$ which are not necessarily separable, and put the usual trace-norm topology on $S(\H)$; upon interpreting a quantum state on $\H$ as a normal positive linear functional on $\B(\H)$, this is the weak $*$-topology. Moreover, $S(\H)$ carries the Borel $\sigma$-algebra induced from this (metrizable) topology.

\begin{udefn}[cf.~\cite{HSW}]
Let $\H_1,\ldots,\H_k$ be Hilbert spaces. A state $\rho\in S(\H_1\otimes\ldots\otimes\H_k)$ is \emph{separable} if it lies in the closed convex hull of the set of product states.
\end{udefn}

In general, one cannot expect a separable state to have a decomposition into a finite or infinite convex combination of product states; rather, integrals are needed~\cite{HSW}.

\begin{ulem}
Let $\rho\in S(\H_1\otimes\ldots\otimes\H_k)$ be separable. Then there exists a probability measure $P$ on the set of product states such that
$$
\rho = \int_{S(\H_1)\otimes\ldots\otimes S(H_k)} \left(\rho_1\otimes\ldots \otimes\rho_k \right) dP(\rho_1\otimes\ldots\otimes\rho_k)
$$
\end{ulem}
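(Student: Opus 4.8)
The plan is to realize $\rho$ as the barycenter of a probability measure supported on product states via Choquet's theorem, the only real difficulty being that in infinite dimensions the relevant convex set fails to be compact in trace norm. First I would reduce to separable Hilbert spaces. Writing $\rho_i = \tr_{\neq i}(\rho)$ for the reduced states and $\H_i'$ for the closure of the range of $\rho_i$, each $\rho_i$ is trace class, hence compact, so each $\H_i'$ is separable. Since $\rho = P\rho P$ for the product projection $P = P_1\otimes\cdots\otimes P_k$ onto $\H_1'\otimes\cdots\otimes\H_k'$, compressing the trace-norm-approximating convex combinations of product states by $P$ (a trace-norm contraction sending products to subnormalized products) and renormalizing shows that $\rho$ is already separable as a state on $\H_1'\otimes\cdots\otimes\H_k'$. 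Thus I may assume each $\H_i$ is separable.

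Next I would set up the compact convex framework. With each $\H_i$ separable, $\mathcal{K}(\H_1\otimes\cdots\otimes\H_k)$ is a separable Banach space whose dual is the trace class $\mathcal{T}$, so by Banach--Alaoglu the set of subnormalized states $K_0 = \{\omega\ge 0 : \tr\,\omega\le 1\}$ is compact and metrizable in the weak-$*$ topology $\sigma(\mathcal{T},\mathcal{K})$. Let $K$ be the weak-$*$ closed convex hull of $\{0\}\cup\mathcal{Q}$, where $\mathcal{Q}$ denotes the set of normalized product states; then $K\subseteq K_0$ is compact convex metrizable, and since trace-norm convergence implies weak-$*$ convergence, the definition of separability places $\rho\in K$.

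Then I would identify the extreme points. By Milman's theorem $\mathrm{ext}(K)$ lies in the weak-$*$ closure of $\{0\}\cup\mathcal{Q}$, which I claim consists of $0$ together with subnormalized product states (this closedness of the subnormalized products is a routine but necessary check). Among these, the extreme points of $K$ are exactly $0$ and the pure product states $|\psi_1\rangle\langle\psi_1|\otimes\cdots\otimes|\psi_k\rangle\langle\psi_k|$, since any subnormalized product $t\sigma$ with $0<t<1$ splits as $t\sigma+(1-t)\cdot 0$ and any mixed normalized product is a convex combination of pure products. Choquet's theorem then yields a probability measure $\mu$ supported on $\mathrm{ext}(K)$ with barycenter $\rho$. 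Because $\tr\,e = 1$ on pure product states while $\tr\,0 = 0$, the identity $1 = \tr\,\rho = \int \tr(e)\,d\mu(e)$ forces $\mu(\{0\}) = 0$, so $\mu$ is in fact supported on pure product states. Pushing $\mu$ forward along the parametrization $(\rho_1,\ldots,\rho_k)\mapsto\rho_1\otimes\cdots\otimes\rho_k$ (Borel measurable thanks to separability) produces the desired probability measure $P$; the weak-$*$ barycenter identity is an equality of trace-class operators and is exactly the claimed integral representation.

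The main obstacle is precisely the failure of trace-norm compactness: mass can escape to infinity, which is why I pass to subnormalized states in the weak-$*$ topology to recover compactness and then argue that the escaped-mass atom at $0$ carries no weight. The reduction to separable spaces (needed for metrizability of $K$, hence for the metrizable form of Choquet's theorem and for Borel measurability of the parametrization) and the verification that the weak-$*$ closure of the product states contains no extreme points beyond $0$ and the pure products are the two technical points demanding the most care.
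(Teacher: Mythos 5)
Your proposal is correct and follows the same route as the paper's proof, which simply invokes Milman's converse to the Krein--Milman theorem together with Choquet's theorem. In fact your write-up is more careful than the paper's one-line argument: the paper asserts that ``the set of product states is compact,'' which fails in trace norm in infinite dimensions, whereas you correctly restore compactness by passing to the weak-$*$ topology on subnormalized states, adjoining $0$ to close the set of product states, and then ruling out the atom at $0$ via the trace/barycenter identity.
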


In the finite-dimensional case, one can take the measure $P$ to have finite support, so that the integral becomes a finite convex combination.

\begin{proof}
Since the set of product states is compact, Milman's converse to the Krein-Milman Theorem guarantess that every extreme point of the set of separable states is a product state. Then the assertion follows from Choquet's Theorem~\cite{Phelps}.
\end{proof}

This should make it clear how to prove Proposition~\ref{sepclass} in the general case: to a source $s$ sending out a separable state
$$
\rho_s = \int_{\prod_{\{m\::\: sCm\}} S(\H_{(s,m)})} \left( \bigotimes_{\{m\::\: sCm\}} \rho_{(s,m)} \right) dP_s\left(\bigotimes_{\{m\::\: sCm\}} \rho_{(s,m)}\right) ,
$$
we associate the hidden variable probability space $\Omega_s=\prod_{\{m\::\: sCm\}} S(\H_{(s,m)})$ equipped with its Borel $\sigma$-algebra and its probability measure $P_s$, so that the hidden variable $\lambda_s$ ranges over all product states $\lambda_s=\bigotimes_{\{m\::\: sCm\}} \rho_{(s,m)}$. Concerning the conditional probabilities,\eq{scm} now reads
$$
\mathcal{O}_{m} \left(\left\{\lambda_s = \bigotimes_{\{m'\::\: sCm'\}} \rho_{(s,m')} : sCm\right\}\right) \equiv \mathrm{tr}\left[ \left(\bigotimes_{s\::\: sCm} \rho_{(s,m)}\right) \mathcal{F}_{m} \right] .
$$
This is a continuous, and therefore measurable, function on $\Omega_s=\prod_{\{m\::\: sCm\}} S(\H_{(s,m)})$.

The intuition about how this classical model works is similar to the finite-dimensional case. One may think of the hidden variable $\lambda_s$ as an abstract classical description of the product state sent out by the source; each party $m$ then receives all descriptions from all the sources it connects to, for each of these products states retains only the information concerning the system to him while throwing away the rest, and uses that information to calculate his required outcome distribution which can then get sampled in order to obtain his outcome. By construction, this produces the desired joint distribution of outcomes.

\begin{bibdiv}
\begin{biblist}

\bib{AGR}{article}{
  title = {Experimental Tests of Realistic Local Theories via Bell's Theorem},
  author = {Aspect, Alain},
 author = {Grangier, Philippe},
 author = {Roger, G\'erard},
  journal = {Phys. Rev. Lett.},
  volume = {47},
  issue = {7},
  pages = {460--463},
  year = {1981},
  month = {Aug},
  publisher = {American Physical Society}
}

\bib{Ay2}{article}{
 author = {Ay, Nihat},
 title = {A refinement of the common cause principle},
 journal = {Discrete Appl. Math.},
 volume = {157},
 number = {10},
 year = {2009},
 pages = {2439--2457},
 publisher = {Elsevier Science Publishers B.~V.},
} 

\bib{Barrett}{article}{
  title = {Information processing in generalized probabilistic theories},
  author = {Barrett, Jonathan},
  journal = {Phys. Rev. A},
  volume = {75},
  number = {3},
  pages = {032304},
  numpages = {21},
  year = {2007},
  month = {Mar},
  publisher = {American Physical Society}
}

\bib{Bell}{article}{
    author = {Bell, John S.},
    journal = {Physics},
    pages = {195--200},
    title = {{On the Einstein-Podolsky-Rosen paradox}},
    volume = {1},
    year = {1964}
}

\bib{SUS}{book}{
    AUTHOR = {Bell, John S.},
     TITLE = {Speakable and unspeakable in quantum mechanics},
      NOTE = {Collected papers on quantum philosophy},
 PUBLISHER = {Cambridge University Press},
   ADDRESS = {Cambridge},
      YEAR = {1987},
     PAGES = {xii+212},
}

\bib{BG}{article}{
  title = {How Much Measurement Independence Is Needed to Demonstrate Nonlocality?},
  author = {Barrett, Jonathan},
 author = {Gisin, Nicolas},
  journal = {Phys. Rev. Lett.},
  volume = {106},
  issue = {10},
  pages = {100406},
  numpages = {4},
  year = {2011},
  publisher = {American Physical Society}
}

\bib{BGP}{article}{
  title = {Characterizing the Nonlocal Correlations Created via Entanglement Swapping},
  author = {Branciard, Cyril},
 author = {Gisin, Nicolas},
 author = {Pironio, Stefano},
  year = {2010}
  issue = {17},
  numpages = {4},
  journal = {Phys. Rev. Lett.},
  publisher = {American Physical Society},
  pages = {170401},
  volume = {104}
}

\bib{BRGP}{article}{
  title = {Bilocal versus nonbilocal correlations in entanglement-swapping experiments},
  author = {Branciard, Cyril},
  author = {Rosset, Denis},
  author = {Gisin, Nicolas},
  author = {Pironio, Stefano},
  journal = {Phys. Rev. A},
  volume = {85},
  issue = {3},
  pages = {032119},
  numpages = {21},
  year = {2012},
  month = {Mar},
  publisher = {American Physical Society}
}

\bib{BY}{article}{,
  author={Brandenburger, Adam},
 author = {Yanofsky, Noson},
  title={A classification of hidden-variable properties},
  journal={Journal of Physics A: Mathematical and Theoretical},
  volume={41},
  number={42},
  pages={425302},
  year={2008},
}

\bib{CF}{article}{
  pages = {032113},
  author = {Chaves, Rafael},
  author = {Fritz, Tobias},
  month = {Mar},
  volume = {85},
  year = {2012},
  journal = {Phys. Rev. A},
  title = {Entropic approach to local realism and noncontextuality},
  issue = {3},
  publisher = {American Physical Society}
}

\bib{CG}{book}{
  title={An Introduction to the Standard Model of Particle Physics},
  author={Cottingham, W.N.},
  author = {Greenwood, D.A.},
  year={2007},
  publisher={Cambridge University Press}
}

\bib{CHSH}{article}{
  title = {Proposed Experiment to Test Local Hidden-Variable Theories},
  author = {Clauser, John F.},
 author = {Horne, Michael A.},
 author = {Shimony, Abner},
 author = {Holt, Richard A.},
  journal = {Phys. Rev. Lett.},
  volume = {23},
  number = {15},
  pages = {880--884},
  numpages = {4},
  year = {1969},
  month = {Oct},
  publisher = {American Physical Society}
}

\bib{Tsirelson}{article}{
    AUTHOR = {Cirel'son, Boris S.},
     TITLE = {Quantum generalizations of {B}ell's inequality},
   JOURNAL = {Lett. Math. Phys.},
    VOLUME = {4},
      YEAR = {1980},
    NUMBER = {2},
     PAGES = {93--100},
}

\bib{CL}{article}{
  title = {Time Asymmetry of Probabilities Versus Relativistic Causal Structure: An Arrow of Time},
  author = {Coecke, Bob},
 author = {Lal, Raymond},
  journal = {Phys. Rev. Lett.},
  volume = {108},
  issue = {20},
  pages = {200403},
  numpages = {5},
  year = {2012},
  month = {May},
  publisher = {American Physical Society}
}

\bib{Coecke}{article}{
author = {Coecke, Bob},
title = {Quantum picturalism},
journal = {Contemporary Physics},
volume = {51},
number = {1},
pages = {59-83},
year = {2010},
}

\bib{CR}{article}{
 author = {Colbeck, Roger},
 author = {Renner, Renato},
 title = {Free randomness can be amplified},
 journal = {Nature Physics},
 volume = {8},
 pages = {450--454},
 year = {2012},
}

\bib{CT}{book}{
    AUTHOR = {Cover, Thomas M.},
	author = {Thomas, Joy A.},
     TITLE = {Elements of information theory},
   EDITION = {Second edition},
 PUBLISHER = {Wiley-Interscience [John Wiley \& Sons]},
   ADDRESS = {Hoboken, NJ},
      YEAR = {2006},
     PAGES = {xxiv+748},
}

\bib{Eber}{incollection}{
 author = {Eberhardt, Frederick},
 title = {\href{http://plato.stanford.edu/entries/reichenbach/}{Hans Reichenbach}},
    booktitle = {Stanford Encyclopedia of Philosophy},
   year = {2008},
editor = {Zalta, Edward N.},
publisher = {Stanford University},
}

\bib{Ekert}{article}{
  title = {Quantum cryptography based on {B}ell's theorem},
  author = {Ekert, Artur K.},
  journal = {Phys. Rev. Lett.},
  volume = {67},
  issue = {6},
  pages = {661--663},
  year = {1991},
  month = {Aug},
  publisher = {American Physical Society}
}

\bib{EPR}{article}{
  title = {Can Quantum-Mechanical Description of Physical Reality Be Considered Complete?},
  author = {Einstein, Albert},
 author = {Podolsky, Boris},
 author = {Rosen, Nathan},
  journal = {Phys. Rev.},
  volume = {47},
  issue = {10},
  pages = {777--780},
  year = {1935},
  month = {May},
  publisher = {American Physical Society}
}

\bib{Fine}{article}{
  title = {Hidden Variables, Joint Probability, and the {B}ell Inequalities},
  author = {Fine, Arthur },
  journal = {Phys. Rev. Lett.},
  volume = {48},
  number = {5},
  pages = {291--295},
  year = {1982},
  month = {Feb},
  publisher = {American Physical Society}
}

\bib{FS}{misc}{
 author = {Fritz, Tobias},
 author = {Spekkens, Robert W.},
 note = {Work in progress}
 year = {2012},
}

\bib{GHZ}{article}{
author = {Greenberger, Daniel M.},
author = {Horne, Michael A.},
author = {Zeilinger, Anton},
title = {Bell's Theorem without inequalities},
publisher = {AAPT},
year = {1990},
journal = {American Journal of Physics},
volume = {58},
number = {12},
pages = {1131-1143},
}

\bib{Haag}{book}{
    AUTHOR = {Haag, Rudolf},
     TITLE = {Local quantum physics},
    SERIES = {Texts and Monographs in Physics},
   EDITION = {Second},
 PUBLISHER = {Springer-Verlag},
   ADDRESS = {Berlin},
      YEAR = {1996},
     PAGES = {xvi+390},
}

\bib{Hall}{article}{
  title = {Local Deterministic Model of Singlet State Correlations Based on Relaxing Measurement Independence},
  author = {Hall, Michael J. W.},
  journal = {Phys. Rev. Lett.},
  volume = {105},
  issue = {25},
  pages = {250404},
  numpages = {4},
  year = {2010},
  month = {Dec},
}

\bib{Halmos}{book}{
    AUTHOR = {Halmos, Paul R.},
     TITLE = {Measure {T}heory},
 PUBLISHER = {D. Van Nostrand Company, Inc., New York, N. Y.},
      YEAR = {1950},
     PAGES = {xi+304},
}

\bib{Hardy}{article}{
  title = {Nonlocality for two particles without inequalities for almost all entangled states},
  author = {Hardy, Lucien},
  journal = {Phys. Rev. Lett.},
  volume = {71},
  issue = {11},
  pages = {1665--1668},
  year = {1993},
  month = {Sep},
  publisher = {American Physical Society}
}

\bib{tHooft}{misc}{
author = {'t Hooft, Gerard},
title = {On The Free-Will Postulate in Quantum Mechanics},
note = {\href{http://arxiv.org/abs/quant-ph/0701097}{quant-ph/0701097}},
year = {2007},
}

\bib{HSW}{article}{
 author = {Holevo, A.~S.},
 author = {Shirokov, M.~E.},
 author = {Werner, R.~F.},
 title = {Separability and Entanglement-Breaking in Infinite Dimensions},
 year = {2005},
 journal = {Russian Math. Surveys},
 volume = {60},
 issue = {2},
}

\bib{KF}{book}{
    AUTHOR = {Koller, Daphne},
 author = {Friedman, Nir},
     TITLE = {Probabilistic graphical models},
    SERIES = {Adaptive Computation and Machine Learning},
      NOTE = {Principles and techniques},
 PUBLISHER = {MIT Press},
   ADDRESS = {Cambridge, MA},
      YEAR = {2009},
     PAGES = {xxxvi+1231},
}

\bib{Jackiw}{misc}{
author = {Jackiw, Roman},
title = {The Unreasonable Effectiveness of Quantum Field Theory},
year = {1996},
note = {\href{http://arxiv.org/abs/hep-th/9602122}{arXiv:hep-th/9602122}},
}

\bib{LA}{article}{
    AUTHOR = {L{\"o}hr, Wolfgang},
 author = {Ay, Nihat},
     TITLE = {On the generative nature of prediction},
   JOURNAL = {Adv. Complex Syst.},
  FJOURNAL = {Advances in Complex Systems. A Multidisciplinary Journal},
    VOLUME = {12},
      YEAR = {2009},
    NUMBER = {2},
     PAGES = {169--194},
}

\bib{McKenna}{incollection}{
    author = {McKenna, Michael},
    title = {\href{http://plato.stanford.edu/entries/compatibilism/}{Compatibilism}},
    booktitle = {Stanford Encyclopedia of Philosophy},
   year = {2004/2009},
	editor = {Zalta, Edward N.},
publisher = {Stanford University},
}

\bib{PAal}{article}{
 author = {Pironio, S.},
 author = {Ac{\'i}n, A.},
 author = {Massar, S.},
 author = {Boyer de la Giroday, A.},
 author = {Matsukevich, D.~N.},
 author = {Maunz, P.},
 author = {Olmschenk, S.},
 author = {Hayes, D.},
 author = {Luo, L.},
 author = {Manning, T.~A.},
 author = {Monroe, C.},
 title = {Random numbers certified by {B}ell’s theorem},
 journal = {Nature},
 volume = {464},
 pages = {1021},
 year = {2010},
 note = {\href{http://arxiv.org/abs/0911.3427}{arXiv:0911.3427}},
}

\bib{PBR}{misc}{
 author = {Pusey, Matthew F.},
 author = {Barrett, Jonathan},
 author = {Rudolph, Terry},
 title = {On the reality of the quantum state},
 note = {\href{http://arxiv.org/abs/1111.3328}{arXiv:1111.3328}},
 year = {2011},
}

\bib{Pearl}{book}{
    AUTHOR = {Pearl, Judea},
     TITLE = {Causality},
   EDITION = {Second},
      NOTE = {Models, reasoning, and inference},
 PUBLISHER = {Cambridge University Press},
   ADDRESS = {Cambridge},
      YEAR = {2009},
     PAGES = {xx+464},
}

\bib{Phelps}{book}{
    AUTHOR = {Phelps, Robert R.},
     TITLE = {Lectures on {C}hoquet's theorem},
    SERIES = {Lecture Notes in Mathematics},
    VOLUME = {1757},
   EDITION = {Second},
 PUBLISHER = {Springer-Verlag},
   ADDRESS = {Berlin},
      YEAR = {2001},
     PAGES = {viii+124},
}

\bib{Pop}{article}{
  pages = {2619--2622},
  author = {Popescu, Sandu},
  month = {Apr},
  volume = {74},
  year = {1995},
  journal = {Phys. Rev. Lett.},
  title = {Bell's Inequalities and Density Matrices: Revealing ``Hidden'' Nonlocality},
  issue = {14},
  publisher = {American Physical Society}
}

\bib{PR}{article}{
author={Popescu, Sandu},
author = {Rohrlich, Daniel},
title={Quantum nonlocality as an axiom},
journal={Foundations of Physics},
year={1994},
publisher={Springer},
volume={24},
number={3},
pages={379--385},
}

\bib{PR}{article}{
author={Popescu, Sandu},
author = {Rohrlich, Daniel},
title={Quantum nonlocality as an axiom},
journal={Foundations of Physics},
year={1994},
publisher={Springer},
volume={24},
number={3},
pages={379--385},
}

\bib{PV}{article}{
  pages = {022116},
  author = {P{\'a}l, K{\'a}roly F.},
  author = {V{\'e}rtesi, Tam{\'a}s},
  month = {Aug},
  numpages = {8},
  volume = {82},
  year = {2010},
  journal = {Phys. Rev. A},
  title = {Maximal violation of a bipartite three-setting, two-outcome {B}ell inequality using infinite-dimensional quantum systems},
  issue = {2},
  publisher = {American Physical Society}
}

\bib{Reichenbach}{book}{
  title={The Direction of Time},
  author={Reichenbach, Hans},
author = {Reichenbach, Maria},
  series={Philosophy (University of California, Los {A}ngeles)},
  year={1956},
  publisher={University of California Press}
}

\bib{SA}{misc}{
  title = {Information-theoretic inference of common ancestors},
  author = {Steudel, Bastian},
 author = {Ay, Nihat},
  note = {\href{http://arxiv.org/abs/1010.5720}{arXiv:1010.5720}},
  year = {2010}
}

\bib{Shimony}{incollection}{
    author = {Shimony, Abner},
    booktitle = {Stanford Encyclopedia of Philosophy},
    title = {\href{http://plato.stanford.edu/entries/bell-theorem/}{Bell's Theorem}},
   year = {2004/2009},
	editor = {Zalta, Edward N.},
publisher = {Stanford University},
}

\bib{Tarski}{book}{
    AUTHOR = {Tarski, Alfred},
     TITLE = {A decision method for elementary algebra and geometry},
      NOTE = {2nd ed},
 PUBLISHER = {University of California Press},
   ADDRESS = {Berkeley and Los Angeles, Calif.},
      YEAR = {1951},
     PAGES = {iii+63},
}

\bib{Zeh}{misc}{
 author = {Zeh, H.~Dieter},
title = {Quantum nonlocality vs.~{E}instein locality},
note = {\href{http://www.rzuser.uni-heidelberg.de/~as3/nonlocality.html}{http://www.rzuser.uni-heidelberg.de/{{\raise.17ex\hbox{$\scriptstyle\sim$}}}as3/nonlocality.html}},
year = {2006},
}

\end{biblist}
\end{bibdiv}

\end{document}